\providecommand{\tabularnewline}{\\}
\theoremstyle{plain}
\newtheorem{assumption}{\protect\assumptionname}
\theoremstyle{plain}
\newtheorem{lem}{\protect\lemmaname}
\theoremstyle{plain}
\newtheorem{prop}{\protect\propositionname}
\theoremstyle{plain}
\newtheorem{cor}{\protect\corollaryname}
\date{}
\renewcommand{\section}{\@startsection {section}{1}{\z@}%
                                   {-3.5ex \@plus -1ex \@minus -.2ex}%
                                   {2.3ex \@plus.2ex}%
                                   {\reset@font\large\bfseries}}
\renewcommand{\subsection}{\@startsection{subsection}{2}{\z@}%
                                     {-3.25ex\@plus -1ex \@minus -.2ex}%
                                     {1.5ex \@plus .2ex}%
                                     {\reset@font\itshape}}
\renewcommand{\subsubsection}{\@startsection{subsubsection}{3}{\z@}%
                                     {-3.25ex\@plus -1ex \@minus -.2ex}%
                                     {-1em}%
                                     {\reset@font\normalsize\itshape}}
\providecommand{\assumptionname}{Assumption}
\providecommand{\corollaryname}{Corollary}
\providecommand{\lemmaname}{Lemma}
\providecommand{\propositionname}{Proposition}
\begin{document}
\title{Instrumental variables with unordered treatments: Theory and evidence from returns to fields of study\thanks{We thank Arnstein Vestre for excellent research assistance. Edwin Leuven recognizes support from the Norwegian Research Council, project no. 275906.}}
\author{Eskil Heinesen\thanks{Rockwool Foundation Research Unit. Email: \protect\href{mailto:esh@rff.dk}{esh@rff.dk}} \and Christian Hvid\thanks{Email: \protect\href{mailto:christian@hvids.eu}{christian@hvids.eu}} \and Lars Kirkeb?en\thanks{Statistics Norway. Email: \protect\href{mailto:kir@ssb.no}{kir@ssb.no}} \and Edwin Leuven\thanks{Department of Economics, University of Oslo; Statistics Norway; CESifo; IZA.  Email: \protect\href{mailto:edwin.leuven@econ.uio.no}{edwin.leuven@econ.uio.no}.} \and Magne Mogstad\thanks{University of Chicago, Department of Economics; Statistics Norway; NBER. Email: \protect\href{mailto:magne.mogstad@gmail.com\%20}{magne.mogstad@gmail.com }}}

\maketitle
\global\long\def\E{\mathbb{E}}%
\global\long\def\d#1{\mathbbm{1}_{[#1]}}%

\begin{abstract}
\begin{singlespace}
\noindent We revisit the identification argument of \citet{kirkeboen_field_2016} who showed how one may combine instruments for multiple unordered treatments with information about individuals\textquoteright{} ranking of these treatments to achieve identification while allowing for both observed and unobserved heterogeneity in treatment effects. We show that the key assumptions underlying their identification argument have testable implications. We also provide a new characterization of the bias that may arise if these assumptions are violated. Taken together, these results allow researchers not only to test the underlying assumptions, but also to argue whether the bias from violation of these assumptions are likely to be economically meaningful. Guided and motivated by these results, we estimate and compare the earnings payoffs to post-secondary fields of study in Norway and Denmark. In each country, we apply the identification argument of \citet{kirkeboen_field_2016} to data on individuals' ranking of fields of study and field-specific instruments from discontinuities in the admission systems. We empirically examine whether and why the payoffs to fields of study differ across the two countries. We find strong cross-country correlation in the payoffs to fields of study, especially after removing fields with violations of the assumptions underlying the identification argument.
\end{singlespace}
\end{abstract}
\vfill{}

\thispagestyle{empty}

\pagebreak{}

\setcounter{page}{1}
\begin{onehalfspace}

\section{Introduction}
\end{onehalfspace}

\noindent Instrumental variables (IV) estimation of treatment effects is challenging if there are multiple unordered treatments. Not only does identification require (at least) one instrument per alternative, but it is also necessary to deal with the issue that individuals who choose the same treatment may have different next-best treatments. One way to resolve this issue is to assume homogeneous treatment effects. If effects are heterogeneous across individuals (conditional on observable characteristics), then standard 2SLS does not identify the payoff to any individual or group of the population from choosing one treatment instead of another.\footnote{A number of studies in diverse fields report evidence of unobserved heterogeneity in causal effects (see, for example, the review article by \citet{mogstad2018identification}).}

We revisit the identification argument of \citet{kirkeboen_field_2016} who showed how one may combine instruments for multiple unordered treatments with information about individuals\textquoteright{} ranking of these treatment to achieve identification while allowing for both observed and unobserved heterogeneity in treatment effects.\footnote{\citet{kirkeboen_field_2016} contributes to a larger literature on identification of treatment effects in unordered choice models. \citet{heckman2006understanding} and \citet{heckman2010comparing} discuss the challenges associated with the identification and interpretation of treatment effects in such models. See also the recent work by \citet{kamat2017identification} and \citet{lee2020filtered}.} We show that the key assumptions underlying their identification argument have testable implications. We also provide a new characterization of the bias that may arise if these assumptions are violated.\footnote{Throughout the paper, we use the term \emph{bias} to describe the difference between two population quantities, namely the IV estimand and the parameter of interest, that is the positively weighted average of treatment effects for some complier group.} Taken together, these results allow researchers not only to test the underlying assumptions, but also to argue whether the bias from violation of these assumptions are likely to be economically meaningful. Guided and motivated by these results, we estimate and compare the earnings payoffs to post-secondary fields of study in Norway and Denmark.\footnote{There is a growing body of work on the payoffs to field of study or college major, reviewed in \citet{altonji2012heterogeneity,altonji2015analysis}, and \citet{kirkeboen_field_2016}. The latter study also reports IV estimates of the payoffs to fields of study from Norway. Thus, our empirical contribution is the new payoff estimates from Denmark, the examination of the IV assumptions, and the comparison of payoffs to fields of study between Norway and Denmark.} In each country, we apply the identification argument of \citet{kirkeboen_field_2016} to data on individuals' ranking of fields of study and field-specific instruments from discontinuities in the admission systems. We then empirically examine the extent to which and why the payoffs to fields of study differ across the two countries.

In Section 2, we begin by briefly reviewing IV in settings with multiple unordered treatments, laying the groundwork for our analysis. As in the analysis of binary treatments in \citet{imbens_identification_1994}, we allow for heterogeneous effects and assume that each instrument is exogenous and satisfies a monotonicity condition. Our point of departure is the key result in \citet{kirkeboen_field_2016}: IV can then be used to identify local average treatment effects (LATEs) of unordered treatments under the additional assumptions that the analyst observes individuals' next-best alternatives and an irrelevance condition on preferences.

The next two sections of the paper examine whether the additional assumptions of \citet{kirkeboen_field_2016} have testable implications and the bias that may arise if they are violated. To do so, it is useful to stratify the population into a set of instrument-dependent groups, sometimes referred to as principal strata. These groups are defined by the manner in which members of the population react to the instruments. In addition to the usual compliers, always takers, and never takers of \citet{imbens_identification_1994}, there are two so-called defier groups (both of which are distinct from the usual defier group that exists if the usual monotonicity condition fails). The first is the next-best defiers. In the context of our empirical application, this group consists of individuals who would choose their preferred field if above the admission cutoff, but otherwise choose fields other than the stated next-best alternative. The others are the irrelevance-defiers. In our context, the irrelevance assumption means that if crossing the admission cutoff to a given field does not make an individual choose that field, it should not affect her choice of other fields either.

In Section 3, we use this stratification of the population to characterize the bias in the IV estimands that may arise in the presence of next-best defiers, or irrelevance defiers, or both. It is useful to observe that the bias due to each type of defier has a product structure: It depends on the number of defiers compared to compliers, multiplied by the difference between compliers and defiers in the average payoff to choosing one type of education compared to another. Thus, there will be zero bias if there either are no defiers or if the average payoff to choosing one type of education compared to another is the same for defiers and compliers. Furthermore, the bias becomes large only if there are many defiers relative to compliers \emph{and} there are large differences in the payoff between compliers and defiers.

In Section 4, we show that the shares of next-best and irrelevance defiers can be bounded, but not point identified. We derive sharp bounds -- which are nontrivial -- and, thus, provides testable implications of the additional assumptions of \citet{kirkeboen_field_2016}. We show that these results have implications for the recent work of \citet{nibbering_clustered_2022} who propose an algorithm which aggregate fields into clusters based on estimated first-stage coefficients. The motivation for their approach is to avoid bias from irrelevance and next-best defiers. We show that their approach requires point identification of the shares of next-best and irrelevance defiers, and that it may produce biased estimates even if effects are constant across individuals (in contrast to standard 2SLS).

The last three sections of the paper take the theoretical results discussed above to the data by comparing payoff estimates for two countries, Denmark and Norway. These are two geographically and culturally close open-economies with very comparable educational institutions as well as similar tax, welfare and social benefits systems. It seems therefore natural to expect that payoffs will -- at least to a degree -- be aligned, and that differences can potentially be understood in light of violations of the assumptions underlying approach outlined above and detailed below.

In Section 5, we present the institutional background and data sources in Denmark and Norway. This section highlights the common institutional framework and data sources, documents how educational classifications and outcomes are harmonized across countries, and discusses differences that may be consequential for the analysis and results. In Section 6, we present the empirical specification that generates the payoff estimates for the two countries, following closely \citet{kirkeboen_field_2016}. Two challenges that must be met when comparing estimates from two different populations are the reference population and measurement error. Section 6 therefore also defines the population of compliers that we use to anchor the estimates, and presents an error-in-variables approach that addresses bias arising from measurement error when comparing the noisy payoff estimates.

In Section 7, we present the estimation results. We first turn to the first-stages and, building on the results from Section 4, document that in both countries the violations of irrelevance or next-best are non-trivial and appear to be of similar magnitude but of a different nature. In Norway, there is clear evidence of violations of next-best, but little if any sign of violations of irrelevance; in Denmark, the two types of violations seem to be equally frequent. The accompanying second-stages (with earnings measured eight years after application) show that, on average, the estimated annual payoff to completing a field-of-study instead of the next-best is about 2,200 USD in Denmark, while in Norway the payoff estimates are substantially larger and around 22,000 USD. The payoffs in Norway also exhibit a higher variance than in Denmark, and overall we strongly reject that the payoffs are the same. Despite these differences in levels and variation the payoffs significantly co-vary, and we estimate a correlation coefficient of 0.65 for our population of interest.

This correlation substantially increases  when we exclude the estimates with the most violations of the irrelevance and next-best assumptions. However, violations of irrelevance and next-best do not appear to explain the lower level and variation of the payoffs in Denmark compared to Norway. Additional exploratory analyses show that these across country differences are mostly driven by heterogeneity in next-best fields, and can partly be explained by differences in selectivity (as measured by students' high school test scores).

\section{IV with multiple unordered treatments\label{sec:IV-in-unordered}}

\subsection{Models and assumptions\label{sec:assumptions-notation}}

We assume individuals choose between three mutually exclusive and collectively exhaustive alternatives $d\in\{0,1,2\}$. To fix ideas we envision these as enrolling in three different fields of study. We suppress the individual index and abstract from control variables. We want to interpret IV estimates of the equation
\begin{equation}
y=\beta_{0}+\beta_{1}d_{1}+\beta_{2}d_{2}+\varepsilon\label{eq:equation-of-interest}
\end{equation}
where $y$ is an observed outcome such as earnings, and $d_{j}\equiv\d{d=j}$ is a treatment indicator. Without loss of generality we choose field 0 as reference field, so that $\beta_{1}^{IV}$ ($\beta_{2}^{IV}$) is the payoff from choosing field 1 (2) over field 0.

We suppose individuals are randomly assigned to one of three mutually exclusive and collectively exhaustive groups $Z\in\{0,1,2\}$ and let $z_{j}=\d{Z=j}$ be an indicator variable that equals 1 if an individual is assigned to group $j$ and 0 otherwise. The indicator $z_{j}$ can be thought of as an instrument shifting the costs or benefits of choosing field $j$. For each individual, this gives three potential field choices $d^{z}$ and nine potential outcomes $y^{d,z}$. We let $\mathbf{d}$ denote the column vector of treatment indicators and $\mathbf{z}$ the column vector of instruments. We define $d_{j}^{z}\equiv\d{d^{z}=j}$ to be an indicator variable that tells us whether an individual would choose field $j$ for a given value of $Z$.

As in the analysis of binary treatments in \citet{imbens_identification_1994}, we allow for heterogeneous effects and assume that each instrument satisfies the following assumptions:
\begin{assumption}
IV Assumptions\label{ass:iv-ass}

\begin{enumerate}[label={\emph{(\alph*)}}, ref={1(\alph*)}]

\item \label{ass:excl}\textbf{\emph{Exclusion:}} $y^{d,z}=y^{d}$ for all $d,z$

\item \label{ass:indep}\textbf{\emph{Independence:}} $y^{d},d^{z}\perp Z$ for all $d,z$

\item \label{ass:rank}\textbf{\emph{Rank:}} $E[\mathbf{z}\mathbf{d}^{\top}]$ has full rank

\item \label{ass:mono}\textbf{\emph{Monotonicity:}} $d_{k}^{k}\geq d_{k}^{k'}$ for each field assignment pair $k,k'$

\end{enumerate}
\end{assumption}
Given our notation and assumptions, we can link the observed and potential outcomes and choices as follows,
\begin{align}
y & =y^{0}d_{0}+y^{1}d_{1}+y^{2}d_{2}\label{eq:outcome-eq}\\
d_{j} & =d_{j}^{0}z_{0}+d_{j}^{1}z_{1}+d_{j}^{2}z_{2}\qquad\text{for }j=0,1,2\label{eq:choice-eq-0}
\end{align}

\noindent These equations represent a model with multiple unordered treatment that permits unrestricted unobserved heterogeneity in treatment effects. Extending the model (and our theoretical results) to more than three choice alternatives is straightforward.

\subsection{Principal strata}

In Table \ref{tab:taxonomy} we invoke assumptions \ref{ass:excl}--\ref{ass:mono} and characterize the principal strata, that is, the  groups of individuals defined by how their potential field choices depend on the instrument. The table considers the (sub)population with field 0 as the stated next-best alternative. For brevity, it does not include always takers of field 1 (2) (those who chose field 1 (2) irrespective of instrument value) and never takers of field 1 (2) (those who choose field 2 and 0 (1 and 0) irrespective of instrument value).

As shown in the table, there are two types of compliers, $C_{1}$ and $C_{2}$. The $C_{1}$ ($C_{2}$) compliers are individuals who choose field 1 (2) when the instrument takes value 1 (2), and the reference field 0 when the instrument takes the value 0. In addition, there are four types of defiers, irrelevance and next-best defiers of instruments 1 and 2. Irrelevance defiers $ID_{1}$ ($ID_{2}$) are individuals who choose field 2 (1) when the instrument takes value 1 (2) while choosing field 0 if the instrument takes value 0. Next-best defiers $ND_{1}$ ($ND_{2}$) are individuals who choose field 2 (1) when the instrument takes value 0 while choosing field 1 (2) if the instrument takes value 1 (2).

\begin{table}
\caption{Taxonomy of complier and defier groups with field 0 as the stated next-best alternative\label{tab:taxonomy}}
{\small{}}%
\begin{tabular*}{1\textwidth}{@{\extracolsep{\fill}}lcccccccc}
\toprule
 &  &  &  &  & \multicolumn{3}{c}{\textbf{\small{}Potential}} & \tabularnewline
\multicolumn{1}{l}{\textbf{\small{}Group}} &  & \multicolumn{3}{c}{} & \multicolumn{3}{c}{\textbf{\small{}Field Choice}} & \textbf{\small{}Characteristics}\tabularnewline
\cmidrule{6-8} \cmidrule{7-8} \cmidrule{8-8}
 &  &  &  &  & {\small{}$d^{0}$} & {\small{}$d^{1}$} & {\small{}$d^{2}$} & \tabularnewline
\midrule
{\small{}Instrument 1} &  &  &  &  &  &  &  & \tabularnewline
{\small{}- Compliers} & {\small{}$C_{1}$} &  &  &  & {\small{}0} & {\small{}1} &  & {\small{}$d_{1}^{1}-d_{1}^{0}=1\enskip\land\enskip d_{2}^{1}=d_{2}^{0}=\enskip0$}\tabularnewline
{\small{}- Irrelevance Defiers} & {\small{}$ID_{1}$} &  &  &  & {\small{}0} & {\small{}2} &  & {\small{}$d_{1}^{1}=d_{1}^{0}=0\enskip\land\enskip d_{2}^{1}-d_{2}^{0}=\enskip1$}\tabularnewline
{\small{}- Next-best Defiers} & {\small{}$ND_{1}$} &  &  &  & {\small{}2} & {\small{}1} &  & {\small{}$d_{1}^{1}-d_{1}^{0}=1\enskip\land\enskip d_{2}^{1}-d_{2}^{0}=-1$}\tabularnewline
{\small{}Instrument 2} &  &  &  &  &  &  &  & \tabularnewline
{\small{}- Compliers} & {\small{}$C_{2}$} &  &  &  & {\small{}0} &  & {\small{}2} & {\small{}$d_{2}^{2}-d_{2}^{0}=1\enskip\land\enskip d_{1}^{2}=d_{1}^{0}=\enskip0$}\tabularnewline
{\small{}- Irrelevance Defiers} & {\small{}$ID_{2}$} &  &  &  & {\small{}0} &  & {\small{}1} & {\small{}$d_{2}^{2}=d_{2}^{0}=0\enskip\land\enskip d_{1}^{2}-d_{1}^{0}=\enskip1$}\tabularnewline
{\small{}- Next-best Defiers} & {\small{}$ND_{2}$} &  &  &  & {\small{}1} &  & {\small{}2} & {\small{}$d_{2}^{2}-d_{2}^{0}=1\enskip\land\enskip d_{1}^{2}-d_{1}^{0}=-1$}\tabularnewline
\bottomrule
\end{tabular*}{\small\par}

\begin{singlespace}
\textbf{\footnotesize{}Note:}{\footnotesize{} The table characterizes compliers, irrelevance defiers and next-best defiers based on their potential treatments.}{\footnotesize\par}
\end{singlespace}
\end{table}

\subsection{Identification result}

\citet{kirkeboen_field_2016} suggest the following assumptions on the groups in Table \ref{tab:taxonomy} to obtain identification:\footnote{\citet{kirkeboen_field_2016} are imprecise about whether auxiliary assumption \ref{ass:nextbest} is imposed on everyone or only those individuals whose treatment status depends on the instrument. However, this is immaterial for their results, as well as ours. The reason is that always takers and never takers drop out of the IV estimand because their treatment status does not change with the instrument.}
\begin{assumption}
Auxiliary Assumptions\label{ass:kirkeboen}

\begin{enumerate}[label={\emph{(\alph*)}}, ref={2(\alph*)}]

\item \label{ass:irr} \textbf{\emph{Irrelevance:}} $d_{k}^{k}-d_{k}^{0}=0\implies d_{k'}^{k}=d_{k'}^{0}$ for all pairs $k,k'$

\item \label{ass:nextbest} \textbf{\emph{Next-best:}} We are able to condition on $d_{1}^{0}=d_{2}^{0}=0$ i.e. $d_{0}^{0}=1$.

\end{enumerate}
\end{assumption}
The irrelevance condition assumes that if changing z from 0 to 1 (2) does not induce an individual to choose treatment 1 (2), then it does not make her choose treatment 2 (1) either. In our context, for example, this assumption means that if crossing the admission cutoff to field 1 does not make an individual choose field 1, it does not make her choose field 2 either. The next-best alternative condition is effectively assuming that individuals' stated next-best alternative is their actual next-best alternative. The following lemma is immediate from these two assumptions:
\begin{lem}
\label{th:kirkeboen-lemma} Suppose Assumptions \ref{ass:iv-ass}--\ref{ass:kirkeboen} hold. Then $\beta_{1}^{IV},\beta_{2}^{IV}$ have a causal interpretation as positively weighted averages of treatment effects for compliers, and
\begin{align*}
\beta_{1}^{IV} & =\E[y^{1}-y^{0}\mid C_{1}]\\
\beta_{2}^{IV} & =\E[y^{2}-y^{0}\mid C_{2}]
\end{align*}
\end{lem}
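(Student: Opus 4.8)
The plan is to use that the estimating equation is exactly identified --- the two included instruments $z_{1},z_{2}$ match the two endogenous regressors $d_{1},d_{2}$ --- so $(\beta_{1}^{IV},\beta_{2}^{IV})$ is the unique solution of the $2\times2$ linear system that equates the reduced-form coefficients to the first-stage coefficients times $\beta$. Since $(1,z_{1},z_{2})$ spans the indicators of $\{Z=0,1,2\}$, this system is the pair of Wald-type equations
\[
\E[y\mid Z=k]-\E[y\mid Z=0]=\beta_{1}^{IV}\,\Delta_{k}d_{1}+\beta_{2}^{IV}\,\Delta_{k}d_{2},\qquad k=1,2,
\]
where $\Delta_{k}d_{j}\equiv\E[d_{j}\mid Z=k]-\E[d_{j}\mid Z=0]$. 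Everything is carried out in the subpopulation $\{d^{0}=0\}$ on which Assumption \ref{ass:nextbest} lets us condition; by Independence \ref{ass:indep}, $Z$ is independent of the potential choices, in particular $Z\perp d^{0}$, so conditioning on this event preserves the independence that the IV algebra needs.

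Next I would rewrite both sides in potential-outcome terms. Combining the outcome identity \eqref{eq:outcome-eq} with Exclusion \ref{ass:excl} and Independence \ref{ass:indep} gives $\E[y\mid Z=k]=\E[y^{0}d_{0}^{k}+y^{1}d_{1}^{k}+y^{2}d_{2}^{k}]$, and \eqref{eq:choice-eq-0} gives $\E[d_{j}\mid Z=k]=\E[d_{j}^{k}]$. Hence the left-hand side of the $k$-th Wald equation equals $\E\bigl[y^{0}(d_{0}^{k}-d_{0}^{0})+y^{1}(d_{1}^{k}-d_{1}^{0})+y^{2}(d_{2}^{k}-d_{2}^{0})\bigr]$ and $\Delta_{k}d_{j}=\E[d_{j}^{k}-d_{j}^{0}]$, so both sides are governed entirely by the strata with $d^{k}\neq d^{0}$.

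The heart of the argument is to pin down those strata under the conditioning. With $d^{0}=0$ we have $d_{1}^{0}=d_{2}^{0}=0$ and $d_{0}^{0}=1$, so both next-best defier groups and every always-/never-taker of a field other than $0$ are absent (they require $d^{0}\in\{1,2\}$; cf.\ Table \ref{tab:taxonomy}, valid under Monotonicity \ref{ass:mono}). For instrument $k$, if $d_{k}^{k}=d_{k}^{0}=0$ then Irrelevance \ref{ass:irr} forces $d_{k'}^{k}=d_{k'}^{0}$ for $k'\neq k$, hence $d^{k}=d^{0}=0$; this removes the irrelevance-defier stratum $ID_{k}$ (which has $d_{k'}^{k}\neq d_{k'}^{0}$) and identifies the individual as a never-taker of field $k$. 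The only remaining possibility, $d_{k}^{k}=1$, is exactly $d^{k}=k$, i.e.\ a complier $C_{k}$. Thus, with respect to instrument $k$, the subpopulation splits into $C_{k}$ and never-takers of field $k$, so
\[
d_{k}^{k}-d_{k}^{0}=\d{C_{k}},\qquad d_{0}^{k}-d_{0}^{0}=-\d{C_{k}},\qquad d_{k'}^{k}-d_{k'}^{0}=0\ \ (k'\neq k).
\]
Substituting, the first-stage coefficient matrix is diagonal with entries $P(C_{1}),P(C_{2})$, while $\E[y\mid Z=k]-\E[y\mid Z=0]=\E[(y^{k}-y^{0})\d{C_{k}}]=P(C_{k})\,\E[y^{k}-y^{0}\mid C_{k}]$. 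Rank \ref{ass:rank} forces $P(C_{k})>0$, so the now-decoupled system yields $\beta_{k}^{IV}=\E[y^{k}-y^{0}\mid C_{k}]$. The positive-weights statement is then immediate, since $\E[y^{k}-y^{0}\mid C_{k}]$ is the equally- (hence nonnegatively-) weighted average of the individual gains $y^{k}-y^{0}$ over $C_{k}$.

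I expect the only real obstacle to be the stratum bookkeeping in the third step: verifying that conditioning on $d^{0}=0$ together with Irrelevance \ref{ass:irr} leaves precisely the complier/never-taker dichotomy for each instrument --- in particular that it is irrelevance, not monotonicity alone, that eliminates $ID_{k}$ --- and checking that the off-diagonal first-stage terms $\Delta_{1}d_{2}$ and $\Delta_{2}d_{1}$ vanish so that the $2\times2$ system decouples. The remaining steps are routine just-identified IV algebra.
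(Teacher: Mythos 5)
Your proof is correct and follows essentially the same route as the paper (which defers this lemma to Kirkeb{\o}en et al.\ and whose Appendix A runs the identical just-identified moment-condition algebra over principal strata for the more general propositions): the only cosmetic difference is that you impose the next-best and irrelevance assumptions \emph{before} solving, so the $2\times2$ Wald system decouples into two diagonal equations, whereas the paper solves the coupled system with all defier strata present and obtains the lemma by setting $P(ND_{k})=P(ID_{k})=0$. Your stratum bookkeeping under $d^{0}=0$ plus irrelevance (only $C_{k}$ and never-takers of field $k$ survive, so the off-diagonal first-stage terms vanish) is exactly right.
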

\begin{proof}
For a proof, see \citet{kirkeboen_field_2016}.
\end{proof}
The core of Lemma \ref{th:kirkeboen-lemma} is that the IV estimand of $\beta_{1}$ ($\beta_{2}$) can be given an interpretation as a local average treatment effect (LATE) of an instrument-induced shift from field 0 to field 1 (2) for compliers when irrelevance and next-best defiers are assumed away.

\section{Interpretation of IV estimands if auxiliary assumptions fail\label{sec:derivation-bias}}

If Assumptions \ref{ass:irr}--\ref{ass:nextbest} do not hold, the IV estimand of $\beta_{1}$ ($\beta_{2}$) does not have a causal interpretation as a positively weighted average of treatment effects of choosing field 1 (2) over field 0. In the following, we characterize the bias that will occur in this case, and discuss in which situations the bias will be large and small.

\subsection{Assuming only next-best\label{sec:only-next-best}}

The IV estimands of $\beta_{1}$ and $\beta_{2}$ can be decomposed into a LATE for compliers and a bias term using IV moment conditions. In particular, if only next-best holds, but not irrelevance, we get the following decomposition, as shown in Appendix \ref{appx:no-auxillary-ass}.
\begin{prop}
\label{th:theorem-only-next-best} Suppose Assumptions \ref{ass:excl}--\ref{ass:mono} and \ref{ass:nextbest} hold. Then $\beta_{1}^{IV},\beta_{2}^{IV}$ do not have a causal interpretation as positively weighted averages of treatment effects for compliers,
\begin{align}
\beta_{1}^{IV}=\underbrace{\E[y^{1}-y^{0}\mid C_{1}]}_{\substack{\emph{A}}
}\quad & +\quad\underbrace{\frac{P(ID_{1})P(ID_{2})}{W'}}_{\substack{\omega_{1}}
}\times\enskip(\underbrace{\E[y^{1}-y^{0}\mid C_{1}]-\E[y^{1}-y^{0}\mid ID_{2}]}_{\substack{\Delta_{1}}
})\label{eq:full-next-best}\\[1pt]
 & -\quad\underbrace{\frac{P(ID_{1})P(C_{2})}{W'}}_{\substack{\omega_{2}}
}\enskip\times\enskip(\underbrace{\E[y^{2}-y^{0}\mid C_{2}]-\E[y^{2}-y^{0}\mid ID_{1}]}_{\substack{\Delta_{2}}
})\nonumber
\end{align}
where $W'=P(C_{1})P(C_{2})-P(ID_{1})P(ID_{2})$ and the expression for $\beta_{2}^{IV}$ follows by symmetry. \emph{A} is the complier LATE, $\omega_{1}$ and $\omega_{2}$ are defier group weights, and $\Delta_{1}$ and $\Delta_{2}$ are differences in the causal effects between compliers and irrelevance defiers.
\end{prop}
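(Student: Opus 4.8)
The plan is to compute $\beta_1^{IV}$ directly from the IV moment conditions, evaluate the ingredients using the stratification of Table~\ref{tab:taxonomy}, and then rearrange; the formula for $\beta_2^{IV}$ will follow by symmetry. Since \eqref{eq:equation-of-interest} has regressors $(1,d_1,d_2)$ and there are the three instruments $(z_0,z_1,z_2)$, the estimand is just-identified and, by Assumption~\ref{ass:rank}, is the unique solution of $\E[z_k(y-\beta_0-\beta_1 d_1-\beta_2 d_2)]=0$ for $k=0,1,2$. By independence (Assumption~\ref{ass:indep}) these read $\E[y\mid Z{=}k]=\beta_0+\beta_1\E[d_1\mid Z{=}k]+\beta_2\E[d_2\mid Z{=}k]$; differencing the $k=1,2$ equations against $k=0$ eliminates $\beta_0$ and yields the $2\times2$ system
\[
\begin{pmatrix} RF_1 \\ RF_2 \end{pmatrix}=\Pi\begin{pmatrix}\beta_1^{IV}\\ \beta_2^{IV}\end{pmatrix},
\]
where $RF_k=\E[y\mid Z{=}k]-\E[y\mid Z{=}0]$ and $\Pi_{kj}=\E[d_j\mid Z{=}k]-\E[d_j\mid Z{=}0]$. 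It therefore remains to evaluate $\Pi$ and $(RF_1,RF_2)$ and to invert $\Pi$.

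To evaluate these I would invoke Table~\ref{tab:taxonomy} together with Assumptions~\ref{ass:mono} and~\ref{ass:nextbest}. Conditioning on $d_0^0=1$ eliminates the next-best defiers (they require $d^0\neq0$), and monotonicity ($d_1^1\ge d_1^2$, $d_2^2\ge d_2^1$) forces $d^1=2\Rightarrow d^2=2$ and $d^2=1\Rightarrow d^1=1$. Hence partitioning the population by the value of $d^1$ yields exactly never takers ($d^1=0$), $C_1=\{d^1=1\}$, $ID_1=\{d^1=2\}$, and likewise partitioning by $d^2$ yields never takers, $C_2=\{d^2=2\}$, $ID_2=\{d^2=1\}$. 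Since $d^0=0$ for everyone, $\E[d_j\mid Z{=}0]=0$ and
\[
\Pi=\begin{pmatrix}P(C_1) & P(ID_1)\\ P(ID_2) & P(C_2)\end{pmatrix},\qquad \det\Pi=P(C_1)P(C_2)-P(ID_1)P(ID_2)=W'.
\]
For the reduced form, exclusion (Assumption~\ref{ass:excl}) gives $\E[y\mid Z{=}k]=\E[y^{d^k}]$; applying the law of total expectation over the same partition (and $\E[y\mid Z{=}0]=\E[y^0]$) the never-taker terms cancel, leaving
\[
\begin{aligned}
RF_1&=P(C_1)\,\E[y^1-y^0\mid C_1]+P(ID_1)\,\E[y^2-y^0\mid ID_1],\\
RF_2&=P(C_2)\,\E[y^2-y^0\mid C_2]+P(ID_2)\,\E[y^1-y^0\mid ID_2].
\end{aligned}
\]

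Finally, inverting $\Pi$ (permissible since $W'=\det\Pi\neq0$ under Assumption~\ref{ass:rank}) gives $\beta_1^{IV}=\big(P(C_2)\,RF_1-P(ID_1)\,RF_2\big)/W'$; substituting the expressions above and collecting terms produces \eqref{eq:full-next-best}. The one nonroutine algebraic step is isolating the complier LATE $A=\E[y^1-y^0\mid C_1]$: its coefficient is $P(C_1)P(C_2)/W'$, which one rewrites as $1+P(ID_1)P(ID_2)/W'$, so the $1$ contributes term $A$ while the remainder combines with the $ID_2$ term to give $\omega_1\Delta_1$, and the two $ID_1$ terms combine to give $-\omega_2\Delta_2$. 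Symmetry (swap the labels $1\leftrightarrow2$ of fields and instruments) yields the analogous formula for $\beta_2^{IV}$; and since $\Delta_1,\Delta_2$ can have either sign, the representation need not be a positively weighted average of complier effects, as asserted. The main obstacle is not this algebra but the second step — verifying that, once $ND_1,ND_2$ are removed and monotonicity is imposed, $\Pi$ and $(RF_1,RF_2)$ genuinely collapse onto the four groups $C_1,ID_1,C_2,ID_2$. Here one should note that with two instruments an individual may lie in more than one of these groups, which is harmless because $\Pi$ depends only on the marginal laws of $d^1$ and $d^2$ and each reduced-form decomposition partitions by the value of a single $d^k$.
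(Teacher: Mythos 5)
Your proposal is correct and follows essentially the same route as the paper: both solve the two IV moment conditions (equivalently, the $2\times2$ reduced-form/first-stage system) after decomposing the first stage and reduced form over the principal strata of Table~\ref{tab:taxonomy}, and your algebra — including the key step of writing $P(C_1)P(C_2)/W'=1+P(ID_1)P(ID_2)/W'$ to isolate the complier LATE — reproduces \eqref{eq:full-next-best} exactly. The only difference is organizational: the paper derives the fully general decomposition without auxiliary assumptions and then sets $P(ND_1)=P(ND_2)=0$, whereas you exploit next-best from the outset so that $\E[d_j\mid Z{=}0]=0$ and the system can be inverted directly, which is a somewhat cleaner path to this particular proposition.
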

\begin{proof}
See appendix \ref{appx:no-auxillary-ass}.
\end{proof}
Imposing the constant effects assumption implies that the differences in the causal effects between defier groups ($\Delta_{1}$, $\Delta_{2}$) go to zero. In this case, $\beta_{1}^{IV}$ ($\beta_{2}^{IV}$) would recover the causal effect, $\E[y^{1}-y^{0}]$ ($\E[y^{2}-y^{0}]$). Imposing irrelevance implies that the defier weights ($\omega_{1}$, $\omega_{2}$) go to zero. In this case, $\beta_{1}^{IV}$ ($\beta_{2}^{IV}$) would recover the complier LATE, $\E[y^{1}-y^{0}\mid C_{1}]$ ($\E[y^{2}-y^{0}\mid C_{2}]$).

A central question for empirical researchers is when the bias in Proposition \ref{th:theorem-only-next-best} is likely to be large. To answer this question, it is useful to observe that the two bias terms in equation \ref{eq:full-next-best} are the products of a difference in causal effects and a defier weight consisting of the product of the propensities of irrelevance defiers divided by the difference between complier and defier propensity products.

Note that as long as $P(C_{1})P(C_{2})>2\times P(ID_{1})P(ID_{2})$ the weight $\omega_{1}$ is below 1. This will occur when there are many compliers relative to defiers. When the weight is below 1, the bias will always be smaller than the difference in causal effects. Due to the product structure ($\omega_{j}\times\Delta_{j}$) the bias due to violations of the irrelevance assumption will be very small when both $\omega_{j}$ and $\Delta_{j}$ are small. Conversely, in order for a large bias to occur, there needs to be both many defiers relative to compliers and a large difference in causal effects between the compliers and the irrelevance defiers.

We illustrate this with two examples. In both examples, we fix the LATE for compliers at \$1000. We focus on the first instrument, fixing the propensities of compliers and irrelevance defiers of instrument 2 to $P(ID_{2})=0.2$ and $P(C_{2})=0.8$, and, for simplicity, assume no always takers or never takers for any of the instruments, such that $P(C_{1})=1-P(ID_{1})$.

In Figure \ref{subfig:bias-differing-pid1-nb} we show how the bias from the first term varies with the propensity of irrelevance defiers. We let the difference in causal effects between compliers and instrument 2-defiers be fixed at three different levels: 10\%, 20\% and 50\% of the complier LATE. In Figure \ref{subfig:bias-differing-diff-eff-nb} we show the bias from the first term when varying the difference in causal effects between compliers and defiers. We let the propensity of irrelevance defiers be fixed at three different levels: low (0.1), middle (0.2), and high (0.5). The key take away is that the bias will be small even when there is a sizable number of defiers and a nontrivial difference in causal effects between the compliers and the defiers.

\begin{figure}
\subfloat[Varying Propensity\label{subfig:bias-differing-pid1-nb}]{\includegraphics[width=0.48\textwidth]{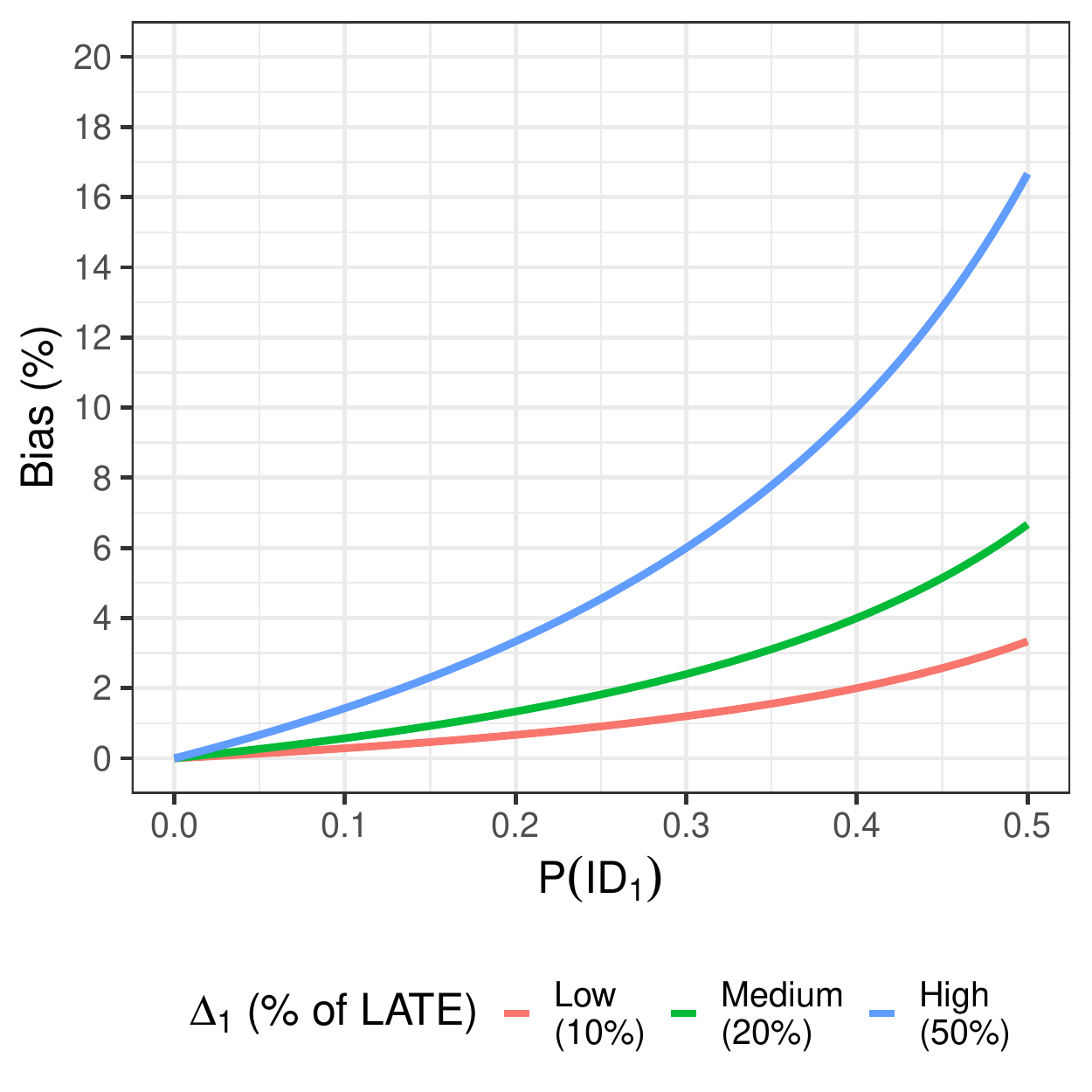}

}\subfloat[Varying Heterogeneity\label{subfig:bias-differing-diff-eff-nb}]{\includegraphics[width=0.48\textwidth]{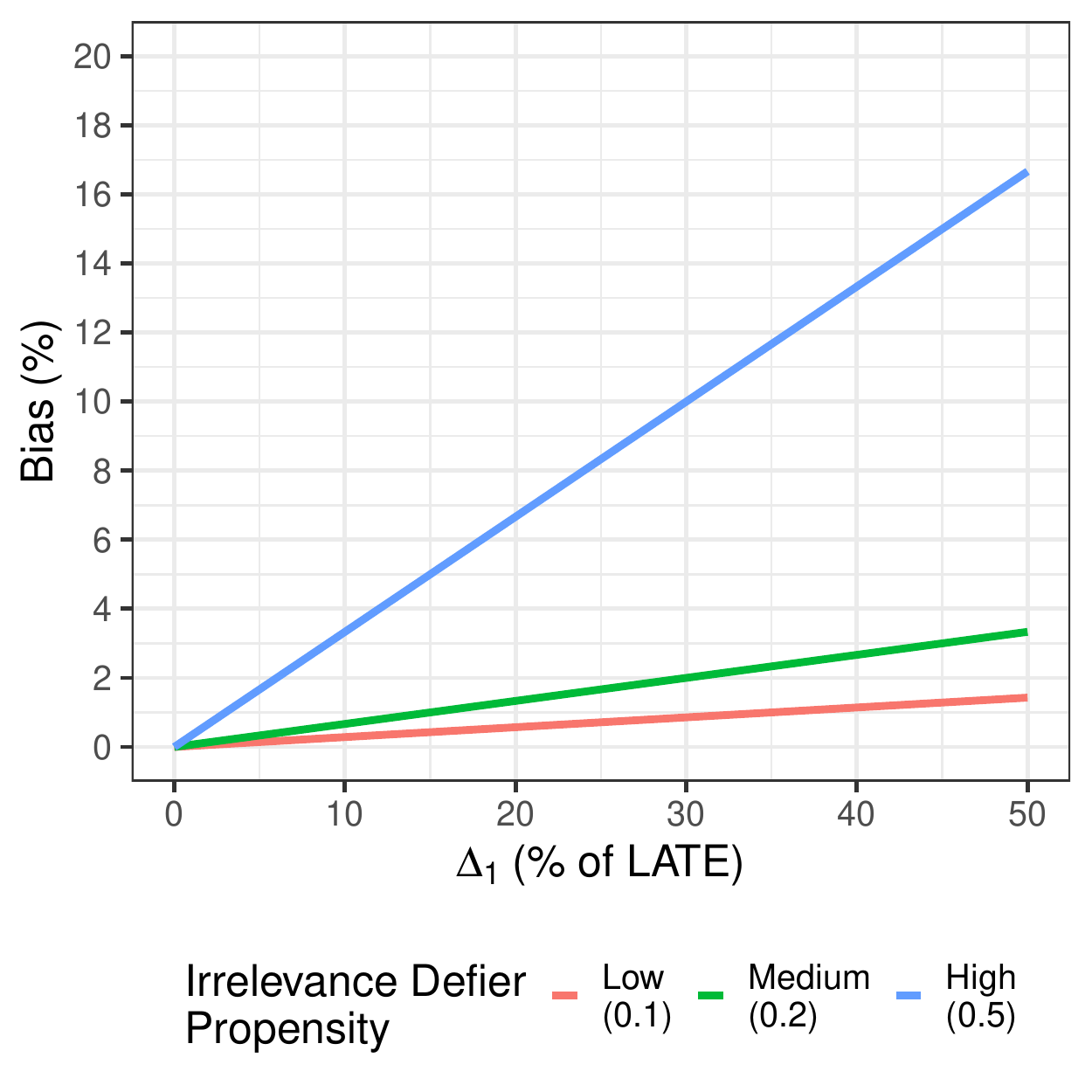}

}

\begin{singlespace}
\textbf{\footnotesize{}Note:}{\footnotesize{} Panel (a) shows the bias from irrelevance defiers for different defier propensities. The red line assumes a difference in causal effects between compliers and defiers at 10\% of the complier LATE, the green at 20\% and the blue at 50\%. Panel (b) shows the bias from irrelevance defiers for different levels of treatment effect heterogeneity. The red line assumes 10\%, the green 20\% and the blue 50\% irrelevance defiers. The number of defiers and compliers for instrument 2 is fixed at 20\% and 80\%.}{\footnotesize\par}
\end{singlespace}

\caption{Bias from irrelevance defiers under different defier weights and levels of heterogeneity.\label{fig:bias-differing-nb}}
\end{figure}

\subsection{Assuming only irrelevance\label{sec:only-irrelevance-ass}}

If irrelevance holds, but next-best is not observed, we may decompose the IV estimand into a complier LATE and a bias term.
\begin{prop}
\label{th:theorem-only-irrelevance} Suppose Assumptions \ref{ass:excl}--\ref{ass:mono} and \ref{ass:irr} hold. Then $\beta_{1}^{IV},\beta_{2}^{IV}$ do not have a causal interpretation as positively weighted averages of treatment effects for compliers,
\begin{align}
\beta_{1}^{IV}=\underbrace{\E[y^{1}-y^{0}\mid C_{1}]}_{\substack{\emph{A}}
}\quad & +\quad\underbrace{\frac{P(ND_{1})P(C_{2})}{\hat{W}}}_{\substack{\omega_{3}}
}\quad\times\enskip(\underbrace{\E[y^{1}-y^{0}\mid ND_{1}]-\E[y^{1}-y^{0}\mid C_{1}]}_{\substack{\Delta_{3}}
})\label{eq:full-irrelevance}\\[4pt]
 & -\quad\underbrace{\frac{P(ND_{1})P(C_{2})}{\hat{W}}}_{\substack{\omega_{4}}
}\quad\times\enskip(\underbrace{\E[y^{2}-y^{0}\mid ND_{1}]-\E[y^{2}-y^{0}\mid C_{2}]}_{\substack{\Delta_{4}}
})\nonumber \\[1pt]
 & +\quad\underbrace{\frac{P(ND_{1})P(ND_{2})}{\hat{W}}}_{\substack{\omega_{5}}
}\times\enskip(\underbrace{\E[y^{1}-y^{0}\mid ND_{1}]-\E[y^{1}-y^{0}\mid ND_{2}]}_{\substack{\Delta_{5}}
})\nonumber \\[1pt]
 & -\quad\underbrace{\frac{P(ND_{1})P(ND_{2})}{\hat{W}}}_{\substack{\omega_{6}}
}\times\enskip(\underbrace{\E[y^{2}-y^{0}\mid ND_{1}]-\E[y^{2}-y^{0}\mid ND_{2}]}_{\substack{\Delta_{6}}
})\nonumber
\end{align}
where $\hat{W}=P(C_{1})P(C_{2})+P(C_{1})P(ND_{2})+P(ND_{1})P(C_{2})$ and the expression for $\beta_{2}^{IV}$ follows by symmetry. \emph{A} is the complier LATE, $\omega_{3}$ through $\omega_{6}$ are defier group weights and $\Delta_{3}$ through $\Delta_{6}$ are differences in the causal effects between complier and defier groups.
\end{prop}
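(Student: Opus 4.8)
The plan is to parallel the derivation of Proposition~\ref{th:theorem-only-next-best}. Because $Z$ is categorical with three values, the just-identified IV estimand of \eqref{eq:equation-of-interest} is pinned down by the group-mean equations $\E[y\mid Z{=}j]=\beta_0+\beta_1\,\E[d_1\mid Z{=}j]+\beta_2\,\E[d_2\mid Z{=}j]$, $j=0,1,2$. Differencing the $j{=}1$ and $j{=}2$ equations against $j{=}0$ gives the $2\times2$ system
\[ \E[y\mid Z{=}1]-\E[y\mid Z{=}0]=\beta_1\,\delta_1^1+\beta_2\,\delta_2^1,\qquad \E[y\mid Z{=}2]-\E[y\mid Z{=}0]=\beta_1\,\delta_1^2+\beta_2\,\delta_2^2, \]
with $\delta_i^j:=\E[d_i\mid Z{=}j]-\E[d_i\mid Z{=}0]$. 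I will solve this by Cramer's rule once both sides are written through principal strata; full rank (Assumption~\ref{ass:rank}) makes the determinant nonzero, and it will turn out to equal $\hat W$.

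The substantive step is the stratification under Assumptions~\ref{ass:excl}--\ref{ass:mono} together with \ref{ass:irr}, but \emph{without} \ref{ass:nextbest}. Enumerating the triples $(d^0,d^1,d^2)$ compatible with monotonicity and irrelevance leaves exactly eight strata: the never-taker $(0,0,0)$, the always-takers $(1,1,1)$ and $(2,2,2)$, the ``pure'' compliers $(0,1,0)$ and $(0,0,2)$, a ``double complier'' $(0,1,2)$, and the two next-best defiers $ND_1=(2,1,2)$ and $ND_2=(1,1,2)$; irrelevance deletes $ID_1,ID_2$ and, outside an always-taker, forbids $d^1{=}2$ and $d^2{=}1$. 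The Table~\ref{tab:taxonomy} definitions make $C_1=\{(0,1,0),(0,1,2)\}$ and $C_2=\{(0,0,2),(0,1,2)\}$, so the double-complier stratum never appears on its own but only inside $P(C_1)$ and $P(C_2)$.

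By independence and exclusion, $\E[d_i\mid Z{=}j]=\E[d_i^j]$, and $\E[y\mid Z{=}j]$ is the expected outcome at the field each stratum picks under $Z{=}j$. Summing over strata gives $\delta_1^1=P(C_1)+P(ND_1)$, $\delta_2^1=-P(ND_1)$, $\delta_1^2=-P(ND_2)$, $\delta_2^2=P(C_2)+P(ND_2)$, so the determinant is $\delta_1^1\delta_2^2-\delta_2^1\delta_1^2=P(C_1)P(C_2)+P(C_1)P(ND_2)+P(ND_1)P(C_2)=\hat W$; and, since the never-taker and always-takers have $d^0{=}d^1{=}d^2$ and drop out,
\[ \E[y\mid Z{=}1]-\E[y\mid Z{=}0]=P(C_1)\,\E[y^1-y^0\mid C_1]+P(ND_1)\,\E[y^1-y^2\mid ND_1], \]
\[ \E[y\mid Z{=}2]-\E[y\mid Z{=}0]=P(C_2)\,\E[y^2-y^0\mid C_2]+P(ND_2)\,\E[y^2-y^1\mid ND_2]. \]

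Finally, Cramer's rule gives $\beta_1^{IV}=\big([P(C_2)+P(ND_2)]\,(\E[y\mid Z{=}1]-\E[y\mid Z{=}0])+P(ND_1)\,(\E[y\mid Z{=}2]-\E[y\mid Z{=}0])\big)/\hat W$; I would plug in the two reduced-form identities, rewrite $\E[y^1-y^2\mid ND_1]=\E[y^1-y^0\mid ND_1]-\E[y^2-y^0\mid ND_1]$ and $\E[y^2-y^1\mid ND_2]=\E[y^2-y^0\mid ND_2]-\E[y^1-y^0\mid ND_2]$, and use $\hat W=P(C_1)[P(C_2)+P(ND_2)]+P(ND_1)P(C_2)$ to peel off $\hat W\,\E[y^1-y^0\mid C_1]=\hat W A$. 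Collecting the rest by the products $P(ND_1)P(C_2)$ and $P(ND_1)P(ND_2)$ yields exactly $\omega_3\Delta_3-\omega_4\Delta_4$ and $\omega_5\Delta_5-\omega_6\Delta_6$, the claimed expression, and $\beta_2^{IV}$ follows by interchanging $1$ and $2$. The only genuine obstacle is the stratification: one must see that irrelevance does not remove the double-complier stratum $(0,1,2)$, that Table~\ref{tab:taxonomy} already folds it into both $C_1$ and $C_2$ so no extra mass shows up in the formula, and that it is precisely the ``baseline field $1$'' (resp.\ ``baseline field $2$'') character of $ND_2$ (resp.\ $ND_1$) that keeps their effects from being absorbed into a complier average; the remaining algebra is routine sign bookkeeping.
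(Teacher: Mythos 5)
Your proposal is correct and rests on the same core argument as the paper: decompose the first stage and reduced form by principal strata (your eight-stratum enumeration under monotonicity plus irrelevance matches the paper's Table \ref{tab:taxonomy-detailed} with the $ID$ rows deleted) and solve the resulting $2\times2$ linear system, which is exactly the pair of moment conditions $\E[\varepsilon z_{1}]=\E[\varepsilon z_{2}]=0$ that the paper uses; your $\delta$'s, determinant $\hat{W}$, reduced-form identities, and the final regrouping into $\omega_{3}\Delta_{3}-\omega_{4}\Delta_{4}+\omega_{5}\Delta_{5}-\omega_{6}\Delta_{6}$ all check out. The only difference is organizational: you derive the formula directly under irrelevance via Cramer's rule, whereas the paper first establishes the general nine-term expression of Proposition \ref{th:theorem-no-auxillary} by back-substitution between the two equations and then specializes by setting $P(ID_{1})=P(ID_{2})=0$.
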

\begin{proof}
See appendix \ref{appx:no-auxillary-ass}.
\end{proof}
Imposing the constant effects assumption implies that the differences in causal effects between defier groups ($\Delta_{3}$ through $\Delta_{6}$) go to zero. In this case, $\beta_{1}^{IV}$ ($\beta_{2}^{IV}$) would recover the causal effect, $\E[y^{1}-y^{0}]$ ($\E[y^{2}-y^{0}]$). Observing the next-best alternative implies that the defier weights ($\omega_{3}$ through $\omega_{6}$) go to zero. In this case, $\beta_{1}^{IV}$ ($\beta_{2}^{IV}$) would recover the complier LATE, $\E[y^{1}-y^{0}\mid C_{1}]$ ($\E[y^{2}-y^{0}\mid C_{2}]$).

As in equation (\ref{eq:full-next-best}), the bias terms in equation (\ref{eq:full-irrelevance}) are the products of a difference in causal effects and a weight consisting of the product of the propensities of defiers divided by the sum of complier and defier propensity products.

Note that the weight in the first and second terms of equation (\ref{eq:full-irrelevance}) ($\omega_{3}$, $\omega_{4}$) are below 1, but that the weights for the two latter terms ($\omega_{5}$, $\omega_{6}$) can be above 1 if $P(ND_{1})P(ND_{2})>P(C_{1})P(C_{2})+P(C_{1})P(ND_{2})+P(ND_{1})P(C_{2})$. When the weight is below 1, the bias from the term will always be smaller than the difference in causal effects. Due to the product structure ($\omega_{j}\times\Delta_{j}$) the bias due to violations of the next-best assumption will be very small when both $\omega_{j}$ and $\Delta_{j}$ are small. Conversely, in order for a large bias to occur, we need both many defiers relative to compliers and a large difference in causal effects between the different groups.

We keep the same numerical example as in Section \ref{sec:only-next-best} and focus on the term $\omega_{3}\times\Delta_{3}$. In Figure \ref{subfig:bias-differing-pid1-irr} we show how the bias from this term varies with the propensity of next-best defiers. We let the difference in causal effects between compliers and defiers be fixed at three different levels: at 10\%, 20\% and 50\% of the complier LATE. In Figure \ref{subfig:bias-differing-diff-eff-irr} we show the bias when varying the difference in causal effects between compliers and defiers. We let the propensity of next-best defiers be fixed at three different levels: low (0.1), middle (0.2), and high (0.5). The key take away is as above that the bias will be small even when there is a sizable number of defiers and a nontrivial difference in causal effects between the compliers and the defiers.

\begin{figure}
\subfloat[Varying Propensity\label{subfig:bias-differing-pid1-irr}]{\includegraphics[width=0.48\textwidth]{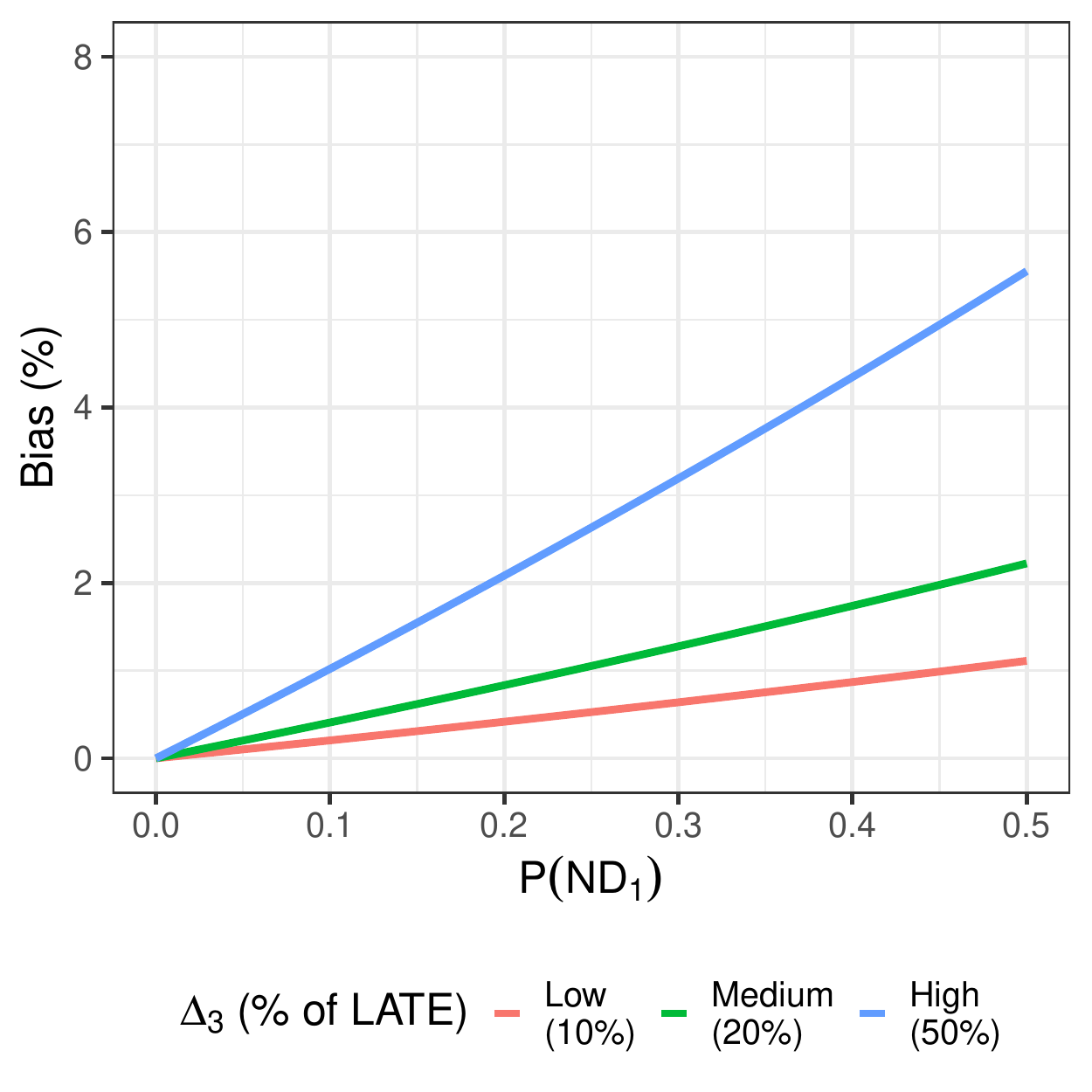}

}\subfloat[Varying Heterogeneity\label{subfig:bias-differing-diff-eff-irr}]{\includegraphics[width=0.48\textwidth]{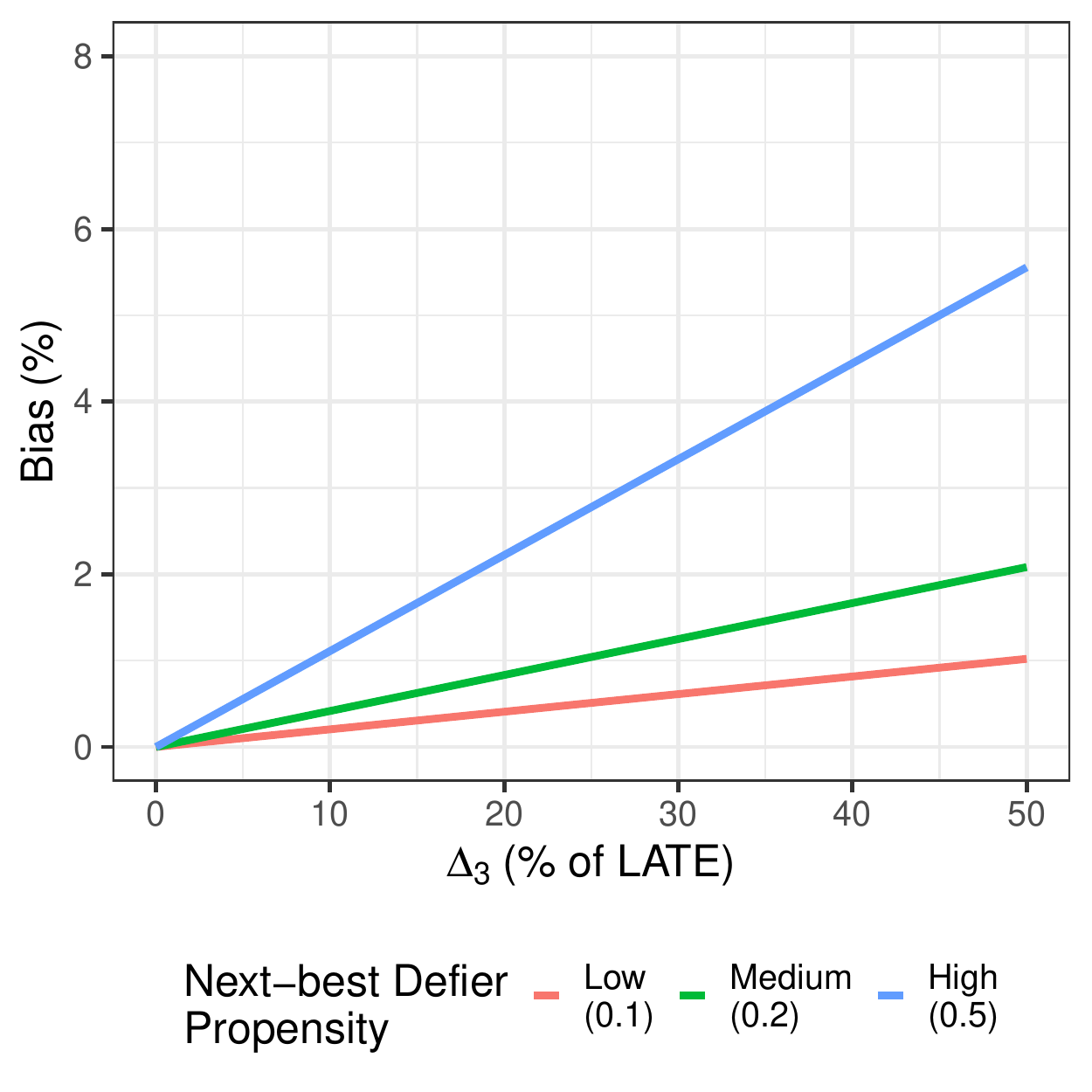}

}

\begin{singlespace}
\textbf{\footnotesize{}Note:}{\footnotesize{} Panel (a) shows one term of the bias from next-best defiers for different defier propensities. The red line assumes a difference in causal effects between compliers and defiers at 10\% of the complier LATE, the green at 20\% and the blue at 50\%. Panel (b) shows the bias from irrelevance defiers for different levels of treatment effect heterogeneity. The red line assumes 10\%, the green 20\% and the blue 50\% irrelevance defiers. The number of defiers and compliers for instrument 2 is fixed at 20\% and 80\%.}{\footnotesize\par}
\end{singlespace}

\caption{Bias from next-best defiers under different defier weights and levels of heterogeneity.\label{fig:bias-differing-irr}}
\end{figure}

\subsection{Assuming neither irrelevance nor next-best\label{sec:no-assumptions}}

If one neither makes the irrelevance assumption nor the next-best assumption, the IV estimand becomes the sum of the complier LATE, all bias terms from Propositions \ref{th:theorem-only-next-best} and \ref{th:theorem-only-irrelevance}, as well as a third set of interacted bias terms.
\begin{prop}
\label{th:theorem-no-auxillary} Suppose Assumptions \ref{ass:excl}--\ref{ass:mono} holds. Then $\beta_{1}^{IV},\beta_{2}^{IV}$ do not have a causal interpretation as positively weighted averages of treatment effects for compliers,
\begin{align}
\beta_{1}^{IV}=\underbrace{\E[y^{1}-y^{0}\mid C_{1}]}_{\substack{\emph{A}}
}\enskip & \enskip+\enskip\underbrace{\frac{P(ID_{1})P(ID_{2})}{\bar{W}}}_{\substack{\omega_{1}}
}\enskip\enskip\times\enskip(\underbrace{\E[y^{1}-y^{0}\mid C_{1}]-\E[y^{1}-y^{0}\mid ID_{2}]}_{\substack{\Delta_{1}}
})\label{eq:no-assumptions}\\[1pt]
 & \enskip-\enskip\underbrace{\frac{P(ID_{1})P(C_{2})}{\bar{W}}}_{\substack{\omega_{2}}
}\enskip\quad\times\enskip(\underbrace{\E[y^{2}-y^{0}\mid C_{2}]-\E[y^{2}-y^{0}\mid ID_{1}]}_{\substack{\Delta_{2}}
})\nonumber \\[1pt]
 & \enskip+\enskip\underbrace{\frac{P(ND_{1})P(C_{2})}{\bar{W}}}_{\substack{\omega_{3}}
}\quad\times\enskip(\underbrace{\E[y^{1}-y^{0}\mid ND_{1}]-\E[y^{1}-y^{0}\mid C_{1}]}_{\substack{\Delta_{3}}
})\nonumber \\[1pt]
 & \enskip-\enskip\underbrace{\frac{P(ND_{1})P(C_{2})}{\bar{W}}}_{\substack{\omega_{4}}
}\quad\times\enskip(\underbrace{\E[y^{2}-y^{0}\mid ND_{1}]-\E[y^{2}-y^{0}\mid C_{2}]}_{\substack{\Delta_{4}}
})\nonumber \\[4pt]
 & \enskip+\enskip\underbrace{\frac{P(ND_{1})P(ND_{2})}{\bar{W}}}_{\substack{\omega_{5}}
}\times\enskip(\underbrace{\E[y^{1}-y^{0}\mid ND_{1}]-\E[y^{1}-y^{0}\mid ND_{2}]}_{\substack{\Delta_{5}}
})\nonumber \\[1pt]
 & \enskip-\enskip\underbrace{\frac{P(ND_{1})P(ND_{2})}{\bar{W}}}_{\substack{\omega_{6}}
}\times\enskip(\underbrace{\E[y^{2}-y^{0}\mid ND_{1}]-\E[y^{2}-y^{0}\mid ND_{2}]}_{\substack{\Delta_{6}}
})\nonumber \\[1pt]
 & \enskip-\enskip\underbrace{\frac{P(ND_{1})P(ID_{2})}{\bar{W}}}_{\substack{\omega_{7}}
}\enskip\times\enskip(\underbrace{\E[y^{1}-y^{0}\mid C_{1}]-\E[y^{1}-y^{0}\mid ID_{2}]}_{\substack{\Delta_{7}}
})\nonumber \\[1pt]
 & \enskip+\enskip\underbrace{\frac{P(ID_{1})P(ND_{2})}{\bar{W}}}_{\substack{\omega_{8}}
}\enskip\times\enskip(\underbrace{\E[y^{1}-y^{0}\mid ND_{2}]-\E[y^{1}-y^{0}\mid C_{1}]}_{\substack{\Delta_{8}}
})\nonumber \\[1pt]
 & \enskip-\enskip\underbrace{\frac{P(ID_{1})P(ND_{2})}{\bar{W}}}_{\substack{\omega_{9}}
}\enskip\times\enskip(\underbrace{\E[y^{2}-y^{0}\mid ND_{2}]-\E[y^{2}-y^{0}\mid ID_{1}]}_{\substack{\Delta_{9}}
})\nonumber
\end{align}
where
\begin{align*}
\bar{W}\enskip & =\enskip P(C_{1})P(C_{2})+P(C_{1})P(ND_{2})+P(ND_{1})P(C_{2})\\[1pt]
 & \enskip+P(ND_{1})P(ID_{2})+P(ID_{1})P(ND_{2})-P(ID_{1})P(ID_{2})
\end{align*}
and the expression for $\beta_{2}^{IV}$ follows by symmetry.
\end{prop}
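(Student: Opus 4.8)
The plan is to compute $\beta_1^{IV}$ in closed form as a just-identified IV estimand and then substitute the principal-strata representation of the reduced-form and first-stage contrasts; the expression for $\beta_2^{IV}$ follows by relabelling $1\leftrightarrow2$. This subsumes Propositions \ref{th:theorem-only-next-best} and \ref{th:theorem-only-irrelevance}, so rather than patch those proofs together I would redo the general computation directly and use the two special cases as internal checks.

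\emph{Step 1 (just-identified IV).} With one constant and the two excluded instruments $z_1,z_2$ serving the two endogenous regressors $d_1,d_2$, the estimand is just identified, so $(\beta_1^{IV},\beta_2^{IV})$ satisfies $\E[z_j(y-\beta_0-\beta_1d_1-\beta_2d_2)]=0$ for $j=0,1,2$ exactly. Using $z_0+z_1+z_2=1$ together with Assumption \ref{ass:indep}, differencing the moment for $j\in\{1,2\}$ against the one for $j=0$ eliminates $\beta_0$ and yields the $2\times2$ system $\Delta_j y=\beta_1\,\Delta_j d_1+\beta_2\,\Delta_j d_2$, $j=1,2$, where $\Delta_j w\equiv\E[w\mid Z=j]-\E[w\mid Z=0]$. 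I would solve it by Cramer's rule, $\beta_1^{IV}=(\Delta_1 y\,\Delta_2 d_2-\Delta_2 y\,\Delta_1 d_2)\big/(\Delta_1 d_1\,\Delta_2 d_2-\Delta_2 d_1\,\Delta_1 d_2)$, the denominator being nonzero by the rank condition \ref{ass:rank}.

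\emph{Step 2 (contrasts via the strata).} Next I would evaluate the four first-stage and two reduced-form contrasts using Table \ref{tab:taxonomy}. Monotonicity \ref{ass:mono} implies that between $Z=0$ and $Z=j$ the only individuals changing field are $C_j$, $ID_j$ and $ND_j$ (assignment to $j$ weakly raises the propensity of field $j$ and weakly lowers that of field $0$, which rules out every other transition), and any individual whose choice is unchanged on that margin — in particular all always-takers and never-takers — contributes $0$ to $\Delta_j d_k$ and, by exclusion \ref{ass:excl}, to $\Delta_j y$. Reading the potential choices off the table gives $\Delta_1 d_1=P(C_1)+P(ND_1)$, $\Delta_1 d_2=P(ID_1)-P(ND_1)$, $\Delta_2 d_1=P(ID_2)-P(ND_2)$, $\Delta_2 d_2=P(C_2)+P(ND_2)$, and $\Delta_1 y=P(C_1)\E[y^1-y^0\mid C_1]+P(ID_1)\E[y^2-y^0\mid ID_1]+P(ND_1)\E[y^1-y^2\mid ND_1]$, with the mirror-image expression for $\Delta_2 y$. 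Expanding the Cramer denominator with these already yields exactly $\bar W$.

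\emph{Step 3 (expand and regroup).} Finally I would substitute into the Cramer formula, rewrite $\E[y^1-y^2\mid ND_1]=\E[y^1-y^0\mid ND_1]-\E[y^2-y^0\mid ND_1]$ (and likewise for $ND_2$) so that every conditional mean is a $y^1-y^0$ or $y^2-y^0$ effect, collect the coefficient of each such conditional expectation, and match the result against $\bar W\,\E[y^1-y^0\mid C_1]$ plus nine products $\pm\,(\text{propensity product})\times(\text{difference in effects})$. As consistency checks, setting $P(ND_1)=P(ND_2)=0$ must collapse the identity to Proposition \ref{th:theorem-only-next-best} (with $\bar W\to W'$) and $P(ID_1)=P(ID_2)=0$ to Proposition \ref{th:theorem-only-irrelevance} (with $\bar W\to\hat W$). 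The one genuinely tricky part is this last regrouping: the expansion has roughly a dozen terms, and showing that the residual coefficient on $\E[y^1-y^0\mid C_1]$ is precisely $\bar W$ — and that the interaction weights $\omega_7,\omega_8,\omega_9$ come out with the stated signs — requires careful tracking of which transition each of $C_1,ID_1,ND_1,C_2,ID_2,ND_2$ makes under each instrument. I would carry out the full expansion in Appendix \ref{appx:no-auxillary-ass}.
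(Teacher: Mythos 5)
Your proposal is correct and is essentially the paper's own argument: the appendix likewise starts from the moment conditions $\E[\varepsilon z_j]=0$, writes the reduced-form and first-stage contrasts in terms of the strata of Table \ref{tab:taxonomy} (your $\Delta_j d_k$ and $\Delta_j y$ are exactly the quantities appearing in equation \eqref{eq:moment-cond}), and then regroups by adding and subtracting multiples of $\E[y^{1}-y^{0}\mid C_{1}]$. The only cosmetic difference is that you solve the $2\times2$ system by Cramer's rule while the paper isolates $\beta_k$ in each equation and substitutes one into the other; the denominator and the final bookkeeping are identical.
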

\begin{proof}
See appendix \ref{appx:no-auxillary-ass}.
\end{proof}
\emph{A} is the complier LATE, $\omega_{1}$ and $\omega_{2}$ are defier weights which also occur when observing the next-best alternative, $\omega_{3}$ through $\omega_{6}$ are defier weights which also occur under irrelevance and $\omega_{7}$ through $\omega_{9}$ are defier weights which occur only when neither assumption holds. $\Delta_{1}$, $\Delta_{2}$ and $\Delta_{7}$ are differences in the causal effects between irrelevance defiers and compliers, $\Delta_{3}$, $\Delta_{4}$ and $\Delta_{8}$ are differences in the causal effects between next-best defiers and compliers, while $\Delta_{5}$ and $\Delta_{6}$ are differences in causal effects between next-best defiers for the two different instruments, and $\Delta_{9}$ is the difference in the causal effects between next-best defiers of instrument 2 and irrelevance defiers of instrument 1.

Imposing the constant effects assumption implies that the differences in causal effects between defier groups ($\Delta_{1}$ through $\Delta_{9}$) go to zero. In this case, $\beta_{1}^{IV}$ ($\beta_{2}^{IV}$) would recover the causal effect, $\E[y^{1}-y^{0}]$ ($\E[y^{2}-y^{0}]$). Imposing the next-best assumption yields the result from Proposition \ref{th:theorem-only-next-best}, as weights $\omega_{3}$ through $\omega_{9}$ go to zero. Imposing the irrelevance assumption yields Proposition \ref{th:theorem-only-irrelevance}, as weights $\omega_{1}$, $\omega_{2}$ and $\omega_{7}$ through $\omega_{9}$ go to zero. Imposing both irrelevance and observing the next-best alternative make all defier weights ($\omega_{1}$ through $\omega_{9}$) go to zero. Then $\beta_{1}^{IV}$ ($\beta_{2}^{IV}$) would recover the complier LATE, $\E[y^{1}-y^{0}\mid C_{1}]$ ($\E[y^{2}-y^{0}\mid C_{2}]$).

Note that the bias in Proposition \ref{th:theorem-no-auxillary} is the sum of all bias terms from Propositions \ref{th:theorem-only-irrelevance} and \ref{th:theorem-only-next-best}, in addition to three new bias terms (except for a different denominator of the weights). These are terms following from interactions between irrelevance and next-best defiers, and rely on both types of defiers being present and having differences in causal effects between each other and with the complier group. As a result, the bias will be small unless there are relatively many of \emph{both} types of defiers and the causal effects are materially different between these groups and the compliers.

\section{Testable implications and aggregation\label{sec:empirical-test}}

\subsection{How to test the auxiliary assumptions}

The first stage equations for the IV estimates of equation (\ref{eq:equation-of-interest}) are given by:
\begin{align}
d_{1} & =\alpha_{1}^{0}+\alpha_{1}^{1}z_{1}+\alpha_{1}^{2}z_{2}+\nu_{1}\label{eq:first-stage-1}\\
d_{2} & =\alpha_{2}^{0}+\alpha_{2}^{1}z_{1}+\alpha_{2}^{2}z_{2}+\nu_{2}\label{eq:first-stage-2}
\end{align}
We now examine if it is possible to devise a test of whether the auxiliary assumptions \ref{ass:irr}--\ref{ass:nextbest} hold empirically. To do so, it is useful to characterize the quantities that the first stage coefficients recover:
\begin{lem}
\label{th:identified-quantities} Suppose Assumptions \ref{ass:excl}--\ref{ass:mono} hold. Then
\begin{align*}
\alpha_{1}^{0} & =P(AT_{1})\equiv P(OT_{2})+P(ND_{2}) & \alpha_{2}^{0} & =P(AT_{2})\equiv P(OT_{1})+P(ND_{1})\\
\alpha_{1}^{1} & =P(C_{1})+P(ND_{1}) & \alpha_{2}^{2} & =P(C_{2})+P(ND_{2})\\
\alpha_{2}^{1} & =P(ID_{1})-P(ND_{1}) & \alpha_{1}^{2} & =P(ID_{2})-P(ND_{2})
\end{align*}
where $AT_{1}$($AT_{2}$) are always-takers of field $1$ (2) when $z$ equals 0 or 1 (0 or 2), and $OT_{1}$ ($OT_{2}$) are global (for every value of the instrument) always takers of the other field 2 (1). See Appendix Table \ref{tab:taxonomy-detailed} for formal definitions of these instrument-specific strata.
\end{lem}
\begin{proof}
See appendix \ref{appx:empirical-test}.
\end{proof}
This result paves the way for the main result on the testability of the irrelevance and next best assumptions:
\begin{prop}
\label{th:testable-implications} Suppose Assumptions \ref{ass:excl}--\ref{ass:mono} hold. Then $P(ID_{1})$ and $P(ND_{1})$ are partially identified.
\begin{align*}
P(ND_{1}) & \in[\max\{0,-\alpha_{2}^{1}\},\qquad\qquad\quad\,\,\min\{\alpha_{1}^{1},\alpha_{2}^{0}\}]\\
P(ID_{1}) & \in[\max\{0,\enskip\,\alpha_{2}^{1}\},\max\{0,\alpha_{2}^{1}+\min\{\alpha_{1}^{1},\alpha_{2}^{0}\}\}]
\end{align*}
where results for $P(ID_{2})$ and $P(ND_{2})$ follow by symmetry.
\end{prop}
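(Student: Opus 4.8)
The plan is to read Lemma~\ref{th:identified-quantities} in reverse: each of its identities expresses a first‑stage coefficient as a signed sum of stratum probabilities, and every stratum probability is nonnegative, so the identities become a small system of linear inequalities whose feasible set, projected onto $P(ND_1)$ and $P(ID_1)$, is the claimed pair of intervals. Only three identities involve $ND_1$ or $ID_1$,
\[
\alpha_1^1 = P(C_1)+P(ND_1),\qquad \alpha_2^1 = P(ID_1)-P(ND_1),\qquad \alpha_2^0 = P(AT_2),
\]
and, since $\alpha_2^0$ counts the units with $d^0=2$, which by Table~\ref{tab:taxonomy} and monotonicity (Assumption~\ref{ass:mono}) are exactly $ND_1$ together with $OT_1$, we may also write $\alpha_2^0=P(ND_1)+P(OT_1)$. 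I would take $P(ND_1)$ as the single free unknown and use these three relations to eliminate $P(C_1)$, $P(ID_1)$ and $P(OT_1)$.

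Imposing $P(C_1),P(ID_1),P(OT_1),P(ND_1)\ge 0$ then produces four constraints on $P(ND_1)$: $P(ND_1)\ge 0$ trivially; $P(ND_1)\ge -\alpha_2^1$ from $P(ID_1)=\alpha_2^1+P(ND_1)\ge 0$; $P(ND_1)\le \alpha_1^1$ from $P(C_1)=\alpha_1^1-P(ND_1)\ge 0$; and $P(ND_1)\le \alpha_2^0$ from $P(OT_1)=\alpha_2^0-P(ND_1)\ge 0$. Intersecting them gives $P(ND_1)\in[\max\{0,-\alpha_2^1\},\min\{\alpha_1^1,\alpha_2^0\}]$, and substituting back into $P(ID_1)=\alpha_2^1+P(ND_1)$ gives the lower bound $\alpha_2^1+\max\{0,-\alpha_2^1\}=\max\{0,\alpha_2^1\}$ and the upper bound $\alpha_2^1+\min\{\alpha_1^1,\alpha_2^0\}$; since $\min\{\alpha_1^1,\alpha_2^0\}\ge P(ND_1)$ forces $\alpha_2^1+\min\{\alpha_1^1,\alpha_2^0\}\ge P(ID_1)\ge 0$, this equals $\max\{0,\alpha_2^1+\min\{\alpha_1^1,\alpha_2^0\}\}$, the displayed interval. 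This already establishes the two intervals as necessary bounds and hence the testable implications: an estimated $\alpha$‑vector for which the interval is empty, or for which $0$ lies outside it, contradicts Assumptions~\ref{ass:excl}--\ref{ass:mono}, and since the auxiliary assumptions of \citet{kirkeboen_field_2016} force $P(ND_1)=P(ID_1)=0$, excluding $0$ rejects those as well; whenever $\max\{0,-\alpha_2^1\}<\min\{\alpha_1^1,\alpha_2^0\}$ the interval is nondegenerate, so the parameters are not point identified.

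To show the bounds are sharp I would exhibit, for every $t$ in $[\max\{0,-\alpha_2^1\},\min\{\alpha_1^1,\alpha_2^0\}]$, a distribution over the monotonicity‑admissible potential‑choice types that reproduces \emph{all} the identified quantities of Lemma~\ref{th:identified-quantities} and has $P(ND_1)=t$: set $P(ND_1)=t$, $P(ID_1)=\alpha_2^1+t$, $P(C_1)=\alpha_1^1-t$, $P(OT_1)=\alpha_2^0-t$ (each nonnegative exactly because $t$ is in the interval), and distribute the residual probability over the remaining strata — the field‑$1$ and field‑$2$ always‑ and never‑takers and the instrument‑$2$ complier and defier strata — so as to match $\alpha_1^0$, $\alpha_1^2$, $\alpha_2^2$ and the two adding‑up identities. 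Because the set of admissible type distributions consistent with the data is convex and $P(ND_1)$ is linear in it, it is enough to carry this out at the two endpoints, all interior values following by convex combination, and the induced range of $P(ID_1)=\alpha_2^1+P(ND_1)$ is then the second interval. I expect the residual allocation to be the only genuinely delicate step: one must check that the leftover mass can always be parcelled out with no probability turning negative, which requires splitting into a handful of cases according to the sign of $\alpha_2^1$ and to which of $\alpha_1^1,\alpha_2^0$ is the binding upper bound, and reconciling it with the instrument‑$2$ block of Lemma~\ref{th:identified-quantities} (whose strata partly overlap the instrument‑$1$ strata). Everything preceding it — the outer‑bound derivation and the propagation to $P(ID_1)$ — is routine once Lemma~\ref{th:identified-quantities} is in hand.
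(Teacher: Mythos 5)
Your derivation of the bounds is correct and essentially identical to the paper's: both read Lemma~\ref{th:identified-quantities} in reverse, impose nonnegativity of $P(C_{1})$, $P(ID_{1})$, $P(OT_{1})$ and $P(ND_{1})$ on the three identities $\alpha_{1}^{1}=P(C_{1})+P(ND_{1})$, $\alpha_{2}^{1}=P(ID_{1})-P(ND_{1})$ and $\alpha_{2}^{0}=P(ND_{1})+P(OT_{1})$, and then propagate the resulting interval for $P(ND_{1})$ to $P(ID_{1})$ via $P(ID_{1})=\alpha_{2}^{1}+P(ND_{1})$. The sharpness construction you sketch (and leave incomplete at the residual-allocation step) goes beyond the paper's own proof, which only establishes the validity of the bounds and does not verify attainability.
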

\begin{proof}
See appendix \ref{appx:empirical-test}.
\end{proof}
\noindent The practical implication of Proposition \ref{th:testable-implications} is that we cannot point identify the defier propensities without further assumptions. Yet, the assumptions are testable as the bounds will generally be nontrivial. Furthermore, if either assumption \ref{ass:irr} or \ref{ass:nextbest} is known to hold, the other assumption can be tested separately and $P(ID_{1})$ or $P(ND_{1})$ is point identified.
\begin{cor}
\label{cor:testable-next-best} Suppose Assumptions \ref{ass:excl}--\ref{ass:mono} and \ref{ass:nextbest} hold. Then $P(ND_{1})=P(ND_{2})=0$ and we can test whether assumption \ref{ass:irr} (irrelevance) holds, as $\alpha_{2}^{1}=P(ID_{1})$ and $\alpha_{1}^{2}=P(ID_{2})$.
\end{cor}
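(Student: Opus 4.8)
The plan is to obtain the corollary directly from the definitions in Table~\ref{tab:taxonomy} together with Lemma~\ref{th:identified-quantities}, in three short steps.

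First I would kill the next-best defier strata. Assumption~\ref{ass:nextbest} lets us condition on $d_0^0=1$, i.e.\ on $d^0=0$. But Table~\ref{tab:taxonomy} shows that every member of $ND_1$ has $d^0=2$ and every member of $ND_2$ has $d^0=1$, both incompatible with $d^0=0$; hence $P(ND_1)=P(ND_2)=0$ in the conditioned population. Substituting this into the expressions of Lemma~\ref{th:identified-quantities} gives $\alpha_2^1=P(ID_1)-P(ND_1)=P(ID_1)$ and $\alpha_1^2=P(ID_2)-P(ND_2)=P(ID_2)$. Since $\alpha_2^1$ and $\alpha_1^2$ are ordinary OLS coefficients in the first-stage regressions \eqref{eq:first-stage-1}--\eqref{eq:first-stage-2}, they are identified, and therefore so are $P(ID_1)$ and $P(ID_2)$.

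It then remains to argue that this delivers a test of Assumption~\ref{ass:irr}. Here I would establish the equivalence: under Assumptions~\ref{ass:excl}--\ref{ass:mono} and \ref{ass:nextbest}, Assumption~\ref{ass:irr} holds (on the conditioned population) if and only if $P(ID_1)=P(ID_2)=0$. Using the adding-up identity $d_0^z+d_1^z+d_2^z=1$, the irrelevance restriction for instrument $1$ reduces to the single-index statement $d_1^1-d_1^0=0\implies d_2^1=d_2^0$, and analogously for instrument $2$. Conditioning on $d^0=0$, an individual satisfies this restriction for instrument $1$ unless $d_1^1=0$ (so the antecedent holds) while $d_2^1=1\neq 0=d_2^0$; by Table~\ref{tab:taxonomy} the latter event is precisely membership in $ID_1$, and symmetrically for instrument $2$ and $ID_2$. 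Hence Assumption~\ref{ass:irr} fails on a set of positive probability iff $P(ID_1)+P(ID_2)>0$, i.e.\ iff $\alpha_2^1>0$ or $\alpha_1^2>0$ (both being nonnegative once $P(ND_1)=P(ND_2)=0$), so that testing $H_0:\alpha_2^1=\alpha_1^2=0$ is a valid test of irrelevance.

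The step requiring the most care is this last equivalence: one must check that once the next-best event is imposed no stratum other than $ID_k$ can violate the irrelevance restriction for instrument $k$, which is exactly where the adding-up constraint (collapsing the restriction to one treatment index) and Assumption~\ref{ass:mono} (ruling out ordinary defiers that would otherwise reappear on the relevant margin) are both used. Everything else is bookkeeping against Table~\ref{tab:taxonomy} and Lemma~\ref{th:identified-quantities}.
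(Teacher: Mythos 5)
Your proposal is correct and follows essentially the same route as the paper: next-best forces $P(ND_{1})=P(ND_{2})=0$, and substituting into the identities of Lemma \ref{th:identified-quantities} gives $\alpha_{2}^{1}=P(ID_{1})$ and $\alpha_{1}^{2}=P(ID_{2})$. Your additional argument that, on the conditioned population, irrelevance fails exactly on $ID_{1}\cup ID_{2}$ (so that $H_{0}:\alpha_{2}^{1}=\alpha_{1}^{2}=0$ is a valid test) is a correct elaboration of a step the paper leaves implicit, not a different approach.
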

\begin{cor}
\label{cor:testable-irrelevance} Suppose Assumptions \ref{ass:excl}--\ref{ass:mono} and \ref{ass:irr} hold. Then $P(ID_{1})=P(ID_{2})=0$ and we can test whether assumption \ref{ass:nextbest} (next-best) holds, as $\alpha_{2}^{1}=-P(ND_{1})$ and $\alpha_{1}^{2}=-P(ND_{2})$.
\end{cor}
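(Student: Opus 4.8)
The plan is to split the statement into its three assertions and settle each using only the stratum definitions of Table~\ref{tab:taxonomy}, the logical form of Assumption~\ref{ass:irr}, and the identities of Lemma~\ref{th:identified-quantities}.

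First I would show that Assumption~\ref{ass:irr} forces $P(ID_{1})=P(ID_{2})=0$. By Table~\ref{tab:taxonomy}, an individual lies in $ID_{1}$ exactly when $d_{1}^{1}=d_{1}^{0}=0$ together with $d_{2}^{1}-d_{2}^{0}=1$. But $d_{1}^{1}=d_{1}^{0}$ is precisely the antecedent $d_{1}^{1}-d_{1}^{0}=0$ of Assumption~\ref{ass:irr} with $(k,k')=(1,2)$, whose conclusion $d_{2}^{1}=d_{2}^{0}$ contradicts $d_{2}^{1}-d_{2}^{0}=1$; hence $P(ID_{1})=0$, and the symmetric argument with $(k,k')=(2,1)$ gives $P(ID_{2})=0$. (Irrelevance does not eliminate $ND_{1}$ or $ND_{2}$: where its antecedent is satisfied for those strata, its conclusion is too --- consistent with the corollary allowing $P(ND_{1}),P(ND_{2})>0$.) The paper's footnote that it is immaterial whether \ref{ass:irr} is imposed on everyone or only on instrument-responsive units applies here without change.

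Second, I would substitute $P(ID_{1})=P(ID_{2})=0$ into Lemma~\ref{th:identified-quantities}, which states $\alpha_{2}^{1}=P(ID_{1})-P(ND_{1})$ and $\alpha_{1}^{2}=P(ID_{2})-P(ND_{2})$; these collapse to $\alpha_{2}^{1}=-P(ND_{1})$ and $\alpha_{1}^{2}=-P(ND_{2})$, the claimed point identification of the next-best shares from the cross first-stage coefficients in \eqref{eq:first-stage-1}--\eqref{eq:first-stage-2}. (Equivalently, one can read this off Proposition~\ref{th:testable-implications} by imposing $P(ID_{1})=0$.) For the testability conclusion, under Assumption~\ref{ass:nextbest} one conditions on $d_{1}^{0}=d_{2}^{0}=0$, whereas Table~\ref{tab:taxonomy} records $d^{0}=2$ for $ND_{1}$ and $d^{0}=1$ for $ND_{2}$; so \ref{ass:nextbest} forces $P(ND_{1})=P(ND_{2})=0$ and hence $\alpha_{2}^{1}=\alpha_{1}^{2}=0$. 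A rejection of $\alpha_{2}^{1}=0$ (or of $\alpha_{1}^{2}=0$) in the estimated first stage is thus a rejection of \ref{ass:nextbest}; and since $P(ND_{1}),P(ND_{2})\ge 0$, the one-sided predictions $\alpha_{2}^{1}\le 0$ and $\alpha_{1}^{2}\le 0$ must hold whenever Assumptions~\ref{ass:excl}--\ref{ass:mono} and \ref{ass:irr} do.

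I do not expect a real obstacle. The only step requiring care is the first one --- aligning the quantifier structure of Assumption~\ref{ass:irr} with the exact defining (in)equalities of the $ID$ strata in Table~\ref{tab:taxonomy} --- and once these are written side by side the implication is immediate; the remainder is substitution into Lemma~\ref{th:identified-quantities}.
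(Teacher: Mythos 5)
Your proposal is correct and follows essentially the same route as the paper: the paper's proof likewise sets $P(ID_{1})=P(ID_{2})=0$ under irrelevance and substitutes into the first-stage identities $\alpha_{2}^{1}=P(ID_{1})-P(ND_{1})$ and $\alpha_{1}^{2}=P(ID_{2})-P(ND_{2})$ from Lemma~\ref{th:identified-quantities}. You merely spell out the step the paper leaves implicit --- that the antecedent of Assumption~\ref{ass:irr} is satisfied by the defining conditions of the $ID$ strata in Table~\ref{tab:taxonomy}, so its conclusion contradicts their membership --- and add the (correct) one-sided testable restriction $\alpha_{2}^{1},\alpha_{1}^{2}\le 0$.
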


\subsection{How aggregation may cause violations of the exclusion restriction}

\citet{nibbering_clustered_2022} propose an algorithm which aggregates fields into clusters based on estimated first-stage coefficients. The motivation for their approach is to avoid bias from irrelevance and next-best defiers. Before discussing their approach, it is important to observe that the resulting IV estimates between such clusters will, at best, identify a positively weighted average of the causal effects of choosing one field versus a linear combination of the other fields, for example, the effects of choosing field 1 versus field 0 or 2. Hence, this approach involves moving the goalpost from clearly defined field contrasts that govern individuals' educational investments to clusters of different fields. In the discussion below, we accept at faith that such contrasts are parameters of interest.

\subsubsection{Bias From Exclusion Violation\label{sec:clustering-ass-not}}

We continue to consider the situation with three fields, discussed above. The algorithm takes as a starting point all individuals with a certain reported next-best alternative (in our case taken to be 0), and test the hypothesis that the off-diagonal coefficients, $\alpha_{2}^{1}$ and $\alpha_{1}^{2}$, are zero. If this hypothesis is rejected, the sign of the coefficient is evaluated and the treatments are clustered according to the rules laid out in Table \ref{tab:cluster-scenarios}. For example, if $\alpha_{2}^{1}$ is negative and $\alpha_{1}^{2}$ is either zero or positive, fields 0 and 2 become the control cluster and field 1 the treatment cluster. Conversely, if $\alpha_{2}^{1}$ is either zero or positive and $\alpha_{1}^{2}$ is negative, fields 0 and 1 become the control cluster and field 2 the treatment cluster.

\begin{table}
\caption{Four Possible Clustering Scenarios.\label{tab:cluster-scenarios}}
\setlength{\tabcolsep}{0pt}{\small{}}%
\begin{tabular*}{1\columnwidth}{@{\extracolsep{\fill}}lcccccccl}
\toprule
\multicolumn{1}{l}{\textbf{\small{}Scenario}} & \multicolumn{2}{c}{\textbf{\small{}Conditions}} &  & \multicolumn{3}{c}{\textbf{\small{}Clusters}} &  & \multicolumn{1}{c}{\textbf{\small{}Implied Restrictions on Defiers}}\tabularnewline
\cmidrule{2-3} \cmidrule{3-3} \cmidrule{5-7} \cmidrule{6-7} \cmidrule{7-7} \cmidrule{9-9}
 & {\small{}$\alpha_{2}^{1}$} & {\small{}$\alpha_{1}^{2}$} &  & {\small{}$S_{0}$} & {\small{}$S_{1}$} & {\small{}$S_{2}$} &  & \multicolumn{1}{c}{}\tabularnewline
\midrule
\textbf{\small{}Control} & {\small{}$<0$} & {\small{}$=0$} &  & \multirow{2}{*}{{\small{}$\{0,2\}$}} & \multirow{2}{*}{{\small{}$\{1\}$}} &  &  & {\small{}$P(ND_{1})>P(ID_{1})\geq0\land P(ID_{2})=P(ND_{2})\geq0$}\tabularnewline
\textbf{\small{}Clustering} & {\small{}$<0$} & {\small{}$>0$} &  &  &  &  &  & {\small{}$P(ND_{1})>P(ID_{1})\geq0\land P(ID_{2})>P(ND_{2})\geq0$}\tabularnewline
 & {\small{}$=0$} & {\small{}$<0$} &  & \multirow{2}{*}{{\small{}$\{0,1\}$}} & \multirow{2}{*}{{\small{}$\{2\}$}} &  &  & {\small{}$P(ID_{1})=P(ND_{1})\geq0\land P(ND_{2})>P(ID_{2})\geq0$}\tabularnewline
 & {\small{}$>0$} & {\small{}$<0$} &  &  &  &  &  & {\small{}$P(ID_{1})>P(ND_{1})\geq0\land P(ND_{2})>P(ID_{2})\geq0$}\tabularnewline
\textbf{\small{}Treatment} & {\small{}$>0$} & {\small{}$=0$} &  & \multirow{3}{*}{{\small{}$\{0\}$}} & \multirow{3}{*}{{\small{}$\{1,2\}$}} &  &  & {\small{}$P(ID_{1})>P(ND_{1})\geq0\land P(ID_{2})=P(ND_{2})\geq0$}\tabularnewline
\textbf{\small{}Clustering} & {\small{}$=0$} & {\small{}$>0$} &  &  &  &  &  & {\small{}$P(ID_{1})=P(ND_{1})\geq0\land P(ID_{2})>P(ND_{2})\geq0$}\tabularnewline
 & {\small{}$>0$} & {\small{}$>0$} &  &  &  &  &  & {\small{}$P(ID_{1})>P(ND_{1})\geq0\land P(ID_{2})>P(ND_{2})\geq0$}\tabularnewline
\textbf{\small{}No Clustering} & \multirow{1}{*}{{\small{}$=0$}} & \multirow{1}{*}{{\small{}$=0$}} &  & \multirow{1}{*}{{\small{}$\{0\}$}} & \multirow{1}{*}{{\small{}$\{1\}$}} & \multirow{1}{*}{{\small{}$\{2\}$}} &  & {\small{}$P(ID_{1})=P(ND_{1})\geq0\land P(ID_{2})=P(ND_{2})\geq0$}\tabularnewline
\multirow{2}{*}{\textbf{\small{}Undefined}{\small{}$^{*}$}} & \multirow{2}{*}{{\small{}$<0$}} & \multirow{2}{*}{{\small{}$<0$}} &  & \multirow{1}{*}{{\small{}$\{0,2\}$/}} & \multirow{1}{*}{{\small{}$\{1\}$/}} &  &  & \multirow{2}{*}{{\small{}$P(ND_{1})>P(ID_{1})\geq0\land P(ND_{2})>P(ID_{2})\geq0$}}\tabularnewline
 &  &  &  & \multirow{1}{*}{{\small{}$\{0,1\}$}} & \multirow{1}{*}{{\small{}$\{2\}$}} &  &  & \tabularnewline
\bottomrule
\end{tabular*}{\small\par}

\begin{singlespace}
\textbf{\footnotesize{}Note:}{\footnotesize{} The table shows different clusterings ensuing from the algorithm proposed by \citet{nibbering_clustered_2022} and their implied restrictions on defiers. The algorithm tests the null hypothesis of coefficients being zero. The conditions in columns two and three specify which estimates must be observed for the clustering to be chosen, where $>0$ ($<0$) indicate rejecting the null and observing a positive (negative) coefficient, while ``$=0$'' indicates not being able to reject.}{\footnotesize\par}
\end{singlespace}

{\small{}$^{*}$}{\footnotesize{}It is unclear what \citet{nibbering_clustered_2022} do when both coefficients are negative. In that case, the ordering of the coefficients will matter.}{\footnotesize\par}
\end{table}

After performing the clustering based algorithm, \citet{nibbering_clustered_2022} estimate cluster treatment effects: Let $\tilde{d}(d)=\d{d\in S_{1}}$ be the binary cluster treatment indicator and $\tilde{z}(Z)=\d{Z=d\in S_{1}}$ the cluster instrument indicator. The no clustering-scenario is equivalent to the field level. In the two other scenarios (control clustering or treatment clustering) we consider IV estimates of the equation
\[
y=\tilde{\beta}_{0}+\tilde{\beta}_{1}\tilde{d}+\varepsilon
\]
where the first stage is
\[
\tilde{d}=\pi_{0}+\pi_{1,0}\tilde{z}+\nu
\]
and $\pi_{1,0}$ is the first stage coefficient. Observed and potential outcomes and choices are linked as
\begin{align}
y & =\tilde{y}^{0}(1-\tilde{d})+\tilde{y}^{1}\tilde{d}\label{eq:outcome-cl-eq }\\
\tilde{d} & =\tilde{d}^{0}+(\tilde{d}^{1}-\tilde{d}^{0})\tilde{z}\label{eq:choice-cl-eq-1}
\end{align}
where $\tilde{d}^{j}\equiv\d{\tilde{d}^{j}=1}$ denotes the cluster-level potential treatment and $\tilde{y}^{j}$ is the potential outcome in cluster $j$. In Appendix \ref{appx:proof-violation-exclusion} we show that this IV estimand does not, under Assumptions \ref{ass:excl}--\ref{ass:mono}, have a causal interpretation as a positively weighted average of treatment effects for the cluster complier groups. This result is summarized in Proposition \ref{th:exclusion-viol}.
\begin{prop}
\label{th:exclusion-viol} Suppose Assumptions \ref{ass:excl}--\ref{ass:mono} hold.

\begin{enumerate}[label={\emph{(\alph*)}}, ref={6(\alph*)}]

\item Under control clustering, $\tilde{\beta}_{1}^{IV}$ does not have a causal interpretation as a positively weighted average of treatment effects for the cluster complier group. If the clustering is $S_{1}=\{1\}$ and $S_{0}=\{2,0\}$, we have
\begin{align*}
\tilde{\beta}_{1,0}^{IV}\enskip & =\enskip\underbrace{\frac{P(C_{1}\cup ND_{2})}{\pi_{1,0}}\E[y^{1}-y^{0}\mid C_{1}\cup ND_{2}]+\frac{P(C_{2}\cup ND_{1})}{\pi_{1,0}}\E[y^{1}-y^{2}\mid C_{2}\cup ND_{1}]}_{\substack{\text{A}}
}\\[4pt]
 & \qquad+\underbrace{\frac{P(ID_{1})}{\pi_{1,0}}}_{\substack{\tilde{\omega}_{1}}
}\enskip\enskip\underbrace{\E[y^{2}-y^{0}\mid ID_{1}]}_{\substack{\tilde{\Delta}_{1}}
}-\underbrace{\frac{P(ID_{2})}{\pi_{1,0}}}_{\substack{\tilde{\omega}_{2}}
}\enskip\enskip\underbrace{\E[y^{2}-y^{0}\mid ID_{2}]}_{\substack{\tilde{\Delta}_{1}}
}
\end{align*}
where $\pi_{1,0}=P(C_{1}\cup C_{2}\cup ND_{1}\cup ND_{2})$. \emph{A} is a positively weighted average of cluster complier LATEs, $\tilde{\omega}_{1}$ and $\tilde{\omega}_{2}$ are defier group weights, and $\tilde{\Delta}_{1}$ and $\tilde{\Delta}_{2}$ are differences in potential outcomes for irrelevance defiers in cluster $S_{0}$, i.e. never takers of the clustered treatment. The result for the clustering $S_{1}=\{2\}$ and $S_{0}=\{1,0\}$ is symmetric.

\item Under treatment clustering, $\tilde{\beta}_{1}^{IV}$ does not have a causal interpretation as a positively weighted average of treatment effects for the cluster complier group. We have
\begin{align*}
\tilde{\beta}_{1,0}^{IV}\enskip & =\enskip\underbrace{\frac{P(C_{1}\cup ID_{2})}{\pi_{1,0}}\E[y^{1}-y^{0}\mid C_{1}\cup ID_{2}]+\frac{P(C_{2}\cup ID_{1})}{\pi_{1,0}}\E[y^{2}-y^{0}\mid C_{2}\cup ID_{1}]}_{\substack{\text{A}}
}\\[4pt]
 & \qquad+\underbrace{\frac{P(ND_{1})}{\pi_{1,0}}}_{\substack{\tilde{\omega}_{3}}
}\enskip\underbrace{\E[y^{1}-y^{2}\mid ND_{1}]}_{\substack{\tilde{\Delta}_{3}}
}-\underbrace{\frac{P(ND_{2})}{\pi_{1,0}}}_{\substack{\tilde{\omega}_{4}}
}\enskip\underbrace{\E[y^{1}-y^{2}\mid ND_{2}]}_{\substack{\tilde{\Delta}_{4}}
}
\end{align*}
where $\pi_{1,0}=P(C_{1}\cup C_{2}\cup ID_{1}\cup ID_{2})$. \emph{A} is a positively weighted average of cluster complier LATEs, $\tilde{\omega}_{3}$ and $\tilde{\omega}_{4}$ are defier group weights, and $\tilde{\Delta}_{3}$ and $\tilde{\Delta}_{4}$ are differences in potential outcomes for irrelevance defiers in cluster $S_{1}$, i.e. always takers of the clustered treatment.

\end{enumerate}

\end{prop}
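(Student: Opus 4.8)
The plan is to carry out, at the cluster level, the same kind of argument that underlies Propositions~\ref{th:theorem-only-next-best}--\ref{th:theorem-no-auxillary}: write $\tilde\beta_{1,0}^{IV}$ as a Wald ratio, expand numerator and denominator over principal strata, and read off which strata act as cluster compliers and which merely contaminate the reduced form. Since there is a single binary instrument, $\tilde\beta_{1,0}^{IV}=\operatorname{Cov}(y,\tilde z)/\operatorname{Cov}(\tilde d,\tilde z)$, so that $\pi_{1,0}\tilde\beta_{1,0}^{IV}$ equals the clustered reduced form up to the factor $\operatorname{Var}(\tilde z)$ it shares with the denominator. I would first use Assumptions~\ref{ass:excl}--\ref{ass:indep} to replace $\E[y\mid\tilde z]$ and $\E[\tilde d\mid\tilde z]$ by averages of the potential outcomes $y^{d}$ and of the potential choice indicators over the assignment values that $\tilde z$ merges into $\{\tilde z=1\}$ and $\{\tilde z=0\}$, and then partition the population into the strata of Table~\ref{tab:taxonomy} together with the always-, never- and other-takers, using monotonicity (Assumption~\ref{ass:mono}) to fix the potential choices that the taxonomy leaves unrestricted (for instance $d^{2}$ for $ND_{1}$ or $d^{1}$ for $C_{2}$).

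The observation that organizes the whole computation is that aggregation converts exactly one of the two auxiliary-assumption defier types into units that violate the exclusion restriction for the pair $(\tilde d,\tilde z)$. Take treatment clustering, $S_{1}=\{1,2\}$ and $S_{0}=\{0\}$: a next-best defier of instrument~1 has $d^{0}=2$ and $d^{1}=1$, so the instrument moves her between fields $2$ and $1$, both of which lie in $S_{1}$; her clustered treatment is therefore constant -- she is an always-taker of $\tilde d$ and drops out of the clustered first stage -- yet her realized outcome still switches from $y^{2}$ to $y^{1}$ with $Z$. The same applies to $ND_{2}$, with the sign and weight of its contribution determined by the direction of its instrument contrast, and collecting the two gives $\tilde\omega_{3}\tilde\Delta_{3}-\tilde\omega_{4}\tilde\Delta_{4}$. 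Every remaining instrument-responsive stratum, namely $C_{1},C_{2},ID_{1},ID_{2}$, is a genuine cluster complier: each is moved from its $S_{0}$ field to an $S_{1}$ field, so each enters the clustered first stage -- giving $\pi_{1,0}=P(C_{1}\cup C_{2}\cup ID_{1}\cup ID_{2})$ -- and contributes to the numerator the field contrast $y^{1}-y^{0}$ (for $C_{1}$ and $ID_{2}$) or $y^{2}-y^{0}$ (for $C_{2}$ and $ID_{1}$), which I would regroup into the positively weighted average $A$. Under control clustering, $S_{1}=\{1\}$ and $S_{0}=\{0,2\}$, the roles are exchanged: now the irrelevance defiers are the ones shuttled between two fields of the common cluster $S_{0}$, become never-takers of $\tilde d$, leave the first stage, and generate the bias $\tilde\omega_{1}\tilde\Delta_{1}-\tilde\omega_{2}\tilde\Delta_{2}$, while $C_{1},C_{2},ND_{1},ND_{2}$ play the cluster-complier role in $A$ and in $\pi_{1,0}$. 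The mirror clusterings -- for instance $S_{1}=\{2\}$ in part~(a) -- follow by relabelling fields $1$ and $2$.

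With the numerator written as a sum over strata of the stratum probability times a field contrast, I would finish by checking that the weights attached to the complier contrasts are nonnegative and that the always-, never- and other-takers contribute zero to both numerator and denominator, so that $A$ is a positively weighted average of cluster-complier LATEs; this step already delivers the qualitative claim that $\tilde\beta_{1,0}^{IV}$ ceases to have a causal interpretation once the relevant defier type occurs with positive probability. Dividing through by $\pi_{1,0}$ then isolates the weights $\tilde\omega_{j}$ and the potential-outcome differences $\tilde\Delta_{j}$ in the stated form.

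I expect the only genuine difficulty to lie in the bookkeeping rather than in any single deep step: one must keep careful track of the multi-valued assignment $Z$ and of how its values are pooled when forming $\tilde z$ (so that the weights emerge exactly as claimed), classify every stratum as a cluster complier, a cluster always-taker or a cluster never-taker under each scenario of Table~\ref{tab:cluster-scenarios}, and verify that the potential choices the taxonomy leaves free never disturb these classifications. The underlying mechanism -- that clustering promotes a within-cluster field switch to a violation of the exclusion restriction -- is identical across cases, so once the treatment-clustering case is settled, the control-clustering case and the symmetric variants are routine.
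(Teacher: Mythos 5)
Your proposal follows essentially the same route as the paper's Appendix~C proof: write $\tilde{\beta}_{1,0}^{IV}$ as the clustered reduced form over the clustered first stage, show the first stage equals the cluster-complier share ($C_{1}\cup C_{2}\cup ND_{1}\cup ND_{2}$ under control clustering, $C_{1}\cup C_{2}\cup ID_{1}\cup ID_{2}$ under treatment clustering), and identify the within-cluster switchers ($ID$'s under control clustering, $ND$'s under treatment clustering) as cluster never-/always-takers whose outcomes still move with the instrument, which is exactly the exclusion violation the paper isolates; your stratum-by-stratum contrasts and signs match the stated formulas. The one loose end you flag --- how the multi-valued $Z$ is pooled into the binary $\tilde{z}$ and which sub-assignment fixes each group's baseline outcome --- is handled in the paper only by fiat (``changing outcome indices to reflect instruments relevant to the group in question''), so your plan is on equal footing with the published argument.
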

\begin{proof}
See Appendix \ref{appx:proof-violation-exclusion}.
\end{proof}
Imposing the irrelevance assumption under control clustering implies that the defier weights ($\tilde{\omega}_{1},\tilde{\omega}_{2}$) go to zero. In this case, $\tilde{\beta}_{1,0}^{IV}$ recovers a positively weighted average of the causal effect of choosing field 1 over 0 for compliers of instrument 1 and next-best defiers of instrument 2, and of choosing field 1 over 2 for compliers of instrument 2 and next-best defiers of instrument 1, weighted by the number of compliers and defiers. Under control clustering, this is the new parameter of interest.

Imposing the next-best assumption under treatment clustering implies that the defier weights ($\tilde{\omega}_{3},\tilde{\omega}_{4}$) go to zero. In this case, $\tilde{\beta}_{1,0}^{IV}$ recovers a positively weighted average of the causal effect of choosing field 1 over 0 for compliers of instrument 1 and irrelevance defiers of instrument 2, and of choosing field 2 over 0 for compliers of instrument 2 and irrelevance defiers of instrument 1, weighted by the number of compliers and defiers. Under treatment clustering, this is the new parameter of interest.

If neither irrelevance nor next-best assumptions hold, the IV estimand does not have a causal interpretation as a positively weighted average of treatment effects for the cluster complier group. The bias terms reflect that individuals may in response to changes in the cluster instrument be switching across fields in the treatment cluster and/or across fields in the control cluster. Such switches will generally involve changes in potential outcomes, yet no change in the cluster treatment status. Thus, the exclusion restriction at the cluster level will be violated. The reason for this bias is that the algorithm equates the sign of the off-diagonal coefficients with the presence and absence of irrelevance and nex-best defiers. As shown in Lemma \ref{th:identified-quantities}, this is wrong. The off-diagonal coefficients tell us only if there are more or less next-best defiers than irrelevance defiers. One cannot in general use the sign of $\alpha_{2}^{1}$ ($\alpha_{1}^{2}$) to show that there are no irrelevance defiers of instrument 1 (2) if $\alpha_{2}^{1}<0$ ($\alpha_{1}^{2}<0$) and no next-best defiers of instrument 1 (2) if $\alpha_{2}^{1}>0$ ($\alpha_{1}^{2}>0$).

It is also important to observe that the constant effects assumption is not sufficient for $\tilde{\beta}_{1,0}^{IV}$ to recover a positively weighted average of treatment effects between clusters 0 and 1 and obtain a causal interpretation. This result is summarized in Proposition \ref{th:constant-effects-viol}.\footnote{One exception to this negative result is the special case in which the number of defiers for each instrument happen to be equal, i.e. that $P(ID_{1})=P(ID_{2})$ under control clustering or $P(ND_{1})=P(ND_{2})$ under treatment clustering.}
\begin{prop}
\label{th:constant-effects-viol} Suppose Assumptions \ref{ass:excl}--\ref{ass:mono} hold and we further assume constant treatment effects.

\begin{enumerate}[label={\emph{(\alph*)}}, ref={6(\alph*)}]

\item Under control clustering, $\tilde{\beta}_{1}^{IV}$ does not recover the causal effect. If the clustering is $S_{1}=\{1\}$ and $S_{0}=\{2,0\}$, we have
\begin{align*}
\tilde{\beta}_{1,0}^{IV}\enskip & =\enskip\underbrace{\frac{P(C_{1}\cup ND_{2})}{\pi_{1,0}}\E[y^{1}-y^{0}]+\frac{P(C_{2}\cup ND_{1})}{\pi_{1,0}}\E[y^{1}-y^{2}]}_{\substack{\text{A}}
}\\[4pt]
 & \qquad+\underbrace{\frac{P(ID_{1})-P(ID_{2})}{\pi_{1,0}}}_{\substack{\dot{\omega}_{1}}
}\enskip\enskip\underbrace{\E[y^{2}-y^{0}]}_{\substack{\dot{\Delta}_{1}}
}
\end{align*}
where $\pi_{1,0}=P(C_{1}\cup C_{2}\cup ND_{1}\cup ND_{2})$. \emph{A} is a positively weighted average of the causal effects of choosing field 1 over 0 and of choosing field 1 over 2, $\dot{\omega}_{1}$ is a difference between defier group weights, and $\dot{\Delta}_{1}$ is the difference in potential outcomes for irrelevance defiers in cluster $S_{0}$, i.e. never takers of the clustered treatment. The result for the clustering $S_{1}=\{2\}$ and $S_{0}=\{1,0\}$ is symmetric.

\item Under treatment clustering, $\tilde{\beta}_{1}^{IV}$ does not recover the causal effect. We have
\begin{align*}
\tilde{\beta}_{1,0}^{IV}\enskip & =\enskip\underbrace{\frac{P(C_{1}\cup ID_{2})}{\pi_{1,0}}\E[y^{1}-y^{0}]+\frac{P(C_{2}\cup ID_{1})}{\pi_{1,0}}\E[y^{2}-y^{0}]}_{\substack{\text{A}}
}\\[4pt]
 & \qquad+\underbrace{\frac{P(ND_{1})-P(ND_{2})}{\pi_{1,0}}}_{\substack{\dot{\omega}_{2}}
}\enskip\underbrace{\E[y^{1}-y^{2}]}_{\substack{\dot{\Delta}_{2}}
}
\end{align*}
where $\pi_{1,0}=P(C_{1}\cup C_{2}\cup ID_{1}\cup ID_{2})$. \emph{A} is a positively weighted average of the causal effects of choosing field 1 over 0 and of choosing field 2 over 0, $\dot{\omega}_{2}$ is a difference between defier group weights, and $\dot{\Delta}_{2}$ is the difference in potential outcomes for irrelevance defiers in cluster $S_{1}$, i.e. always takers of the clustered treatment.

\end{enumerate}

\end{prop}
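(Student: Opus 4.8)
The plan is to obtain Proposition~\ref{th:constant-effects-viol} as a direct specialization of Proposition~\ref{th:exclusion-viol}, rather than redoing the moment-condition algebra from scratch. Constant treatment effects means every pairwise contrast $y^{j}-y^{j'}$ is the same for all individuals, so for any principal stratum $G$ one has $\E[y^{j}-y^{j'}\mid G]=\E[y^{j}-y^{j'}]$. The first-stage coefficient $\pi_{1,0}$ and all stratum probabilities $P(\cdot)$ depend only on the joint distribution of $(\mathbf{d},\mathbf{z})$, hence are untouched by this restriction; so every weight appearing in the decompositions of Proposition~\ref{th:exclusion-viol} carries over verbatim, and only the conditional expectations of effect contrasts get replaced by unconditional ones.

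For part~(a) (control clustering, $S_{1}=\{1\}$, $S_{0}=\{2,0\}$) I would start from the expression for $\tilde{\beta}_{1,0}^{IV}$ in Proposition~\ref{th:exclusion-viol}(a) and substitute $\E[y^{1}-y^{0}\mid C_{1}\cup ND_{2}]=\E[y^{1}-y^{0}]$, $\E[y^{1}-y^{2}\mid C_{2}\cup ND_{1}]=\E[y^{1}-y^{2}]$, and $\tilde{\Delta}_{1}=\tilde{\Delta}_{2}=\E[y^{2}-y^{0}]$. The two irrelevance-defier terms then share the common factor $\E[y^{2}-y^{0}]$ and collapse to $\bigl(P(ID_{1})-P(ID_{2})\bigr)\E[y^{2}-y^{0}]/\pi_{1,0}$, which is exactly $\dot{\omega}_{1}\dot{\Delta}_{1}$; the term $A$ becomes the stated positively weighted average of $\E[y^{1}-y^{0}]$ and $\E[y^{1}-y^{2}]$, whose weights $P(C_{1}\cup ND_{2})/\pi_{1,0}$ and $P(C_{2}\cup ND_{1})/\pi_{1,0}$ sum to one because $C_{1},C_{2},ND_{1},ND_{2}$ are mutually disjoint and $\pi_{1,0}=P(C_{1}\cup C_{2}\cup ND_{1}\cup ND_{2})$. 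Part~(b) is identical starting from Proposition~\ref{th:exclusion-viol}(b): under constant effects the two next-best-defier terms share the factor $\E[y^{1}-y^{2}]$ and combine into $\dot{\omega}_{2}\dot{\Delta}_{2}$, while $A$ reduces to the weighted average of $\E[y^{1}-y^{0}]$ and $\E[y^{2}-y^{0}]$. The remaining control-clustering case $S_{1}=\{2\}$, $S_{0}=\{1,0\}$ follows by the $1\leftrightarrow 2$ relabelling already used in Proposition~\ref{th:exclusion-viol}.

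To finish the \emph{statement} — that $\tilde{\beta}_{1,0}^{IV}$ does \emph{not} recover the causal effect — I would then argue that the right-hand side is generically not a single treatment-effect contrast. Using $\E[y^{1}-y^{2}]=\E[y^{1}-y^{0}]-\E[y^{2}-y^{0}]$ in (a) together with the fact that the $A$-weights sum to one, the whole expression rewrites as $\E[y^{1}-y^{0}]$ plus a multiple of $\E[y^{2}-y^{0}]$ with a coefficient that is nonzero for generic stratum sizes; so equality with the field-$1$-versus-$0$ effect holds only in the knife-edge case $\E[y^{2}-y^{0}]=0$. In particular, even when $P(ID_{1})=P(ID_{2})$ makes $\dot{\omega}_{1}=0$, the estimand remains an honest mixture of the $1$-vs-$0$ and $1$-vs-$2$ contrasts; the analogous remark covers (b). This final observation is the substantive point: the reduction of Proposition~\ref{th:exclusion-viol} to the constant-effects formulas is mechanical, and the only thing requiring care is to confirm that imposing constant effects strips out exactly the cross-stratum heterogeneity while leaving the first-stage weights — and hence the non-degenerate mixing of two distinct field contrasts — firmly in place.
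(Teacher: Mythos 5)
Your proposal is correct and follows essentially the same route as the paper: the authors' proof likewise consists of noting that constant effects turn every conditional contrast $\E[y^{j}-y^{k}\mid G]$ into the unconditional $\E[y^{j}-y^{k}]$ and then reading the result off Proposition \ref{th:exclusion-viol}, with the two irrelevance-defier (resp.\ next-best-defier) terms collapsing into the single $\dot{\omega}\dot{\Delta}$ term. Your added observation that the estimand remains a nondegenerate mixture of two distinct field contrasts even when $\dot{\omega}=0$ is a correct and slightly more explicit justification of the ``does not recover the causal effect'' claim than the paper provides.
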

\begin{proof}
The constant effects assumption reduces all conditional expectations to unconditional expectations, i.e. $\E[y^{j}-y^{k}\mid G]=\E[y^{j}-y^{k}]$ for any group $G$ and any combination of fields $j,k$. The result is immediate.
\end{proof}
In contrast, the approach of \citet{kirkeboen_field_2016} recovers the causal effect under the constant effects assumption. This shows that the clustering method relies on different, not weaker assumptions than \citet{kirkeboen_field_2016}.

The following auxiliary exclusion restriction can be made to obtain identification under the clustering approach.
\begin{assumption}
Cluster Exclusion Assumptions\label{ass:within-cluster}

\begin{enumerate}[label={\emph{(\alph*)}}, ref={3(\alph*)}]

\item \label{ass:cc excl} \textbf{\emph{Control Cluster Exclusion:}} $\tilde{d}^{1}=\tilde{d}^{0}=0\implies\tilde{y}^{0,1}=\tilde{y}^{0,0}$

\item \label{ass:tc excl} \textbf{\emph{Treatment Cluster Exclusion:}} $\tilde{d}^{1}=\tilde{d}^{0}=1\implies\tilde{y}^{1,1}=\tilde{y}^{1,0}$

\end{enumerate}
\end{assumption}
Assumptions \ref{ass:cc excl} and \ref{ass:tc excl} ensure that the bias from switchers within clusters (irrelevance defiers under control clustering and next-best defiers under treatment clustering) disappear, irrespective of the number of switchers. These assumptions are homogeneity restrictions on potential outcomes across different fields, and, thus, difficult to justify. Nevertheless, if one is willing to invoke Assumptions \ref{ass:cc excl} and \ref{ass:tc excl}, one may obtain the following identification result:
\begin{prop}
\label{th:cluster-identification} Under control clustering, suppose Assumptions \ref{ass:excl}--\ref{ass:mono} and \ref{ass:cc excl} hold. $\tilde{\beta}_{1}^{IV}$ has a causal interpretation as the positively weighted average of treatment effects for cluster compliers. If the clustering is $S_{1}=\{1\}$ and $S_{0}=\{2,0\}$, we have
\begin{align*}
\tilde{\beta}_{1,0}^{IV}\enskip & =\enskip\frac{P(C_{1}\cup ND_{2})}{\pi_{1,0}}\E[y^{1}-y^{0}\mid C_{1}\cup ND_{2}]+\frac{P(C_{2}\cup ND_{1})}{\pi_{1,0}}\E[y^{1}-y^{2}\mid C_{2}\cup ND_{1}]
\end{align*}
where $\pi_{1,0}=P(C_{1}\cup C_{2}\cup ND_{1}\cup ND_{2})$. The result for clustering $S_{1}=\{2\}$ and $S_{0}=\{1,0\}$ is symmetric.

\noindent Under treatment clustering, suppose Assumptions \ref{ass:excl}--\ref{ass:mono} and \ref{ass:tc excl} hold. $\tilde{\beta}_{1}^{IV}$ has a causal interpretation as a positively weighted average of treatment effects for cluster compliers, and
\begin{align*}
\tilde{\beta}_{1,0}^{IV}\enskip & =\enskip\frac{P(C_{1}\cup ID_{1})}{\pi_{1,0}}\E[y^{1}-y^{0}\mid C_{1}\cup ID_{1}]+\frac{P(C_{2}\cup ID_{1})}{\pi_{1,0}}\E[y^{2}-y^{0}\mid C_{2}\cup ID_{1}]
\end{align*}
where $\pi_{1,0}=P(C_{1}\cup C_{2}\cup ID_{1}\cup ID_{2})$.
\end{prop}
\begin{proof}
Assumption \ref{ass:cc excl} (\ref{ass:tc excl}) eliminates the bias terms in the results from Proposition \ref{th:exclusion-viol} by letting $\tilde{\Delta}_{1},\tilde{\Delta}_{2}$ ($\tilde{\Delta}_{3},\tilde{\Delta}_{4}$) go to zero. The result is immediate.
\end{proof}

\section{Empirical analysis\label{sec:Institutions-data}}

Guided and motivated by the formal results above, we now turn to the empirical analysis of the payoffs to field of study in Norway and Denmark.

\subsection{Institutional settings}

The Danish and Norwegian post-secondary education systems are similar in many respects. Their post-secondary education sectors consist of public universities and a larger number of public and private university colleges. The vast majority of students attend a public institution, and even the private institutions are publicly funded and regulated. Universities all offer a wide selection of fields. By comparison, the university colleges rarely offer fields like Law, Medicine, Science, or Technology, but tend to offer professional degrees in fields like Engineering, Health, Business, and Teaching. Obtaining a post-secondary degree normally requires three to five years; there are no tuition fees; most students receive financial support (in the form of grants/loans) from the state.

The admission process is centralized in both countries. Applications are submitted to a central organization that handles the admission process to universities and university colleges. An applicant ranks programs (up to 15 in Norway and 8 in Denmark), each defined by a detailed field and an institution. The number of slots for each program is effectively determined by each country\textquoteright s ministry of education. For many programs, demand exceeds supply. Most slots in programs with excess demand are filled based on an application score derived from high school GPA. Offers are determined by the applicants' application score: the highest ranked applicant receives an offer for her preferred program; the second highest applicant receives an offer for her highest ranked program among the remaining programs; and so on. This is repeated until either slots run out, or applicants run out. This allocation mechanism corresponds to a so-called serial dictatorship, which is both Pareto efficient and strategy-proof \citep{svensson1999strategy} and should therefore elicit the applicants\textquoteright{} true ranking of fields at the time of application.\footnote{A possible threat to strategy-proofness is the truncation of the application list (at 15 programs in Norway and 8 in Denmark) which might induce individuals to list a safe option as their last choice. However, this is likely unimportant in practice, as less than 0.1 percent of Norwegian applicants are offered their 15th choice, and less than 1 percent of Danish applicants list eight programs.} If students want to change field or institution, they usually need to participate in next year's admission process on equal terms with other applicants.\footnote{Most programs in Denmark also have a standby (waiting) list and the GPA threshold for the standby list is typically a little lower than the main threshold. On the application form, applicants can choose whether to apply for the standby list. Applicants admitted to the standby list are guaranteed a study place the following year, but they are not considered for any of the lower-ranked programs on their application. Appendix \ref{sec:Danish-institutions} provides a more detailed discussion.}

For both countries, the exact thresholds are unpredictable at the time of application. They are not published until after the allocation process, and variation in thresholds over time is considerable. For programs with excess demand, the admission process implies that applicants scoring above a certain threshold are much more likely to receive an offer for a program they prefer compared to applicants with the same program preferences but marginally lower application score. This gives rise to credible instruments from discontinuities that effectively randomize applicants near admission cutoffs into different programs.

As explained in greater detail in \citet{kirkeboen_field_2016}, the instruments are defined around local course rankings on students' application lists. These local rankings define the \textquotedblleft preferred\textquotedblright{} and \textquotedblleft next-best\textquotedblright{} alternatives. For example, consider two fields, A and B, with A having a higher admission cutoff than B. Consider students who rank A just above B and have an application score that is either just below or just above the admission cutoff to A. These students will have A as the preferred field and B as the next-best field, no matter if A and B are ranked at the top, in the middle or at the bottom of the list. In other words, what matters for the relevance of instrument and the definition of preferred and next-best is the local ranking at which individuals are shifted in or out of a program because the application score is slightly above or below the relevant admission cutoff.

\subsection{Data and descriptive statistics}

For each country, we combine several sources of administrative data. For Norway, we use data for all applications to post-secondary education for the years 1998--2004. For Denmark, we use data for all applications to post-secondary education for the years 1994--2002. For both countries we retain the individuals\textquoteright{} first observed application and exclude those who had a post-secondary degree at the time of application. We link these applicants to the population register and other registers to obtain background information, information on completed field, and annual earnings. In our main analysis we use data on treatment (completed field) and outcome (annual earnings) eight years after application as in \citet{kirkeboen_field_2016}, and restrict the sample to those who have completed a field within eight years from application. The measure of earnings includes wage income, income from self-employment, and transfers that replace such income like short-term sickness pay and paid parental leave (but excludes unemployment benefits). Earnings are deflated using the CPI with 2011 as base year and converted to 1,000s of US dollars using the average exchange rates for the years 2010--2016 (6.5 Norwegian and 5.9 Danish crowns per US dollar).

We aggregate detailed fields into nine broad fields of study. We essentially follow the same classification of fields as in \citet{kirkeboen_field_2016}. The only difference is that Technology now covers the integrated and more vocational/professional short and long cycle degrees at university colleges and universities and consist mostly of computer science and engineering degrees. Science corresponds to more open-ended bachelor programs in different sciences such as physics, biology and mathematics, as well as agriculture, forestry and aquaculture. For the main analysis, we retain all applicants who applied to at least two broad fields, where the most preferred field has an admission cutoff. If an applicant applied to several programs within her preferred broad field we use the lowest program cutoff as the effective cutoff to the preferred field.

While the full sample of applicants is of comparable size for the two countries, the final estimation sample is smaller in Denmark than in Norway, primarily because of fewer fields with admission restrictions in Denmark. Figure \ref{fig:Distribution-of-applicants'} shows the distributions of completed field among applicants in Norway and Denmark eight years after applying. While the distributions are similar, there are some notable differences. The share of applicants completing teaching is substantially higher in Denmark than in Norway, as is the share of applicants having completed a degree with Technology. On the flip side the shares in Science, Social Science and Humanities are larger in Norway than in Denmark.\footnote{Some of the cross-country differences may be due to differences in the classification of specific fields into the nine broad fields. For instance, one reason why the share having completed Teaching is large in Denmark is that all individuals having completed a bachelor\textquoteright s degree in social education are included in Teaching regardless of the specialization (e.g., kindergarten teacher, nursery teacher, nursery nurse, child and youth worker, support worker), while some of the specializations could alternatively be classified as Other Health if they were observed as separate educations.}

\begin{figure}
\begin{centering}
\includegraphics[width=0.8\textwidth]{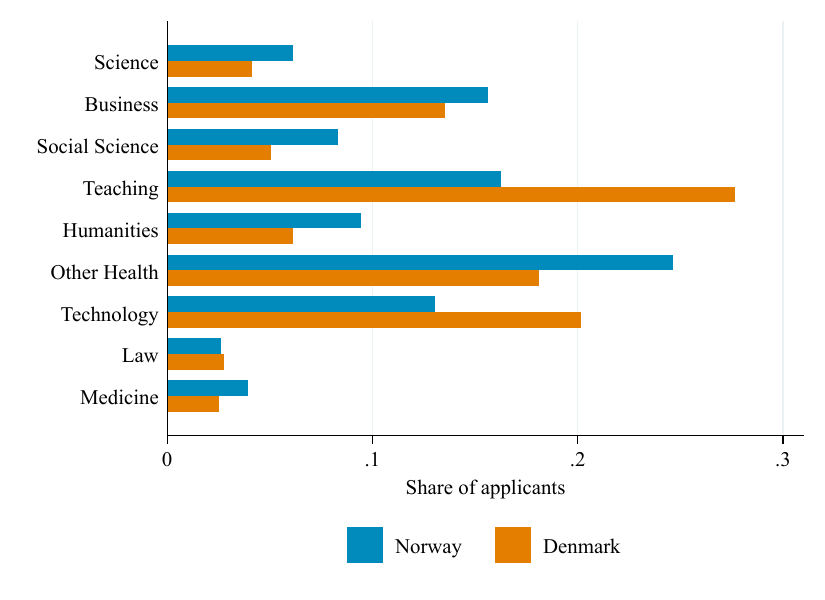}
\par\end{centering}
\begin{singlespace}
\textbf{\footnotesize{}Note:}{\footnotesize{} Figure shows number of applicants by completed field eight years after applying (conditional on having completed a field).}{\footnotesize\par}
\end{singlespace}

\caption{Distribution of applicants' completed field\label{fig:Distribution-of-applicants'}}
\end{figure}

\begin{figure}
\subfloat[GPA]{\includegraphics[width=0.5\textwidth]{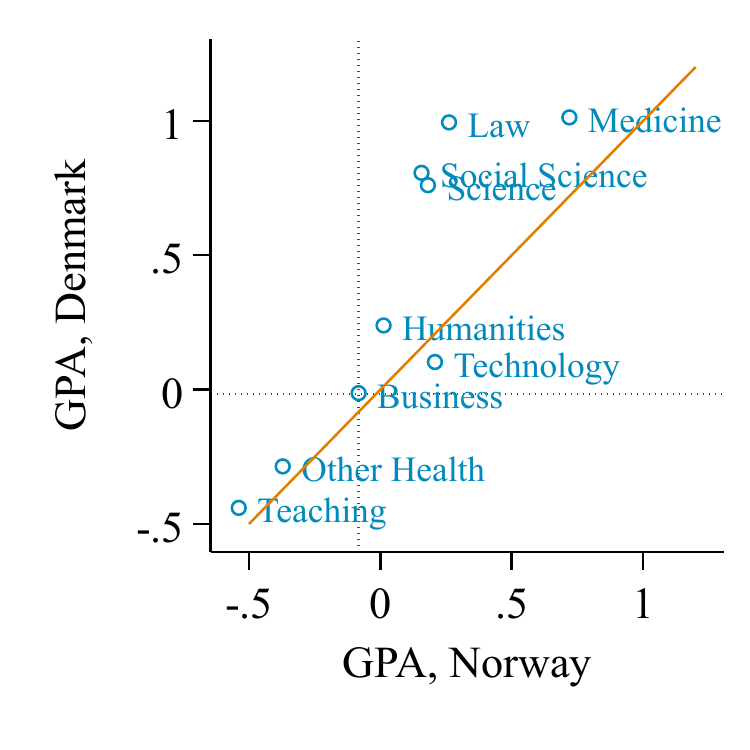}

}\subfloat[Earnings]{\includegraphics[width=0.5\textwidth]{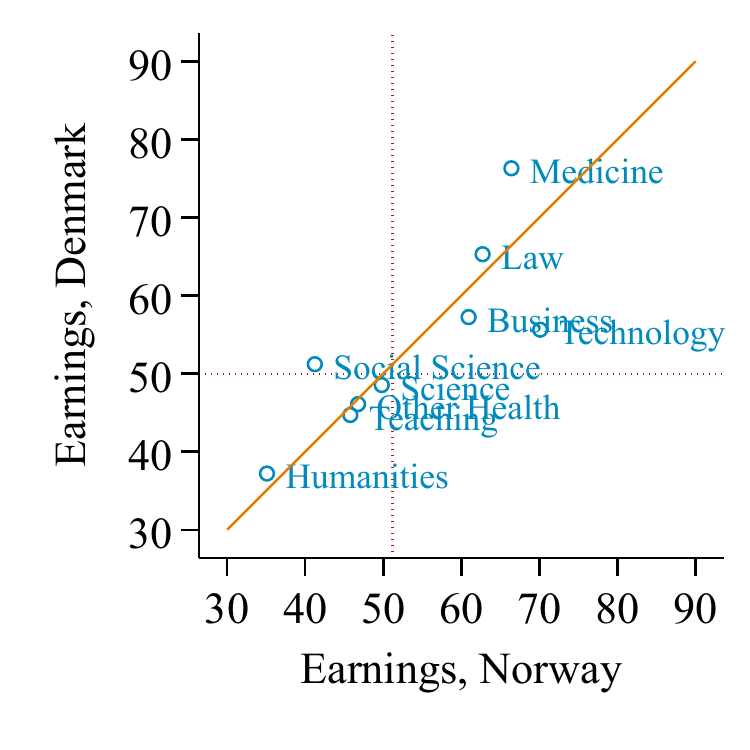}

}

\begin{singlespace}
\textbf{\footnotesize{}Note:}{\footnotesize{} Figure shows applicant-weighted average GPA and earnings. GPA is demeaned and standardized within country. Earnings are CPI adjusted and converted to USD using fixed exchange rates (see text for details) and observed eight years after applying.}{\footnotesize\par}
\end{singlespace}

\caption{Applicants' GPA and earnings in Norway and Denmark, average by country and completed field of study\label{fig:GPA-and-earnings}}
\end{figure}

As an indicator of relative selectivity, we standardize high school GPA within country and show in sub-graph (a) of Figure \ref{fig:GPA-and-earnings} the average standardized GPA by field and country. In both countries average GPA is relatively low in Teaching and Other Health. Average GPA is very high for Medicine in both countries, but Law, Social Science and Science are nearly equally selective as Medicine in Denmark.

Sub-graph (b) of Figure \ref{fig:GPA-and-earnings} compares earnings by field across country. Average earnings levels, indicated by the red dotted lines, are very similar in the two countries. Earnings in Medicine and Social Science are higher in Denmark consistent with their higher selectivity, but the same is not observed for Law and Science. Earnings are higher in Norway for Technology. In both countries, earnings are particularly low for Humanities. However, it is important to note that earnings are measured eight years after application, which is very early in the career, especially for those choosing longer programs or programs characterized by a more difficult school-to-work transition. We examine the importance of this issue in a specification check that uses earnings measured later in the working life as the outcome variable.

In Appendix Figures \ref{fig:Distribution-of-completed} and \ref{fig:GPA-and-earnings-1} we present results similar to Figures \ref{fig:Distribution-of-applicants'} and \ref{fig:GPA-and-earnings}, but not restricted to applicants. For Norway, data for Figures \ref{fig:Distribution-of-completed} and \ref{fig:GPA-and-earnings-1} consist of everybody born 1979--1983 such that we have application data for the years they are aged 19--21. Similarly, for Denmark the population sample consists of the cohorts born 1975--1981. Completed field and earnings are measured at age 28. The results for these broader populations are similar to the results for applicants in Figures \ref{fig:Distribution-of-applicants'} and \ref{fig:GPA-and-earnings}.

\section{How we estimate and compare payoffs\label{sec:Empirical-setup}}

\subsection{2SLS specification }

The identification results in Sections \ref{sec:IV-in-unordered} and \ref{sec:derivation-bias} motivate and guide the specification of the empirical model. We consider the following system of equations separately for individuals with next-best field $l$ (in the local field ranking):
\begin{align}
y & =\sum_{j\neq l}\beta_{jl}d_{j}+x'\gamma_{l}+\lambda_{l}^{k}+\epsilon_{l}\label{eq:second stage-1}\\
d_{j} & =\sum_{k\neq l}\alpha_{jl}^{k}z_{k}+x'\psi_{jl}+\eta_{jl}^{k}+u_{jl}\label{eq:firststages-1}
\end{align}
where (\ref{eq:second stage-1}) is the second stage equation, and (\ref{eq:firststages-1}) are the first-stage equations, one for each field. In these equations, $j$ denotes the completed field, $l$ denotes the stated next best alternative (in the local field ranking), and $k$ denotes the preferred field (in the local field ranking). The $l$ index is necessary in these equations, since we are now considering all possible preferred and next best fields (and not only focusing on field 0 as the stated next best alternative, as we did in the simple example in equations (\ref{eq:outcome-eq})-(\ref{eq:choice-eq-0})).

The instruments $z_{k}$ in (\ref{eq:firststages-1}) are the predicted offers for field $k$, and $z_{k}$ is therefore equal to one if $k$ is the individual's preferred field and her application score exceeds the admission cutoff for field $k$ and zero otherwise. We therefore have as many binary instruments as treatments (one $z_{k}$ for each completed field dummy $d_{j}$), and for a given individual at most one of the instruments $z_{k}$ can equal 1 (namely the one of her preferred field in the local field ranking).

Our estimation approach exploits the fuzzy regression discontinuity design implicit in the admission process described above, where individuals with application scores above the cutoff are more likely to receive an offer for their preferred field. Although the identification in this setup is ultimately local, we use 2SLS because our sample sizes do not allow for local non-parametric estimation. While the model laid out above abstracted from any control variables, we now need to include certain covariates to ensure the exogeneity of our instruments.

First, all equations include controls for the running variable. While our baseline specification controls for the application score linearly on each side of the admission cutoff, \citet{kirkeboen_field_2016} reported results from several specification checks, all of which support our main findings. Second, we control for individuals' preferences by adding fixed effects for preferring field $k$ and having $l$ as the next-best field (in the local field ranking): $\lambda_{l}^{k}$ and $\eta_{jl}^{k}$. To gain precision, we estimate the system of equations (\ref{eq:second stage-1})--(\ref{eq:firststages-1}) jointly for all completed and next-best fields, allowing for separate intercepts for preferred field and for next-best field by completed field (i.e. $\lambda_{l}^{k}=\mu^{k}+\theta_{j}$ and $\eta_{jl}^{k}=\tau_{j}^{k}+\sigma_{j}^{k}$). In a robustness check, \citet{kirkeboen_field_2016} show that their estimates are robust to allowing for separate intercepts for every interaction between preferred and next-best field. Finally, to reduce residual variance we also add controls for gender, cohort and age at application, which are pre-determined.

From the resulting 2SLS estimation of equations (\ref{eq:second stage-1})--(\ref{eq:firststages-1}) across all next-best fields, we obtain a matrix of the payoffs to field $j$ compared to $k$ for those who prefer $j$ and have $k$ as next-best field. In our baseline specification of the fields, we have 9 completed fields ($j$), 9 possible preferred fields/instruments ($k$), 8 possible next-best fields ($l$).\footnote{In both countries, the number of applicants with Medicine as next-best is very small and these are therefore omitted in our analysis. Thus, there are 9 preferred fields but only 8 next-best fields.} Because preferred field can never be the same as the next-best alternative, we get 576 (and not 648) unique first stage coefficients, $\alpha_{jl}^{k}$. Because $\sum d_{j}=1$ for each applicant, creating a within-applicant correlation between different $d_{j}$, we allow the residuals $u_{jl}$ to be clustered within applicant.

\subsection{Comparing payoff estimates\label{subsec:Comparing-estimates}}

We want to compare payoffs to field of study across two different populations:
\begin{equation}
(\beta_{jl}^{DK}-\overline{\beta}^{NO})=a_{0}+a_{1}(\beta_{jl}^{NO}-\overline{\beta}^{NO})+e_{jl}\label{eq:OLS-compare-pop}
\end{equation}
where we have re-centered the payoffs relative to the average Norwegian payoffs for interpretational convenience: it allows us to interpret the intercept $a_{0}$ as the payoff difference between Denmark and Norway at the average Norwegian payoff. The interpretation of the slope $a_{1}$ -- which quantifies the average increase in the Danish payoffs for a one unit increase in the Norwegian payoffs -- is unaffected by the centering.

There are two considerations that we need to pay attention to when taking equation (\ref{eq:OLS-compare-pop}) to the data: measurement error and across-population comparison. Unweighted estimation of (\ref{eq:OLS-compare-pop}) would assume that the estimated returns are from populations of similar size. In practice, the return estimates in the two countries will have differently sized groups, where some estimates are based on many applicants shifted by the instrument (when there are many applicants with given preferred and next-best fields and the first stage is large), while others are based on few applicants shifted (when there are less applicants in the preferred/next-best field cell or the first stage is close to zero).

To take these unequal underlying population sizes into account we will weigh our regressions with a measure of the number of applicants that are shifted. For each payoff estimate $\beta_{jl}^{c}$ in country $c$ we calculate the net number of applicants that are shifted on that margin as follows
\[
n_{jl}^{k,c}=|\alpha_{jl}^{k,c}|\cdot N_{kl}^{c}\cdot\bar{z}_{kl}^{c}
\]
where $\alpha_{jl}^{k}$ is the first-stage coefficient, $N_{kl}$ the number of applicants with preferred field $k$ and next-best field $l$, and $\bar{z}$ the share of these applicants above the cutoff. We then construct weights\footnote{It should be noted that in practice using these weights gives very similar results to using population weights $N_{kl}^{NO}+N_{kl}^{DK}$.}\textsuperscript{,}\footnote{When we study distributions of first-stage coefficients we will also  use the weights $w_{jl}$.}
\[
w_{jl}=\sum_{k}(n_{jl}^{k,NO}+n_{jl}^{k,DK})
\]

Measurement error concerns arise because rather than relating population payoffs as in (\ref{eq:OLS-compare-pop}) we will be comparing two sets of noisily estimated population payoffs:
\begin{equation}
(\hat{\beta}_{jl}^{DK}-\overline{\hat{\beta}}^{NO})=a_{0}+a_{1}(\hat{\beta}_{jl}^{NO}-\overline{\hat{\beta}}^{NO})+\tilde{e}_{jl}\label{eq:OLS-compare}
\end{equation}
It is well know that measurement error in explanatory variables results in estimation bias. Assuming classical measurement error $\hat{\beta}_{jl}^{c}=\beta_{jl}^{c}+\epsilon_{jl}^{c}$ with $\epsilon_{jl}^{c}$ i.i.d. and $\sigma_{\epsilon,c}^{2}\equiv var(\epsilon_{jl}^{c})$, we can quantify the bias as follows\footnote{Classical measurement error in the dependent variable affects the precision but not the consistency of the regression estimates.}
\[
\hat{a}_{1}=\frac{cov(\hat{\beta}_{jl}^{DK},\hat{\beta}_{jl}^{NO})}{var(\hat{\beta}_{jl}^{NO})}\rightarrow a_{1}\frac{var(\beta_{jl}^{NO})}{var(\beta_{jl}^{NO})+var(\epsilon_{jl}^{NO})}=a_{1}R_{NO}
\]
where the estimate of $a_{1}$ is attenuated by a factor $R_{NO}=1-\sigma_{\epsilon,NO}^{2}/\sigma_{\hat{\beta},NO}^{2}$ (with $\sigma_{\hat{\beta},NO}^{2}\equiv var(\hat{\beta}_{jl}^{NO})$). $R_{NO}$ quantifies the reliability of $\hat{\beta}_{jl}^{NO}$ and, provided we can estimate it, implies that we can adjust $\hat{a}_{1}$ by $1/\hat{R}_{NO}$ to recover an unbiased estimate of the true $a_{1}$.\footnote{We use the Stata command -eivreg- to perform the error-in-variable regression.} We construct an estimate of $R_{NO}$ by plugging in the variance of the payoff estimates as an estimate of $\sigma_{\hat{\beta},c}^{2}$, and using the average squared standard errors of the payoffs as an estimate of $\sigma_{\epsilon,c}^{2}$.\footnote{\citet{sullivan2001note} shows that this approach is robust to measurement error heteroskedasticity.} Finally, we can use the so-called total reliability $R_{Total}=\sqrt{R_{NO}\cdot R_{DK}}$ to construct an estimate of the correlation of the payoffs across the two countries
\[
\hat{\rho}=\rho(\hat{\beta}_{jl}^{DK},\hat{\beta}_{jl}^{NO})/\hat{R}_{Total}\rightarrow\rho\equiv\rho(\beta_{jl}^{DK},\beta_{jl}^{NO})
\]

Table \ref{tab:Descriptive-statistics-and} reports the standard deviation of the estimated payoffs, the square root of their average standard errors squared, as well as the resulting estimated reliability ratios. The first two columns report the unweighted estimates. We see that the payoff estimates vary more in Norway than in Denmark and are on average also more noisily estimated. These unweighted estimates do however not map into a population. The analysis in this paper will therefore investigate weighted results and the next two columns report the weighted reliability estimates. For shifted applicants the variability in the estimates and the average standard error is reduced, especially for the Norwegian estimates. The estimated reliability of the Norwegian payoff estimates is 0.86 compared to 0.72 for the Danish ones. Reliability is therefore high for both countries.\footnote{Using country-specific weights gives slightly higher but very similar estimates, namely a reliability of 0.89 for Norway and 0.78 for Denmark.}

\begin{table}
\begin{centering}
\caption{Descriptive statistics and reliabilities, Norwegian and Danish payoff estimates\label{tab:Descriptive-statistics-and}}
{\small{}}%
\begin{tabular*}{1\columnwidth}{@{\extracolsep{\fill}}lrrrrr}
\toprule
 & \multicolumn{2}{c}{{\small{}Unweighted}} &  & \multicolumn{2}{c}{{\small{}Weighted}}\tabularnewline
\cmidrule{2-3} \cmidrule{3-3} \cmidrule{5-6} \cmidrule{6-6}
 & {\small{}Norway} & {\small{}Denmark} &  & {\small{}Norway} & {\small{}Denmark}\tabularnewline
\midrule
{\small{}SD of payoff estimates $\hat{\beta}_{jl}^{c}$ ($\hat{\sigma}_{\hat{\beta},c}$)} & {\small{}40.4} & 20.0 &  & {\small{}31.6} & {\small{}18.6}\tabularnewline
{\small{}Square root of average $SE(\hat{\beta}_{jl}^{c})^{2}$ ($\hat{\sigma}_{\epsilon,c}$)} & {\small{}29.0} & {\small{}11.2} &  & {\small{}11.8} & {\small{}9.9}\tabularnewline
 &  &  &  &  & \tabularnewline
{\small{}Reliability ($R_{c}=1-\hat{\sigma}_{\epsilon,c}^{2}/\hat{\sigma}_{\hat{\beta},c}^{2}$)} & {\small{}0.48} & {\small{}0.71} &  & {\small{}0.86} & {\small{}0.72}\tabularnewline
 &  &  &  &  & \tabularnewline
{\small{}Total reliability ($R_{Total}=\sqrt{R_{NO}\cdot R_{DK}}$)} & \multicolumn{2}{c}{{\small{}0.58}} &  & \multicolumn{2}{c}{{\small{}0.79}}\tabularnewline
\bottomrule
\end{tabular*}{\small\par}
\par\end{centering}
\noindent\begin{minipage}[t]{1\columnwidth}%
\textbf{\footnotesize{}Note:}{\footnotesize{} See section \ref{subsec:Comparing-estimates} for the definition of the first-stage weights $\omega_{jl}$.}%
\end{minipage}
\end{table}

\section{Payoffs to fields of study\label{sec:Pay-offs}}

\subsection{Examining the violation of next-best and irrelevance}

We start by examining whether we can statistically reject the irrelevance and/or next best assumptions. As shown above, these assumptions are rejected if any of the off-diagonal first-stage coefficients are significantly different from zero. Joint tests strongly reject this null-hypothesis for both countries (cf. Tables \ref{tab:Joint-and-individual} and \ref{tab:Joint-and-individual-dk}). This implies that there are irrelevance-violators (detected by positive off-diagonal first-stage coefficients) and/or next-best-violators (detected by negative off-diagonal first-stage coefficients). We tend to detect such violation for most field of studies.

As a first indication of the relative importance of next-best vs. irrelevance violations we consider the signs of the off-diagonal coefficients that are individually significant (Tables \ref{tab:Off-diagonal-first-stages} and \ref{tab:Off-diagonal-first-stages-dk}). For Norway this reveals that few if any of the positive coefficients are individually significant, especially after adjusting for multiple testing (using the Bonferroni correction). However, a large number of the negative off-diagonal coefficients are significant, also after adjusting for multiple testing. For Norway, we therefore mostly find evidence for violations of next-best. This stands in contrast to the results for the Danish data, which are consistent with violations of the irrelevance and next-best assumptions being approximately equally frequent.

\begin{figure}
\subfloat[Norway]{\includegraphics[width=0.5\textwidth]{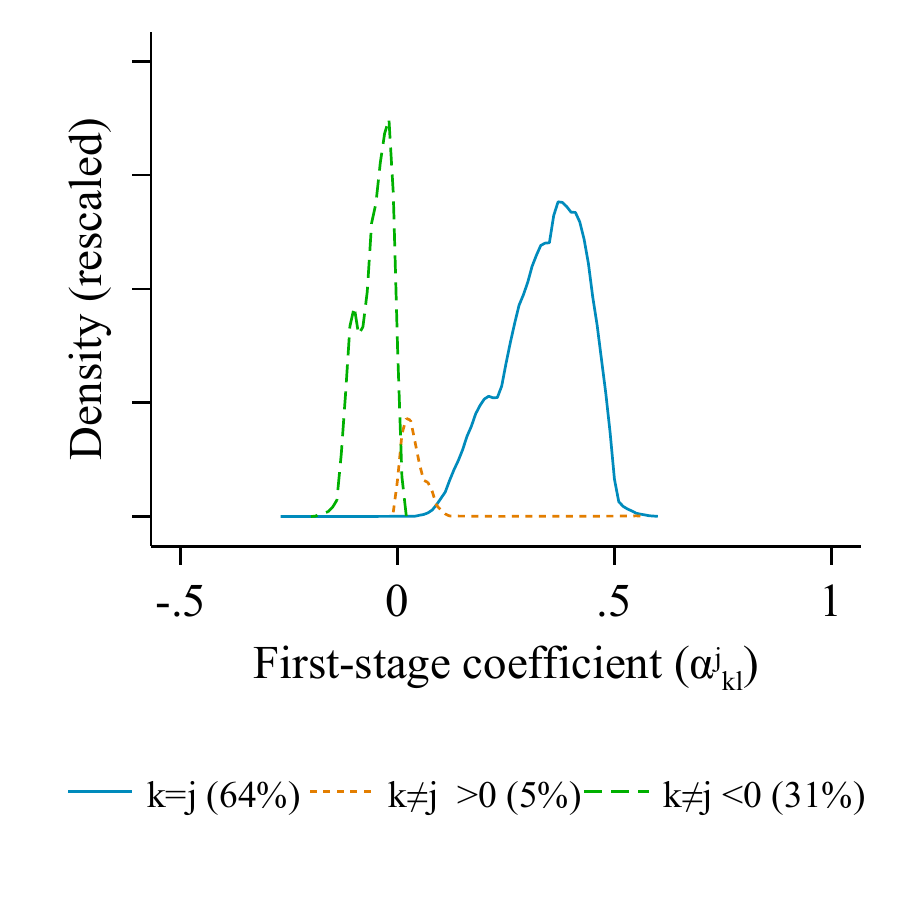}

}\subfloat[Denmark]{\includegraphics[width=0.5\textwidth]{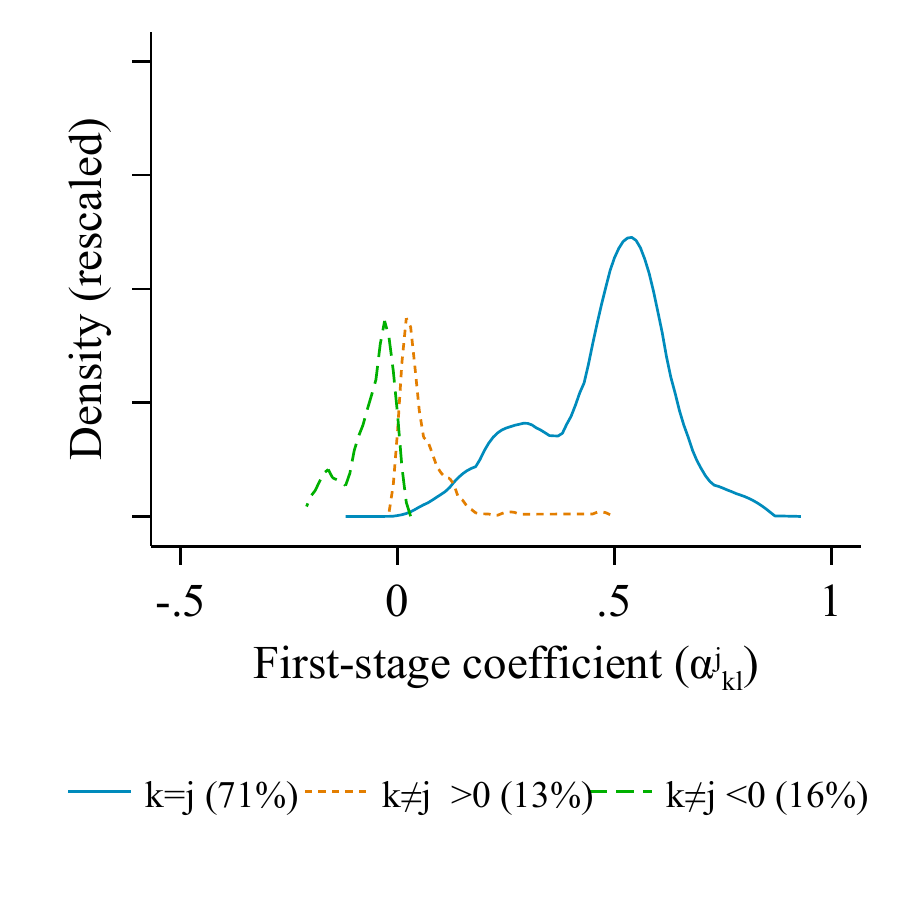}

}

\noindent\begin{minipage}[t]{1\columnwidth}%
\begin{singlespace}
\textbf{\footnotesize{}Note:}{\footnotesize{} Weighted densities that sum together to 1. First-stage coefficients $\alpha_{jl}^{k}$ where $k=j$ are ``on-diagonal''. Those with $l\neq j$ are ``off-diagonal\textquotedbl , and can be either positive (\textquotedbl >0\textquotedbl ) or negative (\textquotedbl <0\textquotedbl ). Percentage shares are indicated in parentheses. }
\end{singlespace}
\end{minipage}

\caption{Distribution of first-stage coefficients\label{fig:Distributions-FS}}
\end{figure}

With enough data any model can be rejected, no matter how minor the misspecification. We therefore gauge the empirical relevance of the violations of the irrelevance and next-best alternative conditions by quantifying the relative size of the associated applicant groups. The results are reported in Figure \ref{fig:Distributions-FS}, which shows the distribution of the relevant first-stage coefficients weighted with the number of applicants shifted and where the densities are rescaled so that they to sum to unity. The mass under each density -- reported in parenthesis in the Figure -- quantifies the relative size of the complier/defier group in question.

The left-panel of Figure \ref{fig:Distributions-FS} shows that at least 64\% of the shifted applicants in Norway are shifted at the expected (on-diagonal) margin. Of the remaining shifted applicants nearly 90\% are shifted on margins with negative coefficients. For Norway we therefore continue to find evidence for violations of next-best but not irrelevance when we take the size of the shifted applicant groups into account. The results for Denmark in the right-panel of Figure \ref{fig:Distributions-FS} show that a similar share of applicants is shifted at the diagonal. Off-diagonal the shifted applicants are however evenly distributed between positive and negative margins. This reinforces the earlier conclusion, suggesting that violations of the irrelevance and next-best assumptions are approximately equally frequent. It should be emphasized however that, depending on their sign, the (absolute values of the) off-diagonal first-stage coefficients give a lower bound on each type of violator, while the on-diagonal coefficients provide upper bounds on the compliers. We therefore conclude that in both countries the violations of irrelevance or next-best are quantitatively non-trivial, appear to be of similar magnitude, but of a different nature.

\subsection{Comparison of payoffs}

Figure \ref{fig:Distribution-of-pay-offs} reports the reliability-corrected and weighted densities for the Norwegian and Danish estimates of the payoffs of completing a field-of-study instead of the next-best.\footnote{Appendix tables \ref{tab:Payoffs No t8} and \ref{tab:Payoffs Dk t8} report the payoff estimates.} In each country, the payoffs are measured in terms of annual earnings eight years after application. On average the annual payoff in Denmark is about 2,200 USD, while in Norway the returns are substantially larger at about 22,000 USD. In addition, there is also more variation in the payoffs in Norway compared to Denmark. A joint test of equality of the payoffs across countries gives a $\chi_{64}^{2}$ statistic of 441.7 with a corresponding \emph{p}-value smaller than 0.0001. We therefore strongly reject that the payoffs are the same. In the following we investigate these differences in more detail.

\begin{figure}
\begin{centering}
\includegraphics[width=0.8\textwidth]{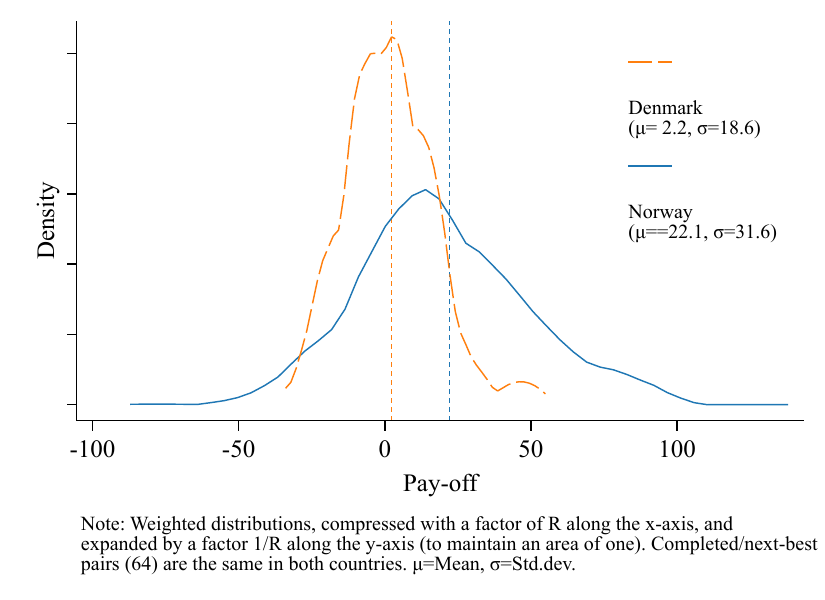}
\par\end{centering}
\caption{Distribution of payoffs by country\label{fig:Distribution-of-pay-offs}}
\end{figure}

\begin{figure}
\begin{centering}
\includegraphics[width=0.5\textwidth]{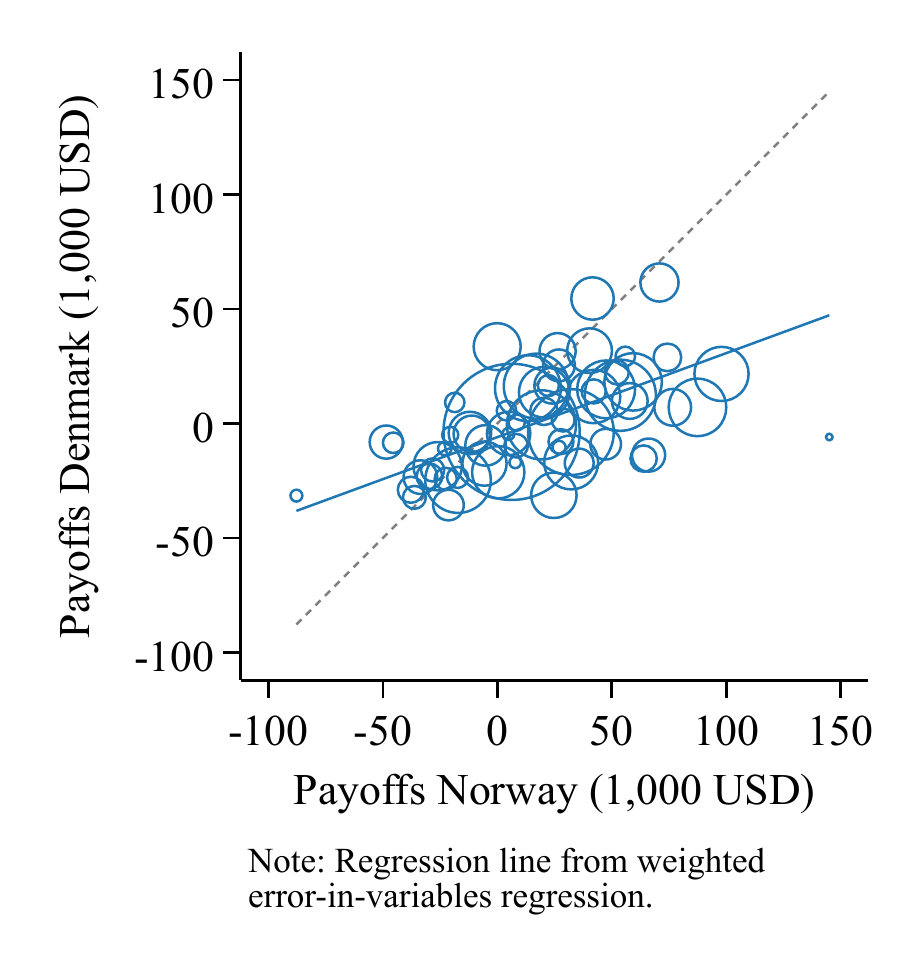}
\par\end{centering}
\caption{Payoffs in Denmark and Norway, all completed and next-best fields\label{fig:Payoffs-in-Norway}}
\end{figure}

Figure \ref{fig:Payoffs-in-Norway} starts out with comparing the Norwegian and Danish payoff estimates directly. It plots the estimates in the two countries against each other, with the size of the marker being proportional to the size of the sum of the Norwegian and Danish shifted applicant groups and, in addition to the 45-degree line, the figure also shows the regression line from the following error-in-variables regression (\ref{eq:OLS-compare}) described in section \ref{subsec:Comparing-estimates} above:

\begin{equation}
(\hat{\beta}_{jl}^{DK}-\overline{\hat{\beta}}^{NO})=a_{0}+a_{1}(\hat{\beta}_{jl}^{NO}-\overline{\hat{\beta}}^{NO})+\tilde{e}_{jl}\label{eq:OLS-compare-1}
\end{equation}
Changes in the intercept $a_{0}$ as we omit estimates with evidence of defiance shows whether the average Danish payoff become more aligned with the average Norwegian payoffs. We also report changes in the slope $a_{1}$ and the estimated correlation between the Danish and Norwegian payoffs $\rho$.

Figure \ref{fig:Regression-coefficients-and} reports the results of this exercise and shows that, consistent with the low average and lower spread of the Danish estimates in Figure \ref{fig:Distribution-of-pay-offs}, the Danish estimates increase less than one-to-one with the Norwegian estimates with an estimated slope of 0.38 (s.e. 0.07), and are on average substantially lower (the estimated payoff difference is -19.9 with a s.e. of 2.1). However, even though their levels are different, we find that the payoffs exhibit a relatively strong positive correlation of 0.65 after adjusting for measurement error.

Above we found evidence of violations of the irrelevance and next-best assumptions for both two countries. Can these violations explain the observed differences in the estimated payoffs? We investigate this question by successively removing preferred-next-best combinations with a high share of detected defiers, thus reducing the share of defiers in the sample and see how this impacts the relationships between the Norwegian and Danish payoff estimates.

We first compute, for each completed and next-best field in both countries, the share of applicants that are shifted by off-diagonal instruments. This quantifies the net-flow of irrelevance and next-best defiers at that particular margin. We then progressively drop the estimates with the largest shares of net-defiance and estimate the weighted error-in-variables regression on the resulting sub-sample of Danish payoffs on Norwegian payoffs.

\begin{figure}
\subfloat[Changes in $a_{0}$, \# estimates and shifted applicants]{\includegraphics[width=0.5\textwidth]{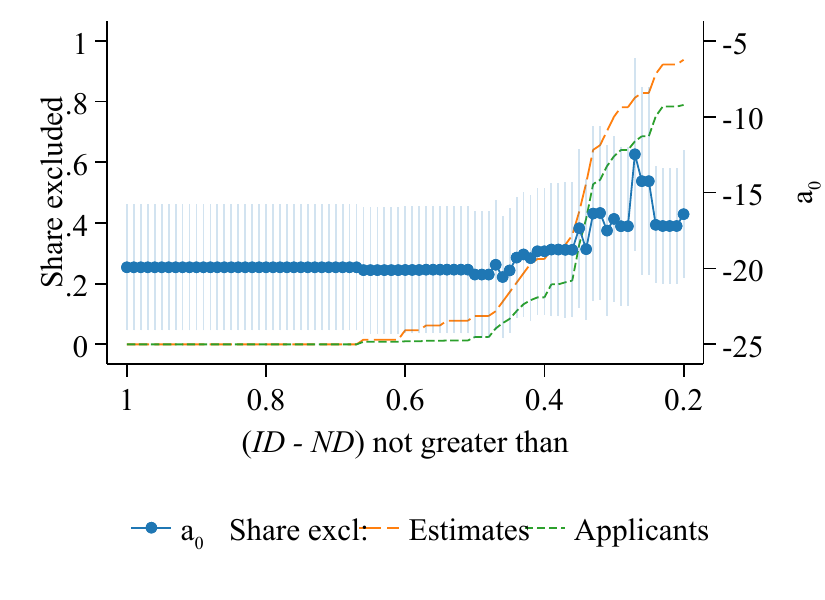}

}\subfloat[$a_{0}$ vs. exclusion of violators]{\includegraphics[width=0.5\textwidth]{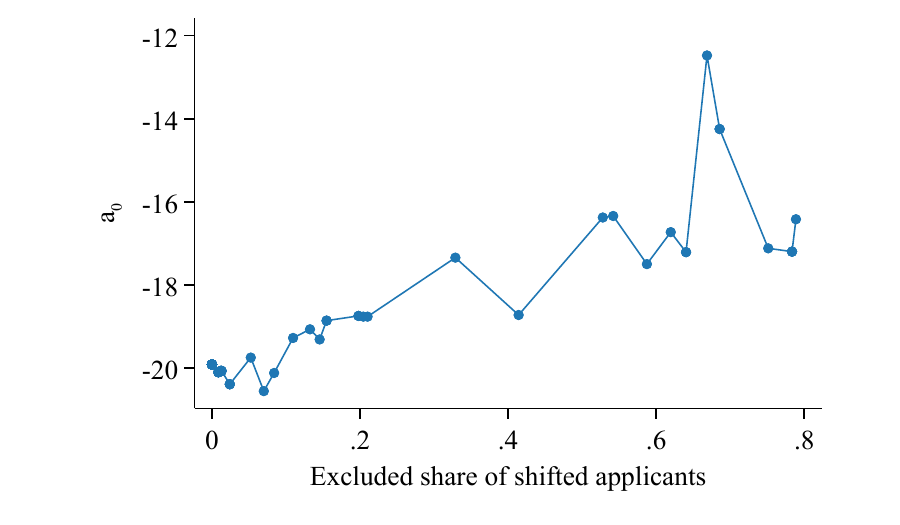}

}

\subfloat[Changes in $a_{1}$, \# estimates and shifted applicants]{\includegraphics[width=0.5\textwidth]{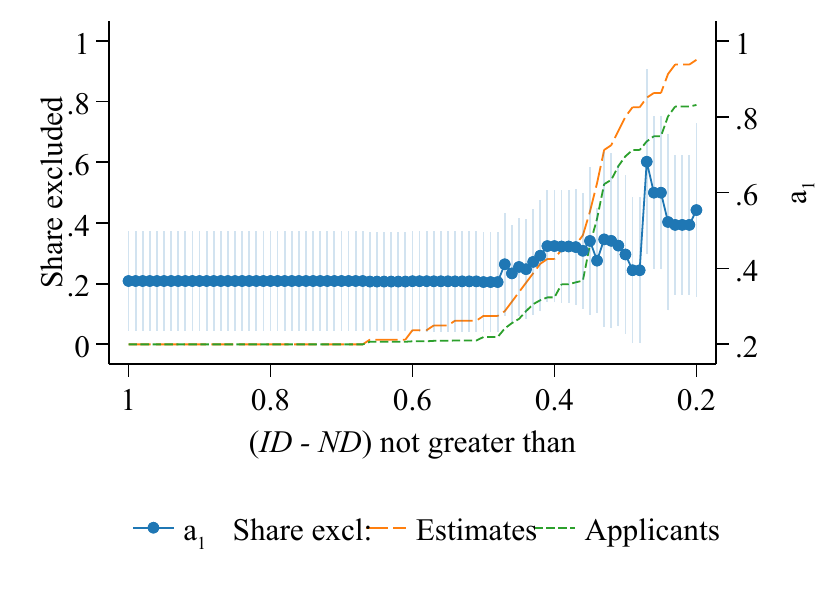}

}\subfloat[$a_{1}$ vs. exclusion of violators]{\includegraphics[width=0.5\textwidth]{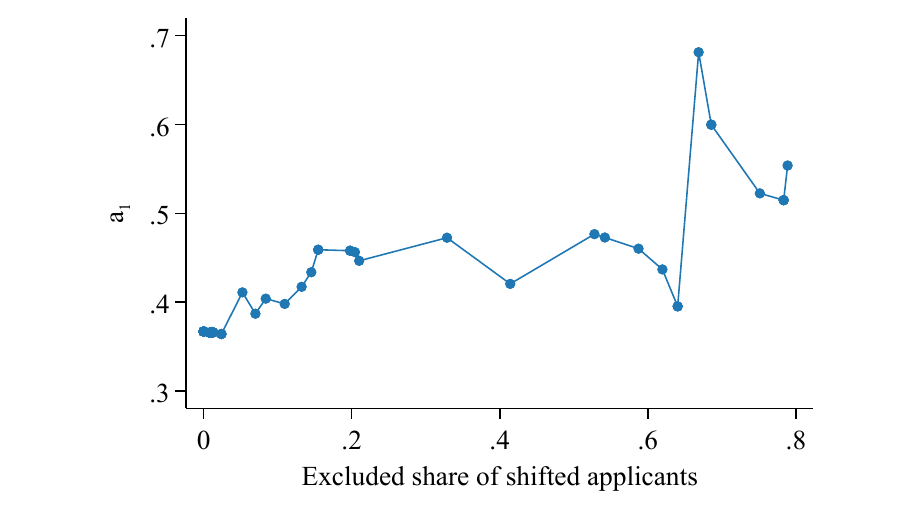}

}

\subfloat[Changes in payoff correlation $\rho$, \# estimates and shifted applicants]{\includegraphics[width=0.5\textwidth]{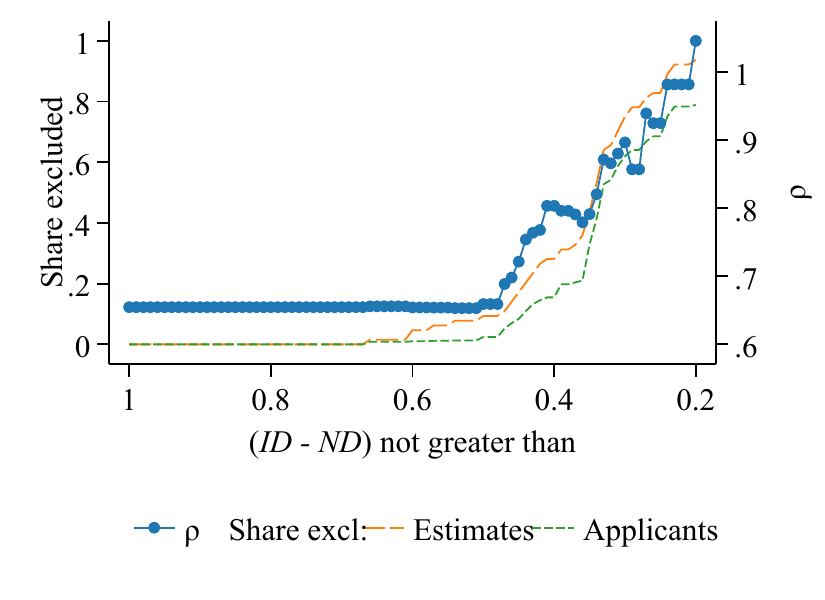}

}\subfloat[Correlation vs. exclusion of violators]{\includegraphics[width=0.5\textwidth]{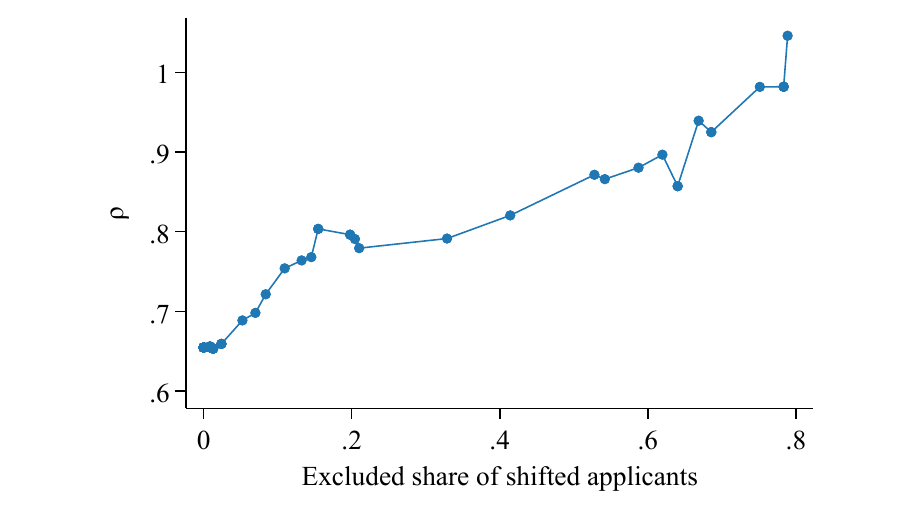}

}

\noindent\begin{minipage}[t]{1\columnwidth}%
\begin{singlespace}
{\footnotesize{}Note: Graphs show (a, b) constant term from a weighted error-in-variable-regression of Danish payoffs on Norwegian payoffs, (c, d) coefficient on Norwegian payoffs from the same regression, and (e, f) the weighted reliability-adjusted correlation between Danish and Norwegian payoff estimates. }
\end{singlespace}
\end{minipage}

\caption{Regression coefficients and correlations as a function of defiers excluded\label{fig:Regression-coefficients-and} }
\end{figure}

Sub-graph (a) shows the estimated intercepts and their confidence intervals. The x-axis in sub-graph (a) shows the maximum share of net-defiance allowed in the sample of estimates for the two countries. Reducing this share one percentage point at a time we re-estimate (\ref{eq:OLS-compare-1}). The first estimate is dropped at about 66 percent net-defiance. Then progressively more payoff estimates are excluded as we restrict the maximum share of defiers below 50 percent. We see that $a_{0}$ stays approximately constant close to -20 until we restrict the share of defiers to be below 50 percent. After this $a_{0}$ gradually increases, and reaches -17 when we restrict the sample to max 20 percent defiers.

Sub-graph (a) also shows the shares of the 64 payoff estimates and of the shifted applicants that are excluded. The pairs of completed/next-best fields that have the highest shares of net-defiers have relatively few applicants shifted on the diagonal. Restricting the maximum to 50 percent we exclude 9 percent of estimates and 2 percent of compliers. Restricting further has a stronger impact on estimates and shifted applicants retained, and when we ultimately restrict the sample to max 20 percent defiers only 4 out of 64 estimates and 21 percent of the shifted applicants are retained.

In sub-graph (b) we plot $a_{0}$ against the share of shifted applicants that are excluded. As a function of applicants excluded, $a_{0}$ rises about linearly. However, as can be seen from the confidence bands in sub-graph (a), the estimated intercepts for different samples are never significantly different.

In sub-graphs (c) and (d) we show similar results for the slope parameter $a_{1}$ from (\ref{eq:OLS-compare-1}). While $a_{1}$ increases somewhat in the beginning, it is mostly stable across the different samples. Finally, in sub-graphs (e) and (f) we show the reliability-adjusted weighted coefficient of correlation. This increases steadily with the share of compliers excluded, from 0.65 in the full sample to 1 when restricting to less than 20 percent net-defiers.

While we found above that the Norwegian and Danish payoff estimates are strongly correlated, this correlation substantially increases further when we exclude the estimates with more evidence of defiance of irrelevance and next-best. The intercept and slope from the regression (\ref{eq:OLS-compare-1}) are however relatively stable, suggesting that violations of irrelevance and next-best do not explain the lower level and variation of the payoffs in Denmark compared to Norway.

\subsection{Other explanations for differences in payoffs across the countries}

\begin{table}
\caption{Explaining payoff differences between Denmark and Norway\label{tab:other-explanation}}

\setlength{\tabcolsep}{5pt}{\small{}}%
\begin{tabular*}{1\textwidth}{@{\extracolsep{\fill}}lr@{\extracolsep{0pt}.}lr@{\extracolsep{0pt}.}lr@{\extracolsep{0pt}.}lr@{\extracolsep{0pt}.}lr@{\extracolsep{0pt}.}lr@{\extracolsep{0pt}.}lr@{\extracolsep{0pt}.}lr@{\extracolsep{0pt}.}lr@{\extracolsep{0pt}.}l}
\toprule
 & \multicolumn{12}{c}{Earnings at} & \multicolumn{2}{c}{} & \multicolumn{4}{c}{Earnings at}\tabularnewline
 & \multicolumn{12}{c}{$t=8$} & \multicolumn{2}{c}{} & \multicolumn{4}{c}{$t=13$}\tabularnewline
\cmidrule{2-19} \cmidrule{4-19} \cmidrule{6-19} \cmidrule{8-19} \cmidrule{10-19} \cmidrule{12-19} \cmidrule{14-19} \cmidrule{16-19} \cmidrule{18-19}
 & \multicolumn{2}{c}{{\small{}(1)}} & \multicolumn{2}{c}{{\small{}(2)}} & \multicolumn{2}{c}{{\small{}(3)}} & \multicolumn{2}{c}{{\small{}(4)}} & \multicolumn{2}{c}{{\small{}(5)}} & \multicolumn{2}{c}{{\small{}(6)}} & \multicolumn{2}{c}{} & \multicolumn{2}{c}{{\small{}(7)}} & \multicolumn{2}{c}{{\small{}(8)}}\tabularnewline
\midrule
{\small{}$a_{0}$} & {\small{}-19}&{\small{}92} & {\small{}-19}&{\small{}92} & {\small{}-19}&{\small{}92} & {\small{}-15}&{\small{}62} & {\small{}-20}&{\small{}39} & {\small{}-15}&{\small{}70} & \multicolumn{2}{c}{} & {\small{}-10}&{\small{}25} & {\small{}-4}&{\small{}31}\tabularnewline
 & {\small{}(2}&{\small{}12)} & {\small{}(1}&{\small{}56)} & {\small{}(1}&{\small{}71)} & {\small{}(3}&{\small{}15)} & {\small{}(2}&{\small{}01)} & {\small{}(3}&{\small{}15)} & \multicolumn{2}{c}{} & {\small{}(3}&{\small{}26)} & {\small{}(5}&{\small{}52)}\tabularnewline
{\small{}$a_{1}$} & {\small{}0}&{\small{}37} & {\small{}0}&{\small{}29} & {\small{}0}&{\small{}70} & {\small{}0}&{\small{}56} & {\small{}0}&{\small{}41} & {\small{}0}&{\small{}56} & \multicolumn{2}{c}{} & {\small{}0}&{\small{}70} & {\small{}0}&{\small{}91}\tabularnewline
 & {\small{}(0}&{\small{}07)} & {\small{}(0}&{\small{}07)} & {\small{}(0}&{\small{}13)} & {\small{}(0}&{\small{}10)} & {\small{}(0}&{\small{}08)} & {\small{}(0}&{\small{}10)} & \multicolumn{2}{c}{} & {\small{}(0}&{\small{}15)} & {\small{}(0}&{\small{}17)}\tabularnewline
 & \multicolumn{2}{c}{} & \multicolumn{2}{c}{} & \multicolumn{2}{c}{} & \multicolumn{2}{c}{} & \multicolumn{2}{c}{} & \multicolumn{2}{c}{} & \multicolumn{2}{c}{} & \multicolumn{2}{c}{} & \multicolumn{2}{c}{}\tabularnewline
{\small{}Controls $X_{jl}$} & \multicolumn{2}{c}{} & \multicolumn{2}{c}{} & \multicolumn{2}{c}{} & \multicolumn{2}{c}{} & \multicolumn{2}{c}{} & \multicolumn{2}{c}{} & \multicolumn{2}{c}{} & \multicolumn{2}{c}{} & \multicolumn{2}{c}{}\tabularnewline
{\small{}- Completed field} & \multicolumn{2}{c}{} & \multicolumn{2}{c}{{\small{}$\checkmark$}} & \multicolumn{2}{c}{} & \multicolumn{2}{c}{} & \multicolumn{2}{c}{} & \multicolumn{2}{c}{} & \multicolumn{2}{c}{} & \multicolumn{2}{c}{} & \multicolumn{2}{c}{}\tabularnewline
{\small{}- Next-best field} & \multicolumn{2}{c}{} & \multicolumn{2}{c}{} & \multicolumn{2}{c}{{\small{}$\checkmark$}} & \multicolumn{2}{c}{} & \multicolumn{2}{c}{} & \multicolumn{2}{c}{} & \multicolumn{2}{c}{} & \multicolumn{2}{c}{} & \multicolumn{2}{c}{}\tabularnewline
{\small{}- $\Delta$GPA} & \multicolumn{2}{c}{} & \multicolumn{2}{c}{} & \multicolumn{2}{c}{} & \multicolumn{2}{c}{{\small{}$\checkmark$}} & \multicolumn{2}{c}{} & \multicolumn{2}{c}{{\small{}$\checkmark$}} & \multicolumn{2}{c}{} & \multicolumn{2}{c}{} & \multicolumn{2}{c}{{\small{}$\checkmark$}}\tabularnewline
{\small{}- $\Delta$Earnings} & \multicolumn{2}{c}{} & \multicolumn{2}{c}{} & \multicolumn{2}{c}{} & \multicolumn{2}{c}{} & \multicolumn{2}{c}{{\small{}$\checkmark$}} & \multicolumn{2}{c}{{\small{}$\checkmark$}} & \multicolumn{2}{c}{} & \multicolumn{2}{c}{} & \multicolumn{2}{c}{}\tabularnewline
 & \multicolumn{2}{c}{} & \multicolumn{2}{c}{} & \multicolumn{2}{c}{} & \multicolumn{2}{c}{} & \multicolumn{2}{c}{} & \multicolumn{2}{c}{} & \multicolumn{2}{c}{} & \multicolumn{2}{c}{} & \multicolumn{2}{c}{}\tabularnewline
{\small{}$R^{2}$} & {\small{}0}&{\small{}34} & {\small{}0}&{\small{}72} & {\small{}0}&{\small{}64} & {\small{}0}&{\small{}51} & {\small{}0}&{\small{}43} & {\small{}0}&{\small{}52} & \multicolumn{2}{c}{} & {\small{}0}&{\small{}34} & {\small{}0}&{\small{}54}\tabularnewline
\bottomrule
\end{tabular*}{\small\par}

{\footnotesize{}Note: Weighted error-in-variable estimates of $(\hat{\beta}_{jl}^{DK}-\overline{\hat{\beta}}^{NO})=a_{0}+a_{1}(\hat{\beta}_{jl}^{NO}-\overline{\hat{\beta}}^{NO})+X_{jl}\gamma+\tilde{e}_{jl}$. $N=64$ for $t=8$, $N=61$ for $t=13$. Standard errors in parentheses.}{\footnotesize\par}
\end{table}

To explore other explanations for the between-country differences in payoffs, we re-estimate (\ref{eq:OLS-compare-1}) while adjusting for completed and next-best field dummies, as well as differences in average selectivity and earnings (cf. Figure \ref{fig:GPA-and-earnings}) across completed and next-best fields. Table \ref{tab:other-explanation} reports the results. The first column reproduces the basic results reported above in Figure \ref{fig:Payoffs-in-Norway} where we found that the payoff difference was about 20,000 USD, and that the payoffs in Denmark increased by less than one for each unit increase in Norway reflecting the smaller variance in the payoff distribution in Denmark.

We next investigate whether payoffs are more aligned across completed fields or across next-best fields. The next two columns of Table \ref{tab:other-explanation} therefore adjust for completed field and next-best field dummies. Keeping completed field fixed we now obtain a slope estimate of 0.29 in column (2), while keeping next-best field fixed in column (3) increases the slope substantially to 0.70. This shows that differences between next-best fields contribute more to between-country differences in payoffs than differences between completed fields.

In a next step we investigate two potential explanation of such differences. First we verify whether differential selectivity plays a role by adjusting for across country differences in average GPA in both the completed field $j$ and the next-best field $l$. This changes the interpretation of the intercept which now corresponds to the across country payoff difference keeping the average GPA the same in the completed and next-best field. The estimates in column (4) show that this reduces the payoff gap with 25\% from about -20,000 to -16,000 USD, while at the same time payoff become more evenly distributed as shown by the increase in the slope coefficient from 0.37 to 0.56. A similar exercise using average earnings in column (5) and (6) shows that this does not explain across country differences.

As noted earlier, looking at earnings eight years after application corresponds to relatively early career outcomes, especially for 5-year programs and studies that are not closely tied to a narrow set of occupations and which may therefore have longer and more complex school-to-work transitions. We therefore also compare the Danish payoff estimates 13 years after applying with the Norwegian estimates 13 years after applying. Extending the time horizon by 5 years has some impact on the estimated reliability of the Norwegian estimates which drops to 0.64, but the raw correlation between the $t=8$ and $t=13$ estimates is high (0.80).\footnote{We need to exclude three very imprecisely estimated payoffs with Law as next-best field from the $t=13$ analysis (see appendix Table \ref{tab:Payoffs No t13}) to recover a non-negative reliability estimate. } For Denmark the reliability is slightly higher at $t=13$ as is the raw correlation between the $t=8$ and $t=13$ estimates (0.86).

Figure \ref{fig:Distribution-of-pay-offs-1} shows the estimated distributions of these longer-run payoffs across the two countries. Compared to the early career payoffs, the Norwegian and Danish payoff distribution are now much more aligned in terms of location and scale. This can also be seen in column (7) of Table \ref{tab:other-explanation}. The average payoff gap between the two countries is now about 10,000 USD, and the slope coefficient has increased to 0.70.\footnote{Appendix Tables \ref{tab:Payoffs No t13} and \ref{tab:Payoffs Dk t13} report the estimates, and appendix Figure \ref{fig:Payoffs-in-Norway-1} compares the payoff estimates and reports the error-in-variables regression line.} Column (8) shows that after adjusting for differential selectivity the payoff estimates are on average aligned and we cannot reject that the intercept equals zero and the slope equals one (the corresponding F-test gives a \emph{p}-value of 0.67). However, the estimated (reliability corrected) correlation coefficient between the Norwegian and Danish payoff estimates barely moves when comparing $t=8$ vs. $t=13$ (0.66 vs 0.65).

\begin{figure}
\begin{centering}
\includegraphics[width=0.8\textwidth]{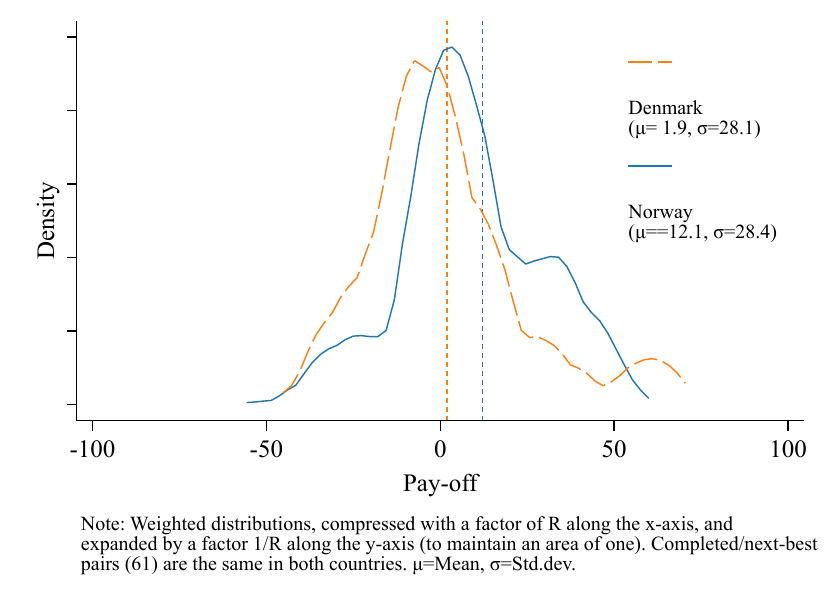}
\par\end{centering}
\caption{Distribution of longer-run payoffs by country (13 years since applying)\label{fig:Distribution-of-pay-offs-1}}
\end{figure}

To summarize, we find that payoff estimates are strongly correlated across countries but have initially different levels and dispersion. Violations of the irrelevance and next-best assumptions that underpin the empirical approach do weaken the correlation, but appear to have little consequence for the estimated level and variance differences. Over time, the level and variance difference converge across countries, but this does not affect the correlation of the payoffs. Additional exploratory analyses show that these across country differences are mostly driven by heterogeneity in next-best fields which can partly be explained by differences in selectivity.

\section{Conclusion\label{sec:Conclusion-and-summary}}

\begin{onehalfspace}
\noindent We revisited the identification argument of \citet{kirkeboen_field_2016} who showed how one may combine instruments for multiple unordered treatments with information about individuals\textquoteright{} ranking of these treatments to achieve identification while allowing for both observed and unobserved heterogeneity in treatment effects. We showed that the key assumptions underlying their identification argument have testable implications. We also provided a new characterization of the bias that may arise if these assumptions are violated. Taken together, these results allow researchers not only to test the underlying assumptions, but also to argue whether the bias from violation of these assumptions are likely to be economically meaningful.
\end{onehalfspace}

\begin{onehalfspace}
Guided and motivated by these results, we estimated and compared the earnings payoffs to post-secondary fields of study in Norway and Denmark. In each country, we applied and assessed the identification argument of \citet{kirkeboen_field_2016} to data on individuals' ranking of fields of study and field-specific instruments from discontinuities in the admission systems. We empirically examined whether and why the payoffs to fields of study differ across the two countries. We found strong cross-country correlation in the payoffs to fields of study, especially after removing fields with violations of the assumptions underlying the identification argument.
\end{onehalfspace}

While our empirical findings are specific to the context of postsecondary education in the Nordic countries, there could be lessons from our work for other settings with unordered choices. Our study highlights key challenges and possible solutions to understanding what the causal effects of these choices are. Examples can be found in observational studies that use IV to study workers\textquoteright{} selection of occupation, students' choice of education, firms\textquoteright{} decision on location, or families\textquoteright{} choice of where to live. Another example is the frequent use of IV to analyze encouragement designs in experiments where treatments are made available but take up is not universal \citep{duflo2007using}.

\bibliographystyle{apalike}
\bibliography{fieldofstudy}

\appendix

\part*{\clearpage Appendix}

\noindent
\global\long\def\thetable{A\arabic{table}}%
\setcounter{table}{0}

\global\long\def\thefigure{A\arabic{figure}}%
\setcounter{figure}{0}

\section{Proof Of Bias When Auxiliary Assumptions Fail\label{appx:no-auxillary-ass}}
\begin{proof}
We build on the notation from Section \ref{sec:assumptions-notation}. IV uses the three moment conditions:
\[
\E[\varepsilon]=0\text{,}\quad\E[\varepsilon z_{1}]=0\quad\text{and}\quad\E[\varepsilon z_{2}]=0
\]
Expressing $\varepsilon$ in terms of potential outcomes, we get:
\begin{align}
\varepsilon & =(y^{0}-\beta_{0})+(y^{1}-y^{0}-\beta_{1})d_{1}+(y^{2}-y^{0}-\beta_{2})d_{2}\label{eq:varepsilon}\\
 & =(y^{0}-\beta_{0})+(y^{1}-y^{0}-\beta_{1})(d_{1}^{0}+(d_{1}^{1}-d_{1}^{0})z_{1}+(d_{1}^{2}-d_{1}^{0})z_{2})\nonumber \\
 & \qquad+(y^{2}-y^{0}-\beta_{2})(d_{2}^{0}+(d_{2}^{1}-d_{2}^{0})z_{1}+(d_{2}^{2}-d_{2}^{0})z_{2})\nonumber
\end{align}
We substitute into the moment conditions, and solve. Under independence, we get:
\begin{align*}
\E[(y^{1}-y^{0}-\beta_{1})(d_{1}^{1}-d_{1}^{0})+(y^{2}-y^{0}-\beta_{2})(d_{2}^{1}-d_{2}^{0})] & =0\\
\E[(y^{1}-y^{0}-\beta_{1})(d_{1}^{2}-d_{1}^{0})+(y^{2}-y^{0}-\beta_{2})(d_{2}^{2}-d_{2}^{0})] & =0
\end{align*}
As shown by \citet{kirkeboen_field_2016}, this implies, for $k=1,2$, $k'=2,1$, that:
\begin{align}
 & \E[y^{k}-y^{0}-\beta_{k}\mid d_{k}^{k}-d_{k}^{0}=1,d_{k'}^{k}-d_{k'}^{0}=0]\times P[d_{k}^{k}-d_{k}^{0}=1,d_{k'}^{k}-d_{k'}^{0}=0]\label{eq:moment-cond}\\
 & +\E[(y^{k}-y^{0}-y^{k'}-y^{0})-(\beta_{k}-\beta_{k'})\mid d_{k}^{k}-d_{k}^{0}=1,d_{k'}^{k}-d_{k'}^{0}=-1]\times P[d_{k}^{k}-d_{k}^{0}=1,d_{k'}^{k}-d_{k'}^{0}=-1]\nonumber \\
 & +\E[y^{k'}-y^{0}-\beta_{k'}\mid d_{k}^{k}-d_{k}^{0}=0,d_{k'}^{k}-d_{k'}^{0}=1]\times P[d_{k}^{k}-d_{k}^{0}=0,d_{k'}^{k}-d_{k'}^{0}=1]=0%\label{eq1:instrum1}%\nonumber\nonumber{align}{align}{align}{align}{align}{align}{align}
\end{align}
where we have assumed
\[
P[d_{k}^{k}-d_{k}^{0}=-1,d_{k'}^{k}-d_{k'}^{0}=0]=P[d_{k}^{k}-d_{k}^{0}=0,d_{k'}^{k}-d_{k'}^{0}=-1]=0
\]
under monotonicity. To simplify notation, we rewrite equation \ref{eq:moment-cond} in terms of the notation from Table \ref{tab:taxonomy}:
\begin{align*}
 & \E[y^{k}-y^{0}-\beta_{k}\mid C_{k}]\times P(C_{k})\\
 & +\E[(y^{k}-y^{0}-y^{k'}-y^{0})-(\beta_{k}-\beta_{k'})\mid ND_{k}]\times P(ND_{k})\\
 & +\E[y^{k'}-y^{0}-\beta_{k'}\mid ID_{k}]\times P(ID_{k})=0
\end{align*}
We isolate $\beta_{k}$ for $k=1,2$:
\begin{align}
\beta_{k} & =\beta_{k'}\frac{P(ND_{k})-P(ID_{k})}{P(C_{k})+P(ND_{k})}+\frac{\E[y^{k}-y^{0}\mid C_{k}]P(C_{k})}{P(C_{k})+P(ND_{k})}\label{eq:solved1}\\[1em]
 & \qquad+\frac{\E[y^{k}-y^{0}-y^{k'}-y^{0}\mid ND_{k}]P(ND_{k})}{P(C_{k})+P(ND_{k})}+\frac{\E[y^{k'}-y^{0}\mid ID_{k}]P(ID_{k})}{P(C_{k})+P(ND_{k})}\nonumber
\end{align}
\end{proof}

\subsection{No Auxiliary Assumptions\label{appx:no-assumptions}}

We substitute equation (\ref{eq:solved1}) with $k=2$ into (\ref{eq:solved1}) with $k=1$ and get:
\begin{align*}
\beta_{1} & =\frac{\E[y^{1}-y^{0}\mid C_{1}]P(C_{1})}{P(C_{1})+P(ND_{1})}\\[1em]
 & \qquad+\E[y^{2}-y^{0}\mid ID_{1}]\frac{P(ID_{1})}{P(C_{1})+P(ND_{1})}+\frac{\E[y^{1}-y^{0}-y^{2}-y^{0}\mid ND_{1}]P(ND_{1})}{P(C_{1})+P(ND_{1})}\\[1em]
 & \qquad+\frac{P(ND_{1})-P(ID_{1})}{P(C_{1})+P(ND_{1})}\times\left[\frac{\E[y^{2}-y^{0}\mid C_{2}]P(C_{2})}{P(C_{2})+P(ND_{2})}+\E[y^{1}-y^{0}\mid ID_{2}]\frac{P(ID_{2})}{P(C_{2})+P(ND_{2})}\right.\\[1em]
 & \qquad\qquad\qquad\left.+\frac{\E[y^{2}-y^{0}-y^{1}-y^{0}\mid ND_{2}]P(ND_{2})}{P(C_{2})+P(ND_{2})}+\beta_{1}\frac{P(ND_{2})-P(ID_{2})}{P(C_{2})+P(ND_{2})}\right]
\end{align*}

Letting
\begin{align*}
\dot{W} & =1-\frac{(P(ND_{1})-P(ID_{1}))(P(ND_{2})-P(ID_{2}))}{(P(C_{1})+P(ND_{1}))(P(C_{2})+P(ND_{2}))}\\[1em]
 & =\frac{(P(C_{1})+P(ND_{1}))(P(C_{2})+P(ND_{2}))-(P(ND_{1})-P(ID_{1}))(P(ND_{2})-P(ID_{2}))}{(P(C_{1})+P(ND_{1}))(P(C_{2})+P(ND_{2}))}
\end{align*}
and gathering $\beta_{1}$-terms on the LHS gives:
\begin{align*}
\beta_{1}\dot{W} & =\E[y^{1}-y^{0}\mid C_{1}]\times\frac{P(C_{1})}{P(C_{1})+P(ND_{1})}\\[1em]
 & \qquad+\E[y^{2}-y^{0}\mid ID_{1}]\times\frac{P(ID_{1})}{P(C_{1})+P(ND_{1})}\\[1em]
 & \qquad+\E[y^{1}-y^{0}-y^{2}-y^{0}\mid ND_{1}]\times\frac{P(ND_{1})}{P(C_{1})+P(ND_{1})}\\[1em]
 & \qquad+\E[y^{1}-y^{0}\mid ID_{2}]\times\frac{(P(ND_{1})-P(ID_{1}))P(ID_{2})}{(P(C_{1})+P(ND_{1})(P(C_{2})+P(ND_{2}))}\\[1em]
 & \qquad+\E[y^{2}-y^{0}\mid C_{2}]\times\frac{(P(ND_{1})-P(ID_{1}))P(C_{2})}{(P(C_{1})+P(ND_{1}))(P(C_{2})+P(ND_{2}))}\\[1em]
 & \qquad+\E[y^{2}-y^{0}-y^{1}-y^{0}\mid ND_{2}]\times\frac{(P(ND_{1})-P(ID_{1}))P(ND_{2})}{(P(C_{1})+P(ND_{1}))(P(C_{2})+P(ND_{2}))}
\end{align*}
Adding and subtracting
\[
\E[y^{1}-y^{0}\mid C_{1}]\frac{P(ND_{1})}{P(C_{1})+P(ND_{1})}+\E[y^{1}-y^{0}\mid C_{1}]\frac{(P(ND_{1})-P(ID_{1}))(P(ND_{2})-P(ID_{2}))}{(P(C_{1})+P(ND_{1}))(P(C_{2})+P(ND_{2}))}
\]
on the RHS and gathering terms gives:
\begin{align*}
\beta_{1}\dot{W}=\E[y^{1}-y^{0}\mid C_{1}]\dot{W}\quad & -\quad\E[y^{1}-y^{0}\mid C_{1}]\times\frac{P(ND_{1})(P(C_{2})+P(ND_{2}))}{(P(C_{1})+P(ND_{1}))(P(C_{2})+P(ND_{2}))}\\[1em]
 & \qquad+\E[y^{1}-y^{0}\mid C_{1}]\times\frac{(P(ND_{1})-P(ID_{1}))(P(ND_{2})-P(ID_{2}))}{(P(C_{1})+P(ND_{1}))(P(C_{2})+P(ND_{2}))}\\[1em]
 & \qquad+\E[y^{2}-y^{0}\mid ID_{1}]\times\frac{P(ID_{1})(P(C_{2})+P(ND_{2}))}{(P(C_{1})+P(ND_{1}))(P(C_{2})+P(ND_{2}))}\\[1em]
 & \qquad+\E[y^{1}-y^{0}-y^{2}-y^{0}\mid ND_{1}]\times\frac{P(ND_{1})(P(C_{2})+P(ND_{2}))}{(P(C_{1})+P(ND_{1}))(P(C_{2})+P(ND_{2}))}\\[1em]
 & \qquad+\E[y^{1}-y^{0}\mid ID_{2}]\times\frac{(P(ND_{1})-P(ID_{1}))P(ID_{2})}{(P(C_{1})+P(ND_{1})(P(C_{2})+P(ND_{2}))}\\[1em]
 & \qquad+\E[y^{2}-y^{0}\mid C_{2}]\times\frac{(P(ND_{1})-P(ID_{1}))P(C_{2})}{(P(C_{1})+P(ND_{1}))(P(C_{2})+P(ND_{2}))}\\[1em]
 & \qquad+\E[y^{2}-y^{0}-y^{1}-y^{0}\mid ND_{2}]\times\frac{(P(ND_{1})-P(ID_{1}))P(ND_{2})}{(P(C_{1})+P(ND_{1}))(P(C_{2})+P(ND_{2}))}
\end{align*}
Dividing by $\dot{W}$ on both sides, and letting
\begin{align*}
\bar{W} & =(P(C_{1})+P(ND_{1}))(P(C_{2})+P(ND_{2}))-(P(ND_{1})-P(ID_{1}))(P(ND_{2})-P(ID_{2}))
\end{align*}
gives
\begin{align*}
\beta_{1}=\E[y^{1}-y^{0}\mid C_{1}]\quad & -\quad\E[y^{1}-y^{0}\mid C_{1}]\times\frac{P(ND_{1})(P(C_{2})+P(ND_{2}))}{\bar{W}}\\[4pt]
 & \qquad+\E[y^{1}-y^{0}\mid C_{1}]\times\frac{(P(ND_{1})-P(ID_{1}))(P(ND_{2})-P(ID_{2}))}{\bar{W}}\\[4pt]
 & \qquad+\E[y^{2}-y^{0}\mid ID_{1}]\times\frac{P(ID_{1})(P(C_{2})+P(ND_{2}))}{\bar{W}}\\[4pt]
 & \qquad+\E[y^{1}-y^{0}-y^{2}-y^{0}\mid ND_{1}]\times\frac{P(ND_{1})(P(C_{2})+P(ND_{2}))}{\bar{W}}\\[4pt]
 & \qquad+\E[y^{1}-y^{0}\mid ID_{2}]\times\frac{(P(ND_{1})-P(ID_{1}))P(ID_{2})}{\bar{W}}\\[4pt]
 & \qquad+\E[y^{2}-y^{0}\mid C_{2}]\times\frac{(P(ND_{1})-P(ID_{1}))P(C_{2})}{\bar{W}}\\[4pt]
 & \qquad+\E[y^{2}-y^{0}-y^{1}-y^{0}\mid ND_{2}]\times\frac{(P(ND_{1})-P(ID_{1}))P(ND_{2})}{\bar{W}}
\end{align*}

Rearranging, we get:
\begin{align}
\beta_{1}^{IV}={\E[y^{1}-y^{0}\mid C_{1}]}\quad & \quad+{\frac{P(ND_{1})P(C_{2})}{\bar{W}}}\quad\times({\E[y^{1}-y^{0}\mid ND_{1}]-\E[y^{1}-y^{0}\mid C_{1}]})\label{eq:final-exp-no-auxass}\\[4pt]
 & \quad+{\frac{P(ND_{1})P(C_{2})}{\bar{W}}}\quad\times({\E[y^{2}-y^{0}\mid C_{2}]-\E[y^{2}-y^{0}\mid ND_{1}]})\nonumber \\[4pt]
 & \quad+{\frac{P(ND_{1})P(ND_{2})}{\bar{W}}}\times({\E[y^{1}-y^{0}\mid ND_{1}]-\E[y^{1}-y^{0}\mid ND_{2}]})\nonumber \\[4pt]
 & \quad+{\frac{P(ND_{1})P(ND_{2})}{\bar{W}}}\times({\E[y^{2}-y^{0}\mid ND_{2}]-\E[y^{2}-y^{0}\mid ND_{1}]})\nonumber \\[4pt]
 & \quad+{\frac{P(ID_{1})P(ID_{2})}{\bar{W}}}\enskip\times({\E[y^{1}-y^{0}\mid C_{1}]-\E[y^{1}-y^{0}\mid ID_{2}]})\nonumber \\[4pt]
 & \quad+{\frac{P(ID_{1})P(C_{2})}{\bar{W}}}\quad\times({\E[y^{2}-y^{0}\mid ID_{1}]-\E[y^{2}-y^{0}\mid C_{2}]})\nonumber \\[4pt]
 & \quad+{\frac{P(ID_{1})P(ND_{2})}{\bar{W}}}\times({\E[y^{1}-y^{0}\mid ND_{2}]-\E[y^{1}-y^{0}\mid C_{1}]})\nonumber \\[4pt]
 & \quad+{\frac{P(ID_{1})P(ND_{2})}{\bar{W}}}\times({\E[y^{2}-y^{0}\mid ID_{1}]-\E[y^{2}-y^{0}\mid ND_{2}]})\nonumber \\[4pt]
 & \quad+{\frac{P(ND_{1})P(ID_{2})}{\bar{W}}}\times({\E[y^{1}-y^{0}\mid ID_{2}]-\E[y^{1}-y^{0}\mid C_{1}]})\nonumber
\end{align}
where we can rearrange the denominator such that
\begin{align*}
\bar{W} & =\enskip P(C_{1})P(C_{2})+P(C_{1})P(ND_{2})+P(ND_{1})P(C_{2})\\
 & \enskip+P(ND_{1})P(ID_{2})+P(ID_{1})P(ND_{2})-P(ID_{1})P(ID_{2})
\end{align*}
and the expression for $\beta_{2}^{IV}$ follows by symmetry.

\subsection{Assuming Only Next-best\label{appx:next-best}}

We now want to find an expression of the bias assuming only next-best.
\begin{proof}
Next-best ensures $P(ND_{1})=P(ND_{2})=0$. Equation \ref{eq:final-exp-no-auxass} then reduces to
\begin{align}
\beta_{1}^{IV}={\E[y^{1}-y^{0}\mid C_{1}]}\quad & \quad+{\frac{P(ID_{1})P(ID_{2})}{W'}}\times({\E[y^{1}-y^{0}\mid C_{1}]-\E[y^{1}-y^{0}\mid ID_{2}]})\label{eq:final-exp-only-sec-best}\\[4pt]
 & \quad+{\frac{P(ID_{1})P(C_{2})}{W'}}\times({\E[y^{2}-y^{0}\mid ID_{1}]-\E[y^{2}-y^{0}\mid C_{2}]})\nonumber
\end{align}
where
\begin{align*}
W' & =P(C_{1})(P(C_{2})-P(ID_{1})P(ID_{2})
\end{align*}
\end{proof}

\subsection{Assuming Only Irrelevance\label{appx:irrelevance}}

We now want to find an expression of the bias assuming only irrelevance.
\begin{proof}
Irrelevance ensures $P(ID_{1})=P(ID_{2})=0$. Equation \ref{eq:final-exp-no-auxass} then reduces to
\begin{align}
\beta_{1}^{IV}={\E[y^{1}-y^{0}\mid C_{1}]}\quad & \quad+{\frac{P(ND_{1})P(C_{2})}{\hat{W}}}\times({\E[y^{1}-y^{0}\mid ND_{1}]-\E[y^{1}-y^{0}\mid C_{1}]})\label{eq:final-exp-only-irrelevance}\\[4pt]
 & \quad+{\frac{P(ND_{1})P(C_{2})}{\hat{W}}}\times({\E[y^{2}-y^{0}\mid C_{2}]-\E[y^{2}-y^{0}\mid ND_{1}]})\nonumber \\[4pt]
 & \quad+{\frac{P(ND_{1})P(ND_{2})}{\hat{W}}}\times({\E[y^{1}-y^{0}\mid ND_{1}]-\E[y^{1}-y^{0}\mid ND_{2}]})\nonumber \\[4pt]
 & \quad+{\frac{P(ND_{1})P(ND_{2})}{\hat{W}}}\times({\E[y^{2}-y^{0}\mid ND_{2}]-\E[y^{2}-y^{0}\mid ND_{1}]})\nonumber
\end{align}
where
\begin{align*}
\hat{W} & =(P(C_{1})+P(ND_{1}))(P(C_{2})+P(ND_{2}))-P(ND_{1})P(ND_{2}))\\[2pt]
 & =P(C_{1})P(C_{2})+P(C_{1})P(ND_{2})+P(ND_{1})P(C_{2})
\end{align*}
\end{proof}
\newpage{}

\section{Proof of Testable Implications\label{appx:empirical-test}}

\subsection{First Stage Quantities}

We start by proving Proposition \ref{th:identified-quantities}
\begin{proof}
We start by introducing a richer decomposition of behavioral groups, building on Table \ref{tab:taxonomy}. This is presented in Table \ref{tab:taxonomy-detailed}.

\begin{table}
\caption{Detailed taxonomy of behavioral groups.\label{tab:taxonomy-detailed}}
{\small{}}%
\begin{tabular*}{1\columnwidth}{@{\extracolsep{\fill}}cccccccc}
\toprule
 & \multicolumn{3}{c}{\textbf{\small{}Potential Field Choice}} &  & \multicolumn{2}{c}{\textbf{\small{}Behavioral type}} & \tabularnewline
\cmidrule{2-4} \cmidrule{3-4} \cmidrule{4-4} \cmidrule{6-7} \cmidrule{7-7}
 & {\small{}$d^{0}$} & {\small{}$d^{1}$} & {\small{}$d^{2}$} &  & \textbf{\small{}$z_{1}$-stratum} & \textbf{\small{}$z_{2}$-stratum} & \tabularnewline
\midrule
 & {\small{}0} & {\small{}1} & {\small{}2} &  & {\small{}$C_{1}$} & {\small{}$C_{2}$} & \tabularnewline
 & {\small{}0} & {\small{}1} & {\small{}1} &  & {\small{}$C_{1}$} & {\small{}$ID_{2}$} & \tabularnewline
 & {\small{}0} & {\small{}1} & {\small{}0} &  & {\small{}$C_{1}$} & {\small{}$NT_{2}$} & \tabularnewline
 & {\small{}0} & {\small{}0} & {\small{}0} &  & {\small{}$NT_{1}$} & {\small{}$NT_{2}$} & \tabularnewline
 & {\small{}0} & {\small{}0} & {\small{}2} &  & {\small{}$NT_{1}$} & {\small{}$C_{2}$} & \tabularnewline
 & {\small{}2} & {\small{}2} & {\small{}2} &  & {\small{}$OT_{1}$} & {\small{}$AT_{2}$} & \tabularnewline
 & {\small{}1} & {\small{}1} & {\small{}1} &  & {\small{}$AT_{1}$} & {\small{}$OT_{2}$} & \tabularnewline
 & {\small{}1} & {\small{}1} & {\small{}2} &  & {\small{}$AT_{1}$} & {\small{}$ND_{2}$} & \tabularnewline
 & {\small{}2} & {\small{}1} & {\small{}2} &  & {\small{}$ND_{1}$} & {\small{}$AT_{2}$} & \tabularnewline
 & {\small{}0} & {\small{}2} & {\small{}2} &  & {\small{}$ID_{1}$} & {\small{}$C_{2}$} & \tabularnewline
\bottomrule
\end{tabular*}{\small\par}

\begin{singlespace}
\textbf{\footnotesize{}Note:}{\footnotesize{} The table decomposes the behavioral groups from Table \ref{tab:taxonomy} into subgroups (strata) where the $z_{1}$ stratum is the group defined by their potential field choices when changing the instrument from 0 to 1, and the $z_{2}$ stratum is correspondingly defined for an instrument change from 0 to 2. The table shows the possible behavioral responses under all states of the instrument. Note that other takers $OT_{1}$ ($OT_{2}$) refers to global always takers of field 2 (1).}{\footnotesize\par}
\end{singlespace}
\end{table}

Focusing on $k=1$, we take expectations on both sides in equation (\ref{eq:first-stage-1}). As $\E[\nu_{1}]=0$, we get:
\begin{align}
\E[d_{1}] & =\alpha_{1}^{0}+\alpha_{1}^{1}\times\E[z_{1}]+\alpha_{1}^{2}\times\E[z_{2}]\label{eq:intermediate-alphas}
\end{align}
We decompose the LHS into potential outcomes, using that $z_{0}=1-z_{1}-z_{2}$. Under independence we have:
\begin{align}
\E[d_{1}] & =\E[d_{1}^{0}]+\E[d_{1}^{1}-d_{1}^{0}]\times\E[z_{1}]+\E[d_{1}^{2}-d_{1}^{0}]\times\E[z_{2}]\label{eq:expectations-ds}
\end{align}
Using Table \ref{tab:taxonomy}, as groups are disjoint, we have
\begin{align*}
\E[d_{1}^{0}] & =P(d_{1}^{0}=1)=P(AT_{1})\\[2pt]
\E[d_{1}^{1}-d_{1}^{0}] & =P(d_{1}^{1}-d_{1}^{0}=1)=P(C_{1})+P(ND_{1})\\[2pt]
\E[d_{1}^{2}-d_{1}^{0}] & =P(d_{1}^{2}-d_{1}^{0}=1)-P(d_{1}^{2}-d_{1}^{0}=-1)=P(ID_{2})-P(ND_{2})
\end{align*}
where we in both instances have assumed monotonicity and $AT_{1}$ denotes always takers. This turns equation (\ref{eq:intermediate-alphas}) into:
\begin{align*}
 & \qquad\alpha_{1}^{0}-P(AT_{1})\\
 & +\enskip[\alpha_{1}^{1}-(P(C_{1})+P(ND_{1}))]\times\E[z_{1}]\\
 & +[\alpha_{1}^{2}-(P(ID_{2})-P(ND_{2}))]\times\E[z_{2}]=0
\end{align*}
By the rank condition (and symmetry for $k=2$), this implies:
\begin{align}
P(AT_{1}) & =\alpha_{1}^{0} & P(AT_{2}) & =\alpha_{2}^{0}\label{eq:emp-noass-1}\\
P(C_{1})+P(ND_{1}) & =\alpha_{1}^{1} & P(C_{2})+P(ND_{2}) & =\alpha_{2}^{2}\label{eq:emp-noass-2}\\
P(ID_{1})-P(ND_{1}) & =\alpha_{2}^{1} & P(ID_{2})-P(ND_{2}) & =\alpha_{1}^{2}\label{eq:emp-noass-3}
\end{align}
Since groups are disjoint we have
\begin{align}
P(C_{1})+P(AT_{1})+P(NT_{1})+P(OT_{1})+P(ID_{1})+P(ND_{1}) & =1\label{eq:emp-noass-4}\\
P(C_{2})+P(AT_{2})+P(NT_{2})+P(OT_{2})+P(ID_{2})+P(ND_{2}) & =1\label{eq:emp-noass-5}
\end{align}
By combining equation (\ref{eq:emp-noass-4}) with equations (\ref{eq:emp-noass-1})-(\ref{eq:emp-noass-3}) we get\footnote{Where we use the following $AT_{2}=OT_{1}\cup ND_{1}$ and $AT_{1}=OT_{2}\cup ND_{2}$.}
\begin{align}
P(NT_{1}) & =1-\alpha_{1}^{0}-\alpha_{2}^{0}-\alpha_{1}^{1}-\alpha_{2}^{1}\label{eq:empirical-noass-control-all-1}\\
P(NT_{2}) & =1-\alpha_{1}^{0}-\alpha_{2}^{0}-\alpha_{2}^{2}-\alpha_{1}^{2}\label{eq:empirical-noass-control-all-2}
\end{align}

\end{proof}

\subsection{Partial Identification Of Defiers}

We continue by proving Proposition \ref{th:testable-implications}
\begin{proof}
From Proposition \ref{th:identified-quantities}, we get the following information on $P(ND_{1})$:
\begin{align}
P(ND_{1})=\begin{cases}
\enskip-\alpha_{2}^{1}+P(ID_{1})\\[2pt]
\quad\alpha_{2}^{0}-P(OT_{1})\\[2pt]
\quad\alpha_{1}^{1}-P(C_{1})
\end{cases}\label{eq:allinfo-next-best}
\end{align}
where the first line follows from equation (\ref{eq:emp-noass-3}), the second from (\ref{eq:emp-noass-2}) and the third from combining equation \ref{eq:emp-noass-4} with \ref{eq:empirical-noass-control-all-1} and \ref{eq:emp-noass-1}. From equation (\ref{eq:emp-noass-3}) we know that $P(ID_{1})=\alpha_{2}^{1}+P(ND_{1})$. Combining this with the information in equation (\ref{eq:allinfo-next-best}) we have:
\begin{align*}
P(ID_{1})=\begin{cases}
\quad\alpha_{2}^{1}+P(ND_{1})\\[2pt]
\quad\alpha_{2}^{1}+\alpha_{2}^{0}-P(OT_{1})\\[2pt]
\quad\alpha_{2}^{1}+\alpha_{1}^{1}-P(C_{1})
\end{cases}
\end{align*}
This gives the following bounds on $P(ID_{1})$ and $P(ND_{1})$
\begin{align*}
P(ND_{1}) & \geq-\alpha_{2}^{1} & P(ID_{1}) & \geq\alpha_{2}^{1}\\[2pt]
P(ND_{1}) & \leq\alpha_{2}^{0} & P(ID_{1}) & \leq\alpha_{2}^{1}+\alpha_{2}^{0}\\[2pt]
P(ND_{1}) & \leq\alpha_{1}^{1} & P(ID_{1}) & \leq\alpha_{2}^{1}+\alpha_{1}^{1}
\end{align*}
where also, trivially, $P(ID_{1}),P(ND_{1})\geq0$. It follows that the bounds on $P(ID_{1})$ are:
\begin{align*}
\max\{0,-\alpha_{2}^{1}\} & \leq P(ND_{1})\leq\qquad\qquad\quad\enskip\,\min\{\alpha_{1}^{1},\alpha_{2}^{0}\}\\
\max\{0,\enskip\,\alpha_{2}^{1}\} & \leq P(ID_{1})\enskip\leq\max\{0,\alpha_{2}^{1}+\min\{\alpha_{1}^{1},\alpha_{2}^{0}\}\}
\end{align*}
and results for instrument 2 are symmetric.
\end{proof}

\subsection{Assuming Next-best}

We now prove Corollary \ref{cor:testable-next-best}.
\begin{proof}
Assuming next-best, we have $P(ND_{1})=P(ND_{2})=0$. This turns equation (\ref{eq:emp-noass-2}) into:
\begin{align*}
P(AT_{1}) & =\alpha_{1}^{0} & P(AT_{2}) & =\alpha_{2}^{0}\\
P(C_{1}) & =\alpha_{1}^{1} & P(C_{2}) & =\alpha_{2}^{2}\\
P(ID_{1}) & =\alpha_{2}^{1} & P(ID_{2}) & =\alpha_{1}^{2}
\end{align*}
\end{proof}

\subsection{Assuming Irrelevance}

Lastly, we prove Corollary \ref{cor:testable-irrelevance}
\begin{proof}
Assuming irrelevance, we have $P(ID_{1})=P(ID_{2})=0$. This turns equation (\ref{eq:emp-noass-2}) into:
\begin{align*}
P(AT_{1}) & =\alpha_{1}^{0} & P(AT_{2}) & =\alpha_{2}^{0}\\
P(C_{1}) & =\alpha_{1}^{1}+\alpha_{2}^{1} & P(C_{2}) & =\alpha_{2}^{2}+\alpha_{1}^{2}\\
P(ND_{1}) & =-\alpha_{2}^{1} & P(ND_{2}) & =-\alpha_{1}^{2}
\end{align*}
\end{proof}
\newpage{}

\section{Proof Of Violation of Exclusion Under Clustering\label{appx:proof-violation-exclusion}}

In the following, we derive an expression for the IV estimand under binary clustering, as presented in Section \ref{sec:clustering-ass-not}.

\subsection{Introduction}

As mentioned in Section \ref{sec:clustering-ass-not}, we have the binary IV estimand in our set-up as:
\[
\tilde{\beta}_{1}^{\text{IV}}=\frac{\theta_{1}}{\pi_{1}}
\]
where $\theta_{1}$ is the reduced form and $\pi_{1}$ is the first stage between when clustering treatments in two clusters, $S_{0}$ and $S_{1}$, and seeking to estimate the effect of going from the former to the latter. In the following we will derive a general expression for this estimand.

\subsubsection{First Stage}

We have the first stage given by the relation
\[
\tilde{d}=\pi_{0}+\pi_{1}\tilde{z}+\nu
\]
Taking expectations on both sides with $\E[\nu]=0$, we get
\[
\E[\tilde{d}]=\pi_{0}+\pi_{1}\times\E[\tilde{z}_{1}]
\]
Decomposing the LHS into potential outcomes using $\tilde{d}=\tilde{d}^{0}+(\tilde{d}^{1}-\tilde{d}^{0})\times\tilde{z}$ we get:
\begin{equation}
\E[\tilde{d}]=\E[\tilde{d}^{0}]+\E[\tilde{d}^{1}-\tilde{d}^{0}]\times\E[\tilde{z}]\label{eq:tsls-expectation-fs}
\end{equation}
i.e. we have
\begin{equation}
\pi_{0}+\pi_{1}\times\E[\tilde{z}]=\E[\tilde{d}^{0}]+\E[\tilde{d}^{1}-\tilde{d}^{0}]\times\E[\tilde{z}]\label{eq:tsls-condition-fs}
\end{equation}

\subsubsection{Reduced Form}

With respect to the reduced form, we have:
\[
\theta_{1}=\E[y\mid\tilde{z}=1]-\E[y\mid\tilde{z}=0]
\]
We substitute for potential outcomes with $y=\tilde{y}^{0}\times(1-\tilde{d})+\tilde{y}^{1}\times\tilde{d}$

\[
\theta_{1}=\E[\tilde{y}^{0}(1-\tilde{d})+\tilde{y}^{1}\tilde{d}\mid\tilde{z}=1]-\E[\tilde{y}^{0}(1-\tilde{d})+\tilde{y}^{1}\tilde{d}\mid\tilde{z}=0]
\]
Since we do not assume cluster-level exclusion, we need to keep potential treatments and outcomes instrument-dependent. Rearranging we get:
\begin{align*}
\theta_{1} & =\E[\tilde{y}^{0,1}\tilde{d}_{0}^{1}\mid\tilde{z}=1]+\E[\tilde{y}^{1,1}\tilde{d}_{1}^{1}\mid\tilde{z}=1]\\
 & \phantom{=}-\E[\tilde{y}^{0,0}\tilde{d}_{0}^{0}\mid\tilde{z}=0]-\E[\tilde{y}^{1,0}\tilde{d}_{1}^{0}\mid\tilde{z}=0]
\end{align*}
Rearranging, this becomes:
\begin{align}
\theta_{1}= & \E[\tilde{y}^{0,1}\mid\tilde{d}^{1}=0]P(\tilde{d}^{1}=0)+\E[\tilde{y}^{1,1}\mid\tilde{d}^{1}=1]P(\tilde{d}^{1}=1)\nonumber \\[2pt]
 & -\E[\tilde{y}^{0,0}\mid\tilde{d}^{0}=0]P(\tilde{d}^{0}=0)-\E[\tilde{y}^{1,0}\mid\tilde{d}^{0}=1]P(\tilde{d}^{0}=1)\label{eq:tsls-theta-expectation}
\end{align}
Under control clustering, we will have
\begin{align*}
\enskip S_{1}=\{1\},\enskip S_{0}=\{0,2\}\qquad\text{or}\qquad\enskip S_{1}=\{2\},\enskip S_{0}=\{0,1\}
\end{align*}
and under treatment clustering we will have
\begin{align*}
\enskip S_{1}=\{1,2\},\enskip S_{0}=\{0\}
\end{align*}
We will treat these scenarios separately, but focussing on the former control clustering scenario as these are symmetric.

\subsection{Control Clustering}

We have $\enskip S_{1}=\{1\},\enskip S_{0}=\{0,2\}$ and seek to find an expression of the first stage, reduced form and IV estimand. For brevity of notation, we use the taxonomy in Table \ref{tab:taxonomy-clustered-tsls-A} to denote complier and defier groups.

\begin{table}
\caption{Taxonomy of response groups under control clustering\label{tab:taxonomy-clustered-tsls-A}}
{\small{}}%
\begin{tabular*}{1\textwidth}{@{\extracolsep{\fill}}lccccccccc}
\toprule
\multicolumn{1}{l}{\textbf{\small{}Type}} & \multicolumn{2}{c}{\textbf{\small{}Cluster Level}} &  & \multicolumn{3}{c}{\textbf{\small{}Field Level}} &  & \multicolumn{2}{c}{\textbf{\small{}Group}}\tabularnewline
\cmidrule{2-10} \cmidrule{3-10} \cmidrule{4-10} \cmidrule{5-10} \cmidrule{6-10} \cmidrule{7-10} \cmidrule{8-10} \cmidrule{9-10} \cmidrule{10-10}
\multicolumn{1}{l}{} & {\small{}$\tilde{d}^{0}$} & \multicolumn{1}{c}{{\small{}$\tilde{d}^{1}$}} &  & \multicolumn{1}{c}{{\small{}$d^{0}$}} & \multicolumn{1}{c}{{\small{}$d^{2}$}} & \multicolumn{1}{c}{{\small{}$d^{1}$}} &  & \multicolumn{1}{c}{{\small{}Field}} & \multicolumn{1}{c}{{\small{}Cluster}}\tabularnewline
\midrule
{\small{}Compliers} & {\small{}$0$} & {\small{}$1$} &  & {\small{}$0$} & {\small{}$ $} & {\small{}$1$} &  & \multirow{4}{*}{{\small{}$\overline{C}$}} & {\small{}$C_{1}$}\tabularnewline
 & {\small{}$0$} & {\small{}$1$} &  & {\small{}$ $} & {\small{}$2$} & {\small{}$1$} &  &  & {\small{}$C_{2}$}\tabularnewline
 & {\small{}$0$} & {\small{}$1$} &  & {\small{}$2$} & {\small{}$ $} & {\small{}$1$} &  &  & {\small{}$ND_{1}$}\tabularnewline
 & {\small{}$0$} & {\small{}$1$} &  & {\small{}$ $} & {\small{}$0$} & {\small{}$1$} &  &  & {\small{}$ND_{2}$}\tabularnewline
{\small{}Never Takers} & {\small{}$0$} & {\small{}$0$} &  & {\small{}$2$} & {\small{}$ $} & {\small{}$0$} &  & \multirow{2}{*}{{\small{}$\overline{NT}$}} & {\small{}$ID_{1}$}\tabularnewline
 & {\small{}$0$} & {\small{}$0$} &  & {\small{}$ $} & {\small{}$0$} & {\small{}$2$} &  &  & {\small{}$ID_{2}$}\tabularnewline
\bottomrule
\end{tabular*}{\small\par}

\begin{singlespace}
\textbf{\footnotesize{}Note:}{\footnotesize{} The table shows potential treatments for field and cluster instruments for groups impacted by the cluster instrument under control clustering. At the field level, $d^{0}$ indicates which treatment is taken given $Z=0$,$d^{2}$ indicates which treatment is taken given $Z=2$ and $d^{1}$ indicates which treatment is taken when $Z=1$. The notation is equivalent at the cluster level. Relative to the clustered instrument, $\overline{C}$ are compliers and $\overline{NT}$ are never takers. Relative to the field instrument, $C$ are compliers, $ND$ are next-best defiers and $ID$ are irrelevance defiers, all relative to some field level instrument corresponding to a treatment in $S_{1}$.}{\footnotesize\par}
\end{singlespace}
\end{table}

\subsubsection{First Stage}

Applying the taxonomy to the expectation in equation (\ref{eq:tsls-expectation-fs}), under field level monotonicity we get:
\begin{align*}
\E[\tilde{d}^{1}-\tilde{d}^{0}] & =P[\tilde{d}^{1}-\tilde{d}^{0}=1]-P[\tilde{d}^{1}-\tilde{d}^{0}=-1]=P(\overline{C})
\end{align*}
From equation (\ref{eq:tsls-condition-fs}) we hence have by the rank condition
\[
\pi_{1,0}=P(\overline{C})
\]

\subsubsection{Reduced Form}

We use Table \ref{tab:taxonomy-clustered-tsls-A} to decompose the expectations in equation (\ref{eq:tsls-theta-expectation}). Under independence and field level monotonicity, we get:
\begin{align*}
\theta_{1} & \enskip=\enskip\E[\tilde{y}^{0,1}\mid\overline{NT}]\times P(\overline{NT})\\
 & \quad+\E[\tilde{y}^{1,1}\mid\overline{C}]\times P(\overline{C})\\
 & \quad-\E[\tilde{y}^{0,0}\mid\overline{C}\cup\overline{NT}]\times P(\overline{C}\cup\overline{NT})\\
\intertext{\text{Since sets are disjoint, we can rearrange:}}\theta_{1} & \enskip=\enskip\E[\tilde{y}^{1,1}-\tilde{y}^{0,0}\mid\overline{C}]\times P(\overline{C})\\
 & \quad-\E[\tilde{y}^{0,1}-\tilde{y}^{0,0}\mid\overline{NT}]\times P(\overline{NT})
\end{align*}
Using Table \ref{tab:taxonomy-clustered-tsls-A} to turn cluster level groups into field level groups, changing outcome indices to reflect instruments relevant to the group in question, we get:
\begin{align*}
\theta_{1} & \enskip=\enskip\E[y^{1,1}-y^{0,0}\mid{C}_{1}]\times P({C}_{1})\\
 & \quad+\E[y^{1,1}-y^{2,2}\mid{C}_{2}]\times P({C}_{2})\\
 & \quad+\E[y^{1,1}-y^{2,0}\mid{ND}_{1}]\times P({ND}_{1})\\
 & \quad+\E[y^{1,1}-y^{0,2}\mid{ND}_{2}]\times P({ND}_{2})\\
 & \quad-\E[y^{0,1}-y^{2,2}\mid{ID}_{1}]\times P({ID}_{1})\\
 & \quad-\E[y^{2,1}-y^{0,0}\mid{ID}_{2}]\times P({ID}_{2})\intertext{\text{At the field level, we assume exclusion, hence:}}\theta_{1} & \enskip=\enskip\E[y^{1}-y^{0}\mid{C}_{1}]\times P({C}_{1})\\
 & \quad+\E[y^{1}-y^{2}\mid{C}_{2}]\times P({C}_{2})\\
 & \quad+\E[y^{1}-y^{2}\mid{ND}_{1}]\times P({ND}_{1})\\
 & \quad+\E[y^{1}-y^{0}\mid{ND}_{2}]\times P({ND}_{2})\\
 & \quad+\E[y^{2}-y^{0}\mid{ID}_{1}]\times P({ID}_{1})\\
 & \quad-\E[y^{2}-y^{0}\mid{ID}_{2}]\times P({ID}_{2})
\end{align*}
We divide by the first stage and rearrange. This gives us:
\begin{align*}
\tilde{\beta}_{1}^{\text{IV}}= & \enskip\frac{P(C_{1})}{\pi_{1,0}}\enskip\underbrace{\E[y^{1}-y^{0}\mid C_{1}]}_{\substack{\text{A}}
}\enskip+\enskip\frac{P(C_{2})}{\pi_{1,0}}\enskip\underbrace{\E[y^{1}-y^{2}\mid C_{2}]}_{\substack{\text{A}}
}\\[4pt]
 & \enskip+\frac{P(ND_{1})}{\pi_{1,0}}\enskip\underbrace{\E[y^{1}-y^{2}\mid ND_{1}]}_{\substack{\text{A}}
}\enskip+\enskip\frac{P(ND_{2})}{\pi_{1,0}}\enskip\underbrace{\E[y^{1}-y^{0}\mid ND_{2}]}_{\substack{\text{A}}
}\\[4pt]
 & \enskip+\frac{P(ID_{1})}{\pi_{1,0}}\enskip\underbrace{\E[y^{2}-y^{0}\mid ID_{1}]}_{\substack{\text{B}}
}\enskip-\enskip\frac{P(ID_{2})}{\pi_{1,0}}\enskip\underbrace{\E[y^{2}-y^{0}\mid ID_{2}]}_{\substack{\text{B}}
}
\end{align*}
where
\begin{align*}
\pi_{1,0}=P(C_{1}\cup C_{2}\cup ND_{1}\cup ND_{2})
\end{align*}
This can be rewritten as:
\begin{align*}
\tilde{\beta}_{1,0}^{IV}\enskip & =\enskip{\frac{P(C_{1}\cup ND_{2})}{\pi_{1,0}}\E[y^{1}-y^{0}\mid C_{1}\cup ND_{2}]+\frac{P(C_{2}\cup ND_{1})}{\pi_{1,0}}\E[y^{1}-y^{2}\mid C_{2}\cup ND_{1}]}\\[4pt]
 & \qquad+{\frac{P(ID_{1})}{\pi_{1,0}}\enskip\enskip\E[y^{2}-y^{0}\mid ID_{1}]-\frac{P(ID_{2})}{\pi_{1,0}}\enskip\enskip\E[y^{2}-y^{0}\mid ID_{2}]}
\end{align*}

\subsection{Treatment Clustering}

We have $\enskip S_{1}=\{1,2\},\enskip S_{0}=\{0\}$ and seek to find an expression of the first stage, reduced form and IV estimand. We use the taxonomy in Table \ref{tab:taxonomy-clustered-tsls-B} to denote complier and defier groups.

\begin{table}
\caption{Taxonomy of response groups under treatment clustering.\label{tab:taxonomy-clustered-tsls-B}}
{\small{}}%
\begin{tabular*}{1\textwidth}{@{\extracolsep{\fill}}lccccccccrc}
\toprule
\multicolumn{1}{l}{\textbf{\small{}Type}} & \multicolumn{2}{c}{\textbf{\small{}Cluster Level}} &  & \multicolumn{3}{c}{\textbf{\small{}Field Level}} &  & \multicolumn{3}{c}{\textbf{\small{}Group}}\tabularnewline
\cmidrule{2-3} \cmidrule{3-3} \cmidrule{5-7} \cmidrule{6-7} \cmidrule{7-7} \cmidrule{9-11} \cmidrule{10-11} \cmidrule{11-11}
\multicolumn{1}{l}{} & \multicolumn{1}{c}{{\small{}$\tilde{d}^{0}$}} & \multicolumn{1}{c}{{\small{}$\tilde{d}^{1}$}} &  & \multicolumn{1}{c}{{\small{}$d^{0}$}} & \multicolumn{1}{c}{{\small{}$d^{1}$}} & \multicolumn{1}{c}{{\small{}$d^{2}$}} &  & {\small{}Field} &  & \multicolumn{1}{c}{{\small{}Cluster}}\tabularnewline
\midrule
{\small{}Compliers} & {\small{}$0$} & {\small{}$1$} &  & {\small{}$0$} & {\small{}$ $} & {\small{}$2$} &  & \multirow{4}{*}{{\small{}$\overline{C}$}} & \multirow{4}{*}{{\small{}$\left\{ \vphantom{\begin{array}{c}
\\
\\
\\
\\
\end{array}}\right.$}} & {\small{}$C_{1}$}\tabularnewline
 & {\small{}$0$} & {\small{}$1$} &  & {\small{}$0$} & {\small{}$1$} & {\small{}$ $} &  &  &  & {\small{}$C_{2}$}\tabularnewline
 & {\small{}$0$} & {\small{}$1$} &  & {\small{}$0$} & {\small{}$ $} & {\small{}$1$} &  &  &  & {\small{}$ID_{1}$}\tabularnewline
 & {\small{}$0$} & {\small{}$1$} &  & {\small{}$0$} & {\small{}$2$} & {\small{}$ $} &  &  &  & {\small{}$ID_{2}$}\tabularnewline
{\small{}Always Takers} & {\small{}$1$} & {\small{}$1$} &  & {\small{}$1$} & {\small{}$ $} & {\small{}$2$} &  & \multirow{2}{*}{{\small{}$\overline{AT}$}} & \multirow{2}{*}{{\small{}$\left\{ \vphantom{\begin{array}{c}
\\
\\
\end{array}}\right.$}} & {\small{}$ND_{1}$}\tabularnewline
 & {\small{}$1$} & {\small{}$1$} &  & {\small{}$2$} & {\small{}$1$} & {\small{}$ $} &  &  &  & {\small{}$ND_{2}$}\tabularnewline
\bottomrule
\end{tabular*}{\small\par}

\smallskip{}

\begin{singlespace}
\textbf{\footnotesize{}Note:}{\footnotesize{} The table shows potential treatments for field and cluster instruments for groups impacted by the cluster instrument under treatment clustering. At the field level, $d^{0}$ indicates which treatment is taken given $Z=0$, $d^{1}$ indicates which treatment is taken when $Z=1$ and $d^{2}$ indicates which treatment is taken given $Z=2$. The notation is equivalent at the cluster level. Relative to the clustered instrument, $\overline{C}$ are compliers and $\overline{AT}$ are always takers. Relative to the field instrument, $C$ are compliers, $ID$ are irrelevance defiers and $ND$ are next-best defiers.}{\footnotesize\par}
\end{singlespace}
\end{table}

\subsubsection{First Stage}

Applying the taxonomy to the expectation in equation (\ref{eq:tsls-expectation-fs}), under field level monotonicity we get:
\begin{align*}
\E[\tilde{d}^{1}-\tilde{d}^{0}] & =P[\tilde{d}^{1}-\tilde{d}^{0}=1]-P[\tilde{d}^{1}-\tilde{d}^{0}=-1]=P(\overline{C})
\end{align*}
From equation (\ref{eq:tsls-condition-fs}) we hence have by the rank condition
\[
\pi_{1,0}=P(\overline{C})
\]

\subsubsection{Reduced Form}

We use Table \ref{tab:taxonomy-clustered-tsls-B} to decompose the expectations in equation (\ref{eq:tsls-theta-expectation}). Under independence and field level monotonicity, we get:
\begin{align*}
\theta_{1} & \enskip=\enskip\E[\tilde{y}^{1,1}\mid\overline{C}]\times P(\overline{C})\\
 & \quad+\E[\tilde{y}^{1,1}\mid\overline{AT}]\times P(\overline{AT})\\
 & \quad-\E[\tilde{y}^{0,0}\mid\overline{C}]\times P(\overline{C})\\
 & \quad-\E[\tilde{y}^{1,0}\mid\overline{AT}]\times P(\overline{AT})\intertext{\text{This rearranges to:}}\theta_{1} & \enskip=\enskip\E[\tilde{y}^{1,1}-\tilde{y}^{0,0}\mid\overline{C}]\times P(\overline{C})\\
 & \quad-\E[\tilde{y}^{0,1}-\tilde{y}^{0,0}\mid\overline{AT}]\times P(\overline{AT})
\end{align*}
Using Table \ref{tab:taxonomy-clustered-tsls-A} to turn cluster level groups into field level groups, further using that groups are disjoint, and changing outcome indices to reflect instruments relevant to the group in question, we get:
\begin{align*}
\theta_{1} & \enskip=\enskip\E[y^{2,2}-y^{0,0}\mid{C}_{1}]\times P({C}_{1})\\
 & \quad+\E[y^{1,1}-y^{0,0}\mid{C}_{2}]\times P({C}_{2})\\
 & \quad+\E[y^{1,2}-y^{0,0}\mid{ID}_{1}]\times P({ID}_{1})\\
 & \quad+\E[y^{2,1}-y^{0,0}\mid{ID}_{2}]\times P({ID}_{2})\\
 & \quad-\E[y^{2,2}-y^{1,0}\mid{ND}_{1}]\times P({ND}_{1})\\
 & \quad-\E[y^{1,1}-y^{2,0}\mid{ND}_{2}]\times P({ND}_{2})\intertext{\text{At the field level, we assume exclusion, hence:}}\theta_{1} & \enskip=\enskip\E[y^{2}-y^{0}\mid{C}_{1}]\times P({C}_{1})\\
 & \quad+\E[y^{1}-y^{0}\mid{C}_{2}]\times P({C}_{2})\\
 & \quad+\E[y^{1}-y^{0}\mid{ID}_{1}]\times P({ID}_{1})\\
 & \quad+\E[y^{2}-y^{0}\mid{ID}_{2}]\times P({ID}_{2})\\
 & \quad-\E[y^{2}-y^{1}\mid{ND}_{1}]\times P({ND}_{1})\\
 & \quad-\E[y^{1}-y^{2}\mid{ND}_{2}]\times P({ND}_{2})
\end{align*}

We divide by the first stage and rearrange. This gives us:
\begin{align*}
\tilde{\beta}_{1}^{\text{IV}}\enskip & =\enskip\frac{P(C_{1})}{\pi_{1,0}}\enskip{\E[y^{2}-y^{0}\mid C_{1}]}\enskip+\enskip\frac{P(C_{2})}{\pi_{1,0}}\enskip{\E[y^{1}-y^{0}\mid C_{2}]}\\[4pt]
 & \enskip+\frac{P(ID_{1})}{\pi_{1,0}}\enskip{\E[y^{1}-y^{0}\mid ID_{1}]}\enskip+\enskip\frac{P(ID_{2})}{\pi_{1,0}}\enskip{\E[y^{2}-y^{0}\mid ID_{2}]}\\[4pt]
 & \enskip+\frac{P(ND_{1})}{\pi_{1,0}}\enskip{\E[y^{1}-y^{2}\mid ND_{1}]}\enskip-\enskip\frac{P(ND_{2})}{\pi_{1,0}}\enskip{\E[y^{1}-y^{2}\mid ND_{2}]}
\end{align*}
where
\begin{align*}
\pi_{1,0}=P(C_{1}\cup C_{2}\cup ID_{1}\cup ID_{2})
\end{align*}
This may be rewritten to
\begin{align*}
\tilde{\beta}_{1,0}^{IV}\enskip & =\enskip\frac{P(C_{1}\cup ID_{1})}{\pi_{1,0}}\E[y^{1}-y^{0}\mid C_{1}\cup ID_{1}]+\frac{P(C_{2}\cup ID_{1})}{\pi_{1,0}}\E[y^{2}-y^{0}\mid C_{2}\cup ID_{1}]\\[4pt]
 & \qquad+\frac{P(ND_{1})}{\pi_{1,0}}\enskip\E[y^{1}-y^{2}\mid ND_{1}]-\frac{P(ND_{2})}{\pi_{1,0}}\enskip\E[y^{1}-y^{2}\mid ND_{2}]
\end{align*}
\clearpage

\section{Examining violations of next-best and irrelevance and payoffs}

\begin{table}[H]
\caption{Joint test of irrelevance and next-best by completed field, Norway\label{tab:Joint-and-individual}}

{\small{}}%
\begin{tabular*}{1\columnwidth}{@{\extracolsep{\fill}}llr@{\extracolsep{0pt}.}lr@{\extracolsep{0pt}.}lr@{\extracolsep{0pt}.}llccc}
\toprule
 &  & \multicolumn{6}{c}{{\small{}on-diagonal ($\alpha_{jl}^{j}$)}} &  & \multicolumn{3}{c}{{\small{}off-diagonal ($\alpha_{jl}^{k},j\notin\{k,l\}$)}}\tabularnewline
{\small{}Completed field} &  & \multicolumn{2}{c}{{\small{}first stages}} & \multicolumn{2}{c}{{\small{}F-statistic}} & \multicolumn{2}{c}{{\small{}p-value}} &  & {\small{}first stages} & {\small{}F-statistic} & {\small{}p-value}\tabularnewline
\midrule
{\small{}Science} &  & \multicolumn{2}{c}{{\small{}7}} & {\small{}28}&{\small{}2} & {\small{}<0}&{\small{}01} &  & {\small{}49} & {\small{}4.7} & {\small{}<0.01}\tabularnewline
{\small{}Business} &  & \multicolumn{2}{c}{{\small{}7}} & {\small{}42}&{\small{}7} & {\small{}<0}&{\small{}01} &  & {\small{}49} & {\small{}7.6} & {\small{}<0.01}\tabularnewline
{\small{}Social Science} &  & \multicolumn{2}{c}{{\small{}7}} & {\small{}26}&{\small{}2} & {\small{}<0}&{\small{}01} &  & {\small{}49} & {\small{}8.3} & {\small{}<0.01}\tabularnewline
{\small{}Teaching} &  & \multicolumn{2}{c}{{\small{}7}} & {\small{}124}&{\small{}0} & {\small{}<0}&{\small{}01} &  & {\small{}49} & {\small{}5.2} & {\small{}<0.01}\tabularnewline
{\small{}Humanities} &  & \multicolumn{2}{c}{{\small{}7}} & {\small{}20}&{\small{}6} & {\small{}<0}&{\small{}01} &  & {\small{}49} & {\small{}5.9} & {\small{}<0.01}\tabularnewline
{\small{}Other Health} &  & \multicolumn{2}{c}{{\small{}7}} & {\small{}420}&{\small{}8} & {\small{}<0}&{\small{}01} &  & {\small{}49} & {\small{}4.7} & {\small{}<0.01}\tabularnewline
{\small{}Technology} &  & \multicolumn{2}{c}{{\small{}7}} & {\small{}50}&{\small{}6} & {\small{}<0}&{\small{}01} &  & {\small{}49} & {\small{}4.3} & {\small{}<0.01}\tabularnewline
{\small{}Law} &  & \multicolumn{2}{c}{{\small{}7}} & {\small{}94}&{\small{}9} & {\small{}<0}&{\small{}01} &  & {\small{}49} & {\small{}4.9} & {\small{}<0.01}\tabularnewline
{\small{}Medicine} &  & \multicolumn{2}{c}{{\small{}8}} & {\small{}93}&{\small{}6} & {\small{}<0}&{\small{}01} &  & {\small{}56} & {\small{}5.7} & {\small{}<0.01}\tabularnewline
 &  & \multicolumn{2}{c}{} & \multicolumn{2}{c}{} & \multicolumn{2}{c}{} &  &  &  & \tabularnewline
\emph{\small{}All} &  & \multicolumn{2}{c}{\emph{\small{}64}} & \emph{\small{}100}&\emph{\small{}65} & \emph{\small{}<0}&\emph{\small{}01} &  & \emph{\small{}448} & \emph{\small{}10.57} & \emph{\small{}<0.01}\tabularnewline
\bottomrule
\end{tabular*}{\small\par}
\end{table}

\begin{table}[H]
\caption{Off-diagonal first stages by completed field, sign and significance, Norway\label{tab:Off-diagonal-first-stages}}

{\small{}}%
\begin{tabular*}{1\columnwidth}{@{\extracolsep{\fill}}llr@{\extracolsep{0pt}.}lr@{\extracolsep{0pt}.}lr@{\extracolsep{0pt}.}lr@{\extracolsep{0pt}.}lr@{\extracolsep{0pt}.}lr@{\extracolsep{0pt}.}lr@{\extracolsep{0pt}.}l}
\toprule
 &  & \multicolumn{14}{c}{{\small{}off-diagonal, \# firsts stages that are}}\tabularnewline
\cmidrule{3-16} \cmidrule{5-16} \cmidrule{7-16} \cmidrule{9-16} \cmidrule{11-16} \cmidrule{13-16} \cmidrule{15-16}
 &  & \multicolumn{6}{c}{{\small{}>0}} & \multicolumn{2}{c}{} & \multicolumn{6}{c}{{\small{}<0}}\tabularnewline
\cmidrule{3-8} \cmidrule{5-8} \cmidrule{7-8} \cmidrule{11-16} \cmidrule{13-16} \cmidrule{15-16}
{\small{}Completed} &  & \multicolumn{2}{c}{} & \multicolumn{2}{c}{} & \multicolumn{2}{c}{{\small{}Multi}} & \multicolumn{2}{c}{} & \multicolumn{2}{c}{} & \multicolumn{2}{c}{} & \multicolumn{2}{c}{{\small{}Multi}}\tabularnewline
{\small{}field} &  & \multicolumn{2}{c}{{\small{}All}} & {\small{}Sign}& & {\small{}sign}& & \multicolumn{2}{c}{} & \multicolumn{2}{c}{{\small{}All}} & {\small{}Sign}& & {\small{}sign}&\tabularnewline
\midrule
{\small{}Science} &  & \multicolumn{2}{c}{{\small{}18}} & \multicolumn{2}{c}{{\small{}2}} & \multicolumn{2}{c}{} & \multicolumn{2}{c}{} & \multicolumn{2}{c}{{\small{}31}} & \multicolumn{2}{c}{{\small{}17}} & \multicolumn{2}{c}{{\small{}6}}\tabularnewline
{\small{}Business} &  & \multicolumn{2}{c}{{\small{}5}} & \multicolumn{2}{c}{} & \multicolumn{2}{c}{} & \multicolumn{2}{c}{} & \multicolumn{2}{c}{{\small{}44}} & \multicolumn{2}{c}{{\small{}21}} & \multicolumn{2}{c}{{\small{}8}}\tabularnewline
{\small{}Social Science} &  & \multicolumn{2}{c}{{\small{}4}} & \multicolumn{2}{c}{{\small{}1}} & \multicolumn{2}{c}{} & \multicolumn{2}{c}{} & \multicolumn{2}{c}{{\small{}45}} & \multicolumn{2}{c}{{\small{}24}} & \multicolumn{2}{c}{{\small{}12}}\tabularnewline
{\small{}Teaching} &  & \multicolumn{2}{c}{{\small{}37}} & \multicolumn{2}{c}{{\small{}13}} & \multicolumn{2}{c}{} & \multicolumn{2}{c}{} & \multicolumn{2}{c}{{\small{}12}} & \multicolumn{2}{c}{{\small{}6}} & \multicolumn{2}{c}{{\small{}5}}\tabularnewline
{\small{}Humanities} &  & \multicolumn{2}{c}{{\small{}12}} & \multicolumn{2}{c}{{\small{}3}} & \multicolumn{2}{c}{{\small{}1}} & \multicolumn{2}{c}{} & \multicolumn{2}{c}{{\small{}37}} & \multicolumn{2}{c}{{\small{}15}} & \multicolumn{2}{c}{{\small{}8}}\tabularnewline
{\small{}Other Health} &  & \multicolumn{2}{c}{{\small{}30}} & \multicolumn{2}{c}{{\small{}9}} & \multicolumn{2}{c}{} & \multicolumn{2}{c}{} & \multicolumn{2}{c}{{\small{}19}} & \multicolumn{2}{c}{{\small{}9}} & \multicolumn{2}{c}{{\small{}6}}\tabularnewline
{\small{}Technology} &  & \multicolumn{2}{c}{{\small{}11}} & \multicolumn{2}{c}{{\small{}1}} & \multicolumn{2}{c}{} & \multicolumn{2}{c}{} & \multicolumn{2}{c}{{\small{}38}} & \multicolumn{2}{c}{{\small{}14}} & \multicolumn{2}{c}{{\small{}5}}\tabularnewline
{\small{}Law} &  & \multicolumn{2}{c}{{\small{}11}} & \multicolumn{2}{c}{} & \multicolumn{2}{c}{} & \multicolumn{2}{c}{} & \multicolumn{2}{c}{{\small{}38}} & \multicolumn{2}{c}{{\small{}18}} & \multicolumn{2}{c}{{\small{}4}}\tabularnewline
{\small{}Medicine} &  & \multicolumn{2}{c}{{\small{}18}} & \multicolumn{2}{c}{{\small{}1}} & \multicolumn{2}{c}{} & \multicolumn{2}{c}{} & \multicolumn{2}{c}{{\small{}38}} & \multicolumn{2}{c}{{\small{}20}} & \multicolumn{2}{c}{{\small{}7}}\tabularnewline
 &  & \multicolumn{2}{c}{} & \multicolumn{2}{c}{} & \multicolumn{2}{c}{} & \multicolumn{2}{c}{} & \multicolumn{2}{c}{} & \multicolumn{2}{c}{} & \multicolumn{2}{c}{}\tabularnewline
\emph{\small{}All} &  & \multicolumn{2}{c}{\emph{\small{}146}} & \multicolumn{2}{c}{\emph{\small{}30}} & \multicolumn{2}{c}{\emph{\small{}1}} & \multicolumn{2}{c}{} & \multicolumn{2}{c}{\emph{\small{}302}} & \multicolumn{2}{c}{\emph{\small{}144}} & \multicolumn{2}{c}{\emph{\small{}61}}\tabularnewline
\bottomrule
\end{tabular*}{\small\par}

\noindent\begin{minipage}[t]{1\columnwidth}%
{\footnotesize{}Note: Significant are first stages with $p<.05$, multi test-significant are first stages with $p<.05/512$ (i.e. significant with a Bonferroni adjustment for multiple testing).}%
\end{minipage}
\end{table}

\begin{table}
\caption{Joint test of irrelevance and next-best by completed field, Denmark\label{tab:Joint-and-individual-dk}}

{\small{}}%
\begin{tabular*}{1\columnwidth}{@{\extracolsep{\fill}}lr@{\extracolsep{0pt}.}lr@{\extracolsep{0pt}.}lr@{\extracolsep{0pt}.}llccr@{\extracolsep{0pt}.}l}
\toprule
 & \multicolumn{6}{c}{{\small{}on-diagonal ($\alpha_{jl}^{j}$ )}} &  & \multicolumn{4}{c}{{\small{}off-diagonal ($\alpha_{jl}^{k},j\notin\{k,l\}$)}}\tabularnewline
\cmidrule{2-7} \cmidrule{4-7} \cmidrule{6-7} \cmidrule{9-12} \cmidrule{10-12} \cmidrule{11-12}
{\small{}Completed field} & \multicolumn{2}{c}{{\small{}first stages}} & \multicolumn{2}{c}{{\small{}F-statistic}} & \multicolumn{2}{c}{{\small{}p-value}} &  & {\small{}first stages} & {\small{}F-statistic} & \multicolumn{2}{c}{{\small{}p-value}}\tabularnewline
\midrule
{\small{}Science} & \multicolumn{2}{c}{{\small{}7}} & {\small{}95}&{\small{}3} & {\small{}<0}&{\small{}01} &  & {\small{}49} & {\small{}2.1} & {\small{}<0}&{\small{}01}\tabularnewline
{\small{}Business} & \multicolumn{2}{c}{{\small{}7}} & {\small{}2}&{\small{}8} & {\small{}<0}&{\small{}01} &  & {\small{}49} & {\small{}1.9} & {\small{}<0}&{\small{}01}\tabularnewline
{\small{}Social Science} & \multicolumn{2}{c}{{\small{}7}} & {\small{}111}&{\small{}5} & {\small{}<0}&{\small{}01} &  & {\small{}49} & {\small{}3.9} & {\small{}<0}&{\small{}01}\tabularnewline
{\small{}Teaching} & \multicolumn{2}{c}{{\small{}7}} & {\small{}12}&{\small{}7} & {\small{}<0}&{\small{}01} &  & {\small{}49} & {\small{}2.6} & {\small{}<0}&{\small{}01}\tabularnewline
{\small{}Humanities} & \multicolumn{2}{c}{{\small{}7}} & {\small{}33}&{\small{}7} & {\small{}<0}&{\small{}01} &  & {\small{}49} & {\small{}1.8} & {\small{}<0}&{\small{}01}\tabularnewline
{\small{}Other Health} & \multicolumn{2}{c}{{\small{}7}} & {\small{}61}&{\small{}5} & {\small{}<0}&{\small{}01} &  & {\small{}49} & {\small{}2.8} & {\small{}<0}&{\small{}01}\tabularnewline
{\small{}Technology} & \multicolumn{2}{c}{{\small{}7}} & {\small{}38}&{\small{}2} & {\small{}<0}&{\small{}01} &  & {\small{}49} & {\small{}2.2} & {\small{}<0}&{\small{}01}\tabularnewline
{\small{}Law} & \multicolumn{2}{c}{{\small{}7}} & {\small{}63}&{\small{}3} & {\small{}<0}&{\small{}01} &  & {\small{}49} & {\small{}1.4} & {\small{}0}&{\small{}04}\tabularnewline
{\small{}Medicine} & \multicolumn{2}{c}{{\small{}8}} & {\small{}98}&{\small{}8} & {\small{}<0}&{\small{}01} &  & {\small{}56} & {\small{}1.6} & {\small{}<0}&{\small{}01}\tabularnewline
 & \multicolumn{2}{c}{} & \multicolumn{2}{c}{} & \multicolumn{2}{c}{} &  &  &  & \multicolumn{2}{c}{}\tabularnewline
\emph{\small{}All} & \multicolumn{2}{c}{\emph{\small{}64}} & \emph{\small{}57}&\emph{\small{}9} & \emph{\small{}<0}&\emph{\small{}01} &  & \emph{\small{}448} & \emph{\small{}3.3} & \emph{\small{}<0}&\emph{\small{}01}\tabularnewline
\bottomrule
\end{tabular*}{\small\par}
\end{table}

\begin{table}
\caption{Off-diagonal first stages by completed field, sign and significance, Denmark\label{tab:Off-diagonal-first-stages-dk}}

{\small{}}%
\begin{tabular*}{1\columnwidth}{@{\extracolsep{\fill}}lr@{\extracolsep{0pt}.}lr@{\extracolsep{0pt}.}lr@{\extracolsep{0pt}.}lr@{\extracolsep{0pt}.}lr@{\extracolsep{0pt}.}lr@{\extracolsep{0pt}.}lr@{\extracolsep{0pt}.}l}
\toprule
 & \multicolumn{14}{c}{{\small{}off-diagonal, \# firsts stages that are}}\tabularnewline
\cmidrule{2-15} \cmidrule{4-15} \cmidrule{6-15} \cmidrule{8-15} \cmidrule{10-15} \cmidrule{12-15} \cmidrule{14-15}
 & \multicolumn{6}{c}{{\small{}>0}} & \multicolumn{2}{c}{} & \multicolumn{6}{c}{{\small{}<0}}\tabularnewline
\cmidrule{2-7} \cmidrule{4-7} \cmidrule{6-7} \cmidrule{10-15} \cmidrule{12-15} \cmidrule{14-15}
 & \multicolumn{2}{c}{} & \multicolumn{2}{c}{} & \multicolumn{2}{c}{{\small{}Multi}} & \multicolumn{2}{c}{} & \multicolumn{2}{c}{} & \multicolumn{2}{c}{} & \multicolumn{2}{c}{{\small{}Multi}}\tabularnewline
{\small{}Completed field} & \multicolumn{2}{c}{{\small{}All}} & {\small{}Sign}& & {\small{}sign}& & \multicolumn{2}{c}{} & \multicolumn{2}{c}{{\small{}All}} & {\small{}Sign}& & {\small{}sign}&\tabularnewline
\midrule
{\small{}Science} & \multicolumn{2}{c}{{\small{}26}} & \multicolumn{2}{c}{{\small{}7}} & \multicolumn{2}{c}{} & \multicolumn{2}{c}{} & \multicolumn{2}{c}{{\small{}23}} & \multicolumn{2}{c}{{\small{}8}} & \multicolumn{2}{c}{{\small{}1}}\tabularnewline
{\small{}Business} & \multicolumn{2}{c}{{\small{}23}} & \multicolumn{2}{c}{{\small{}5}} & \multicolumn{2}{c}{{\small{}1}} & \multicolumn{2}{c}{} & \multicolumn{2}{c}{{\small{}26}} & \multicolumn{2}{c}{{\small{}6}} & \multicolumn{2}{c}{{\small{}1}}\tabularnewline
{\small{}Social Science} & \multicolumn{2}{c}{{\small{}29}} & \multicolumn{2}{c}{{\small{}8}} & \multicolumn{2}{c}{{\small{}2}} & \multicolumn{2}{c}{} & \multicolumn{2}{c}{{\small{}20}} & \multicolumn{2}{c}{{\small{}10}} & \multicolumn{2}{c}{{\small{}5}}\tabularnewline
{\small{}Teaching} & \multicolumn{2}{c}{{\small{}30}} & \multicolumn{2}{c}{{\small{}4}} & \multicolumn{2}{c}{} & \multicolumn{2}{c}{} & \multicolumn{2}{c}{{\small{}19}} & \multicolumn{2}{c}{{\small{}4}} & \multicolumn{2}{c}{{\small{}1}}\tabularnewline
{\small{}Humanities} & \multicolumn{2}{c}{{\small{}32}} & \multicolumn{2}{c}{{\small{}8}} & \multicolumn{2}{c}{} & \multicolumn{2}{c}{} & \multicolumn{2}{c}{{\small{}17}} & \multicolumn{2}{c}{{\small{}2}} & \multicolumn{2}{c}{{\small{}            }}\tabularnewline
{\small{}Other Health} & \multicolumn{2}{c}{{\small{}30}} & \multicolumn{2}{c}{{\small{}7}} & \multicolumn{2}{c}{{\small{}3}} & \multicolumn{2}{c}{} & \multicolumn{2}{c}{{\small{}19}} & \multicolumn{2}{c}{{\small{}2}} & \multicolumn{2}{c}{{\small{}            }}\tabularnewline
{\small{}Technology} & \multicolumn{2}{c}{{\small{}26}} & \multicolumn{2}{c}{{\small{}2}} & \multicolumn{2}{c}{} & \multicolumn{2}{c}{} & \multicolumn{2}{c}{{\small{}23}} & \multicolumn{2}{c}{{\small{}7}} & \multicolumn{2}{c}{{\small{}            }}\tabularnewline
{\small{}Law} & \multicolumn{2}{c}{{\small{}22}} & \multicolumn{2}{c}{{\small{}2}} & \multicolumn{2}{c}{} & \multicolumn{2}{c}{} & \multicolumn{2}{c}{{\small{}27}} & \multicolumn{2}{c}{{\small{}6}} & \multicolumn{2}{c}{{\small{}            }}\tabularnewline
{\small{}Medicine} & \multicolumn{2}{c}{{\small{}24}} & \multicolumn{2}{c}{{\small{}2}} & \multicolumn{2}{c}{} & \multicolumn{2}{c}{} & \multicolumn{2}{c}{{\small{}32}} & \multicolumn{2}{c}{{\small{}13}} & \multicolumn{2}{c}{{\small{}1}}\tabularnewline
 & \multicolumn{2}{c}{} & \multicolumn{2}{c}{} & \multicolumn{2}{c}{} & \multicolumn{2}{c}{} & \multicolumn{2}{c}{} & \multicolumn{2}{c}{} & \multicolumn{2}{c}{}\tabularnewline
\emph{\small{}Sum} & \multicolumn{2}{c}{\emph{\small{}242}} & \multicolumn{2}{c}{\emph{\small{}45}} & \multicolumn{2}{c}{\emph{\small{}6}} & \multicolumn{2}{c}{} & \multicolumn{2}{c}{\emph{\small{}206}} & \multicolumn{2}{c}{\emph{\small{}58}} & \multicolumn{2}{c}{\emph{\small{}9}}\tabularnewline
\bottomrule
\end{tabular*}{\small\par}

\emph{\footnotesize{}Note:}{\footnotesize{} Significant are first stages with $p<.05$, multi test-significant are first stages with $p<.05/512$ (i.e. significant with Bonferroni adjustment for multiple testing).}{\footnotesize\par}
\end{table}

\clearpage

\section{The Danish institutional setting and implications for empirical specification\label{sec:Danish-institutions}}

\paragraph*{Danish admission institutions}

For programs with restricted admission, student places are allocated through two quotas.

The majority of places are allocated through Quota 1 based on applicants\textquoteright{} GPA from high school, although some programs have additional specific requirements, e.g., a high-level maths course from high school. In our data period grades were awarded on a 10-point scale with integer values between 0 and 13 (omitting values 1, 2, 4 and 12). The high school GPA is based on grades in all the subjects on the student\textquoteright s study program, and it is recorded to 1 decimal place. All applicants with a GPA strictly above the Quota 1 threshold level are admitted provided they also meet any specific entry requirements. Because of the coarse GPA measure, there will often not be sufficient places for all applicants with a GPA exactly equal to the threshold, and in this case the oldest are typically admitted first. In our sample, about one third of the applicants with a GPA exactly equal to the threshold are not admitted to their preferred program. It is important to note that the minimum GPAs needed to be admitted (the GPA thresholds) are published after the student places are allocated and that the variation in thresholds over time is considerable, and thus applicants cannot predict the exact thresholds.

Most programs also have a standby (waiting) list and the GPA threshold for the standby list is typically a little lower than the Quota 1 threshold. On the application form, applicants can choose to apply for the standby list as well. If some of the applicants admitted under Quota 1 drop out before the course starts or in the very early days of the course, then their places are offered to applicants on the standby list. In any event, applicants admitted to the standby list are guaranteed a study place the following year. Applicants who are admitted to a standby list are not considered for any of the lower-ranked programs on their application.

Some of the available student places are reserved for admission via Quota 2 where applicants are assessed based on other criteria besides their GPA from upper secondary school, e.g. specific admission tests, admission interviews, or vocational qualifications. The institutions have considerable discretion in deciding these criteria and the relative size of Quotas 1 and 2 for their programs. Quota 2 applicants are automatically considered for Quota 1, so if they meet the Quota 1 GPA criterion (and any additional specific criteria), they are admitted via Quota 1.

We do not observe whether students apply for Quota 2 or standby, but for those who are offered a program, we know whether admission was via Quota 1, standby or Quota 2.

\paragraph{}

\paragraph{Specification of instrument}

The coarse application score variable and the standby list option are handled by using the following instrument:
\begin{align*}
z & =[GPA>Q1]+[Std\leq GPA\leq Q1]\times\text{Offer}
\end{align*}
where $Q1$ and $Stb$ are the Quota 1 and standby admission thresholds of the preferred field of the applicant, respectively. The first term on the right-hand side captures the effect of having an application score strictly above the Quota 1 threshold to the preferred field k. When $GPA=Q1$ or $GPA=Std$ the offer is a random draw (conditional on age which is controlled for in the analysis, and for the standby threshold also conditional on having applied for standby), and $[GPA\in\{Q1,Std\}]\times\text{Offer}$ is therefore a valid instrument. When $[Std<GPA<Q1]$ those who applied to the standby list will receive an offer.\clearpage

\section{Extra results}

\begin{figure}[H]
\begin{centering}
\subfloat[Number in population]{\includegraphics[width=0.5\textwidth]{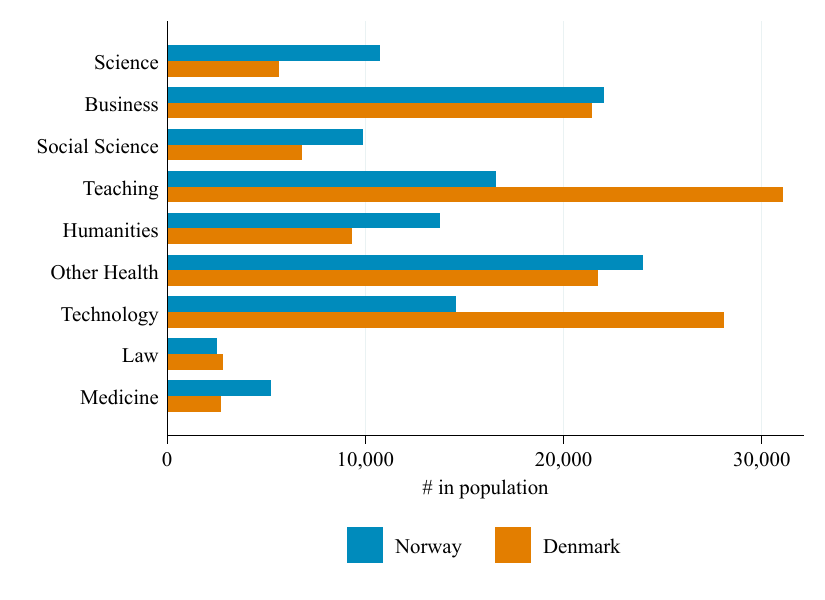}

}\subfloat[Share in population]{\includegraphics[width=0.5\textwidth]{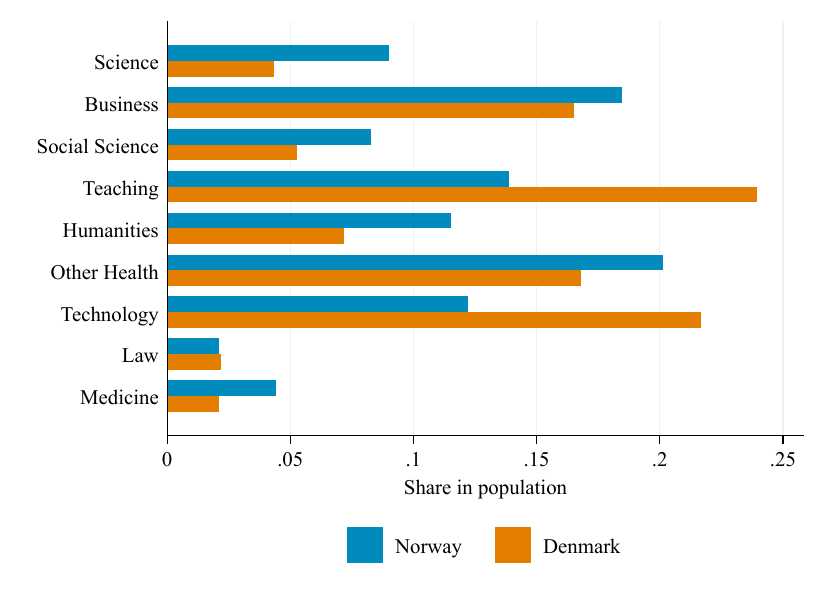}

}
\par\end{centering}
\noindent\begin{minipage}[t]{1\columnwidth}%
\begin{singlespace}
{\footnotesize{}Note: Figures show the distribution of completed fields in the population for those born 1979-1983 (Norway) or 1975-81 (Denmark) that have completed any higher education. Completed field is measured at age 28.}
\end{singlespace}
\end{minipage}

\caption{Distribution of completed field in the population\label{fig:Distribution-of-completed}}
\end{figure}

\begin{figure}[H]
\begin{centering}
\subfloat[GPA]{\includegraphics[width=0.5\textwidth]{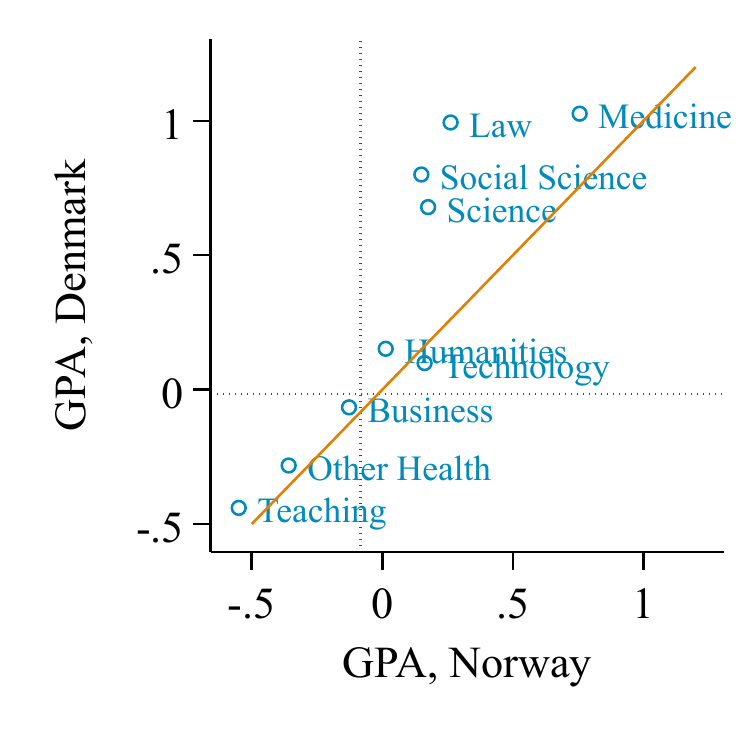}

}\subfloat[Earnings]{\includegraphics[width=0.5\textwidth]{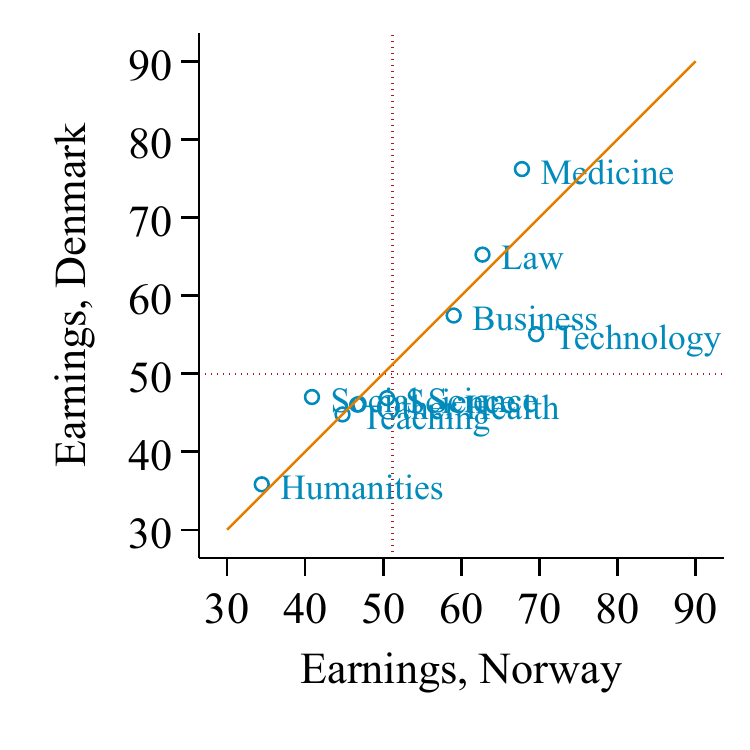}

}
\par\end{centering}
\noindent\begin{minipage}[t]{1\columnwidth}%
\begin{singlespace}
{\footnotesize{}Note: Figure shows population-weighted average GPA of applicants and population earnings, based on the populations in Figure \ref{fig:Distribution-of-completed}. GPA is demeaned within country, but not otherwise standardized. Earnings are measured at age 28.}
\end{singlespace}
\end{minipage}

\caption{GPA and earnings in Norway and Denmark, average by country and field\label{fig:GPA-and-earnings-1}}
\end{figure}

\begin{figure}
\subfloat[Numbers of applicants shifted by preferred field]{\includegraphics[width=0.5\textwidth]{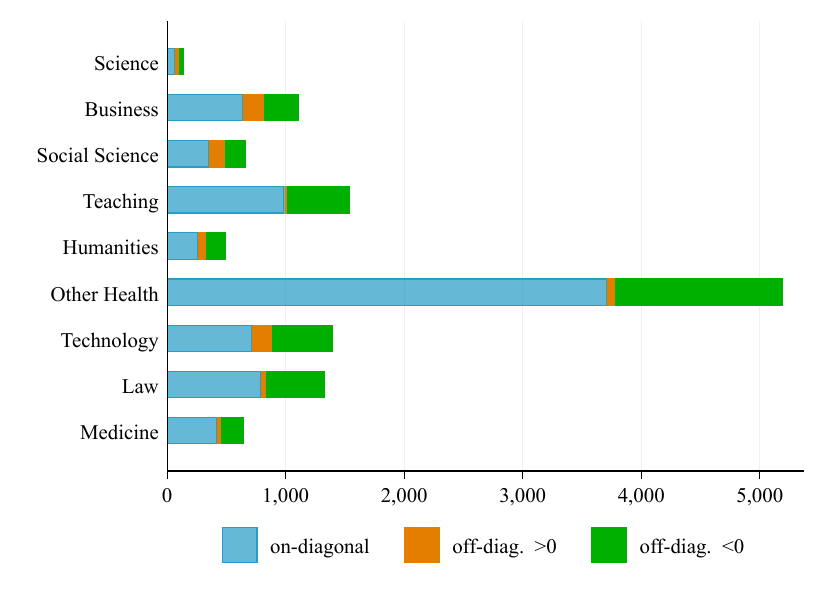}

}\subfloat[Share of applicants shifted by preferred field]{\includegraphics[width=0.5\textwidth]{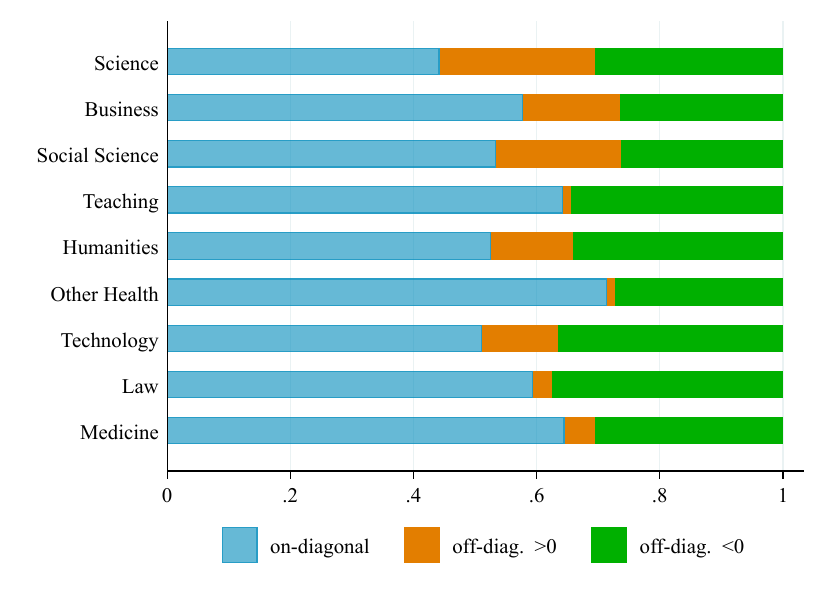}

}

\subfloat[Numbers of applicants shifted by next-best field]{\includegraphics[width=0.5\textwidth]{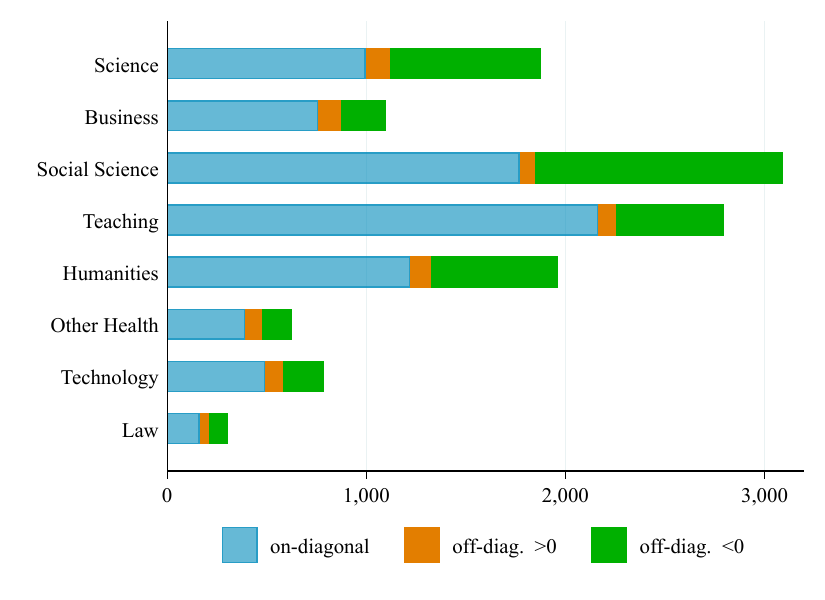}

}\subfloat[Share of applicants shifted by next-best field]{\includegraphics[width=0.5\textwidth]{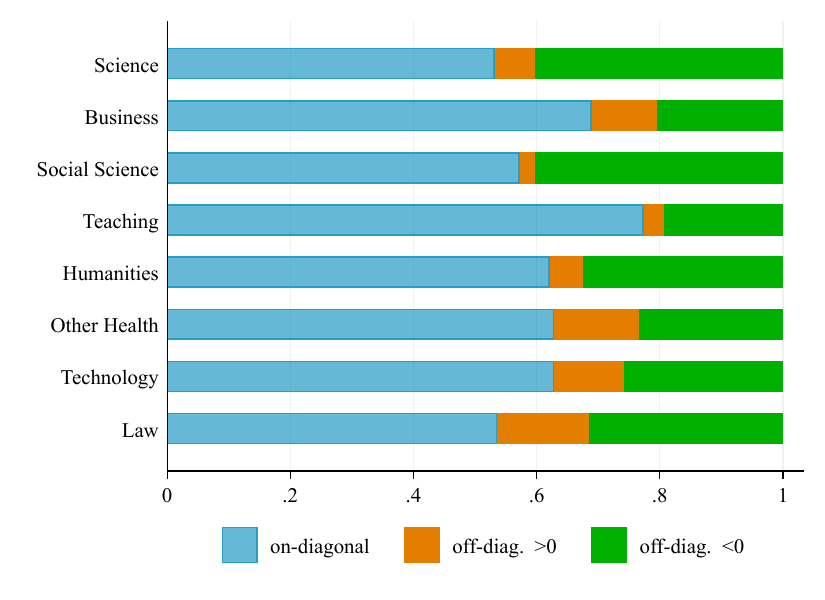}

}

\caption{Numbers and shares of applicants shifted by the instrument by violating irrelevance or not and by preferred/stated next-best field, Norway\label{fig:Numbers-and-shares}}
\end{figure}

\begin{figure}
\subfloat[Numbers of applicants shifted by preferred field]{\begin{centering}
\includegraphics[width=0.5\textwidth]{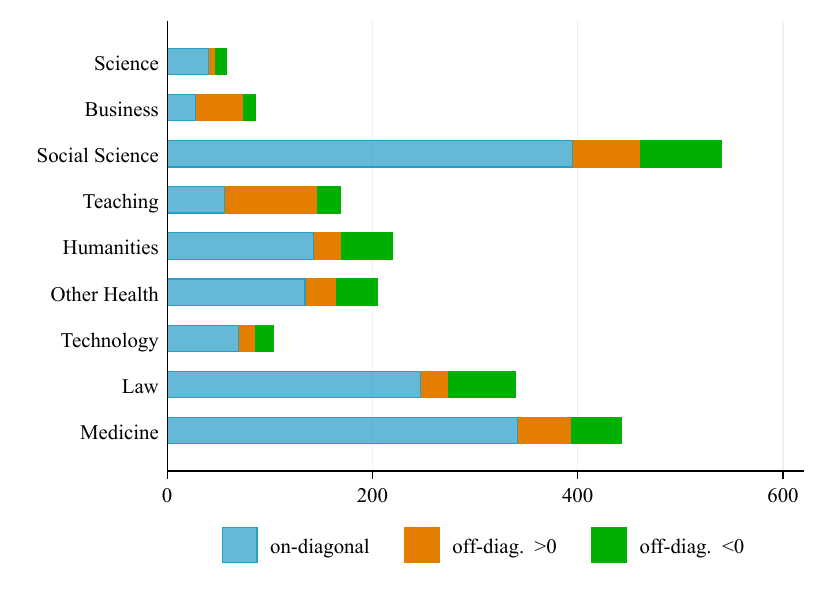}
\par\end{centering}
}\subfloat[Share of applicants shifted by preferred field]{\begin{centering}
\includegraphics[width=0.5\textwidth]{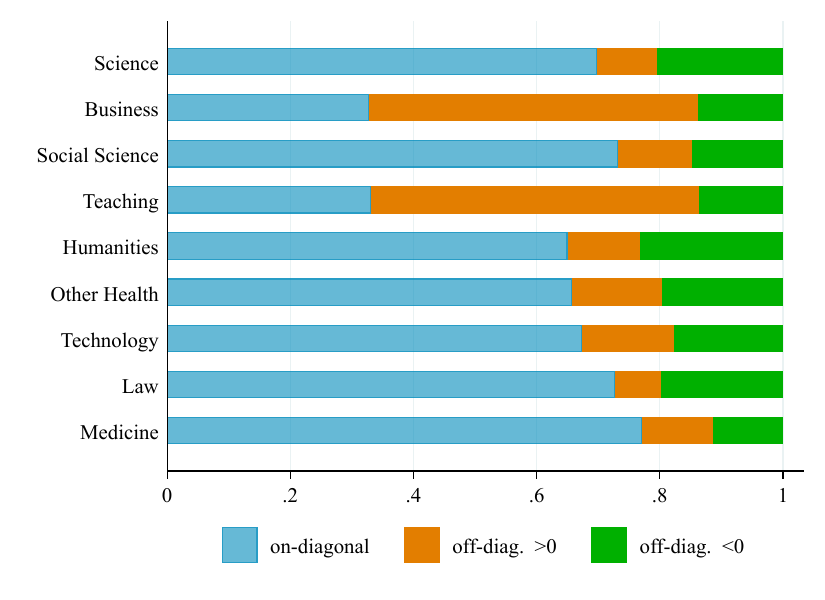}
\par\end{centering}
}

\subfloat[Numbers of applicants shifted by next-best field]{\begin{centering}
\includegraphics[width=0.5\textwidth]{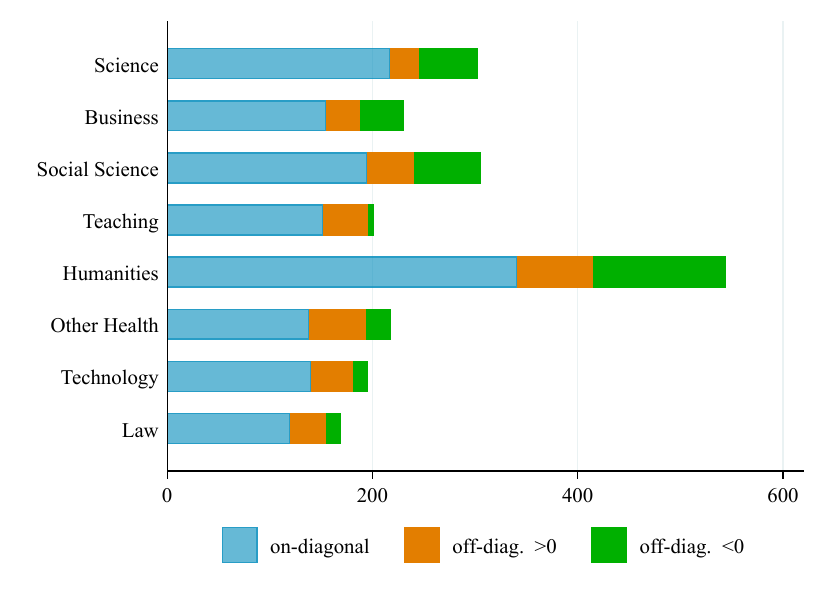}
\par\end{centering}
}\subfloat[Share of applicants shifted by next-best field]{\begin{centering}
\includegraphics[width=0.5\textwidth]{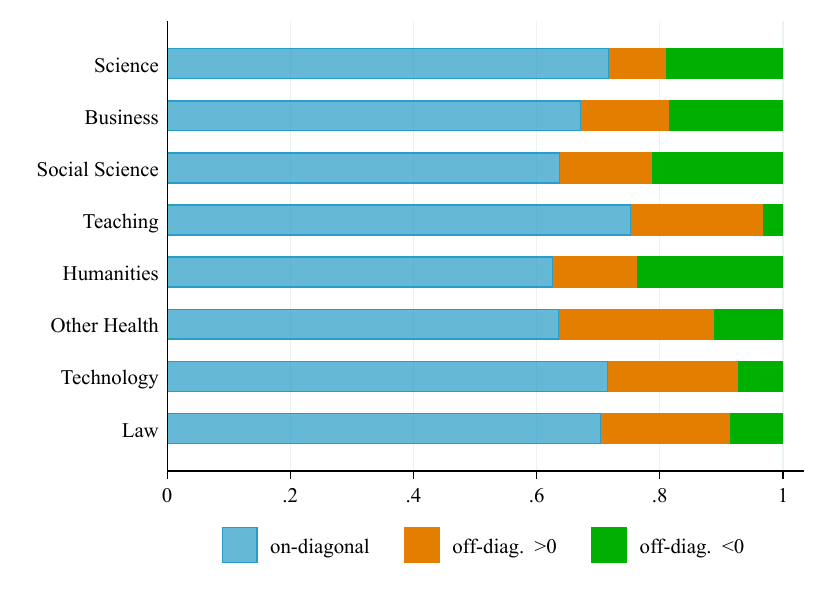}
\par\end{centering}
}

\caption{Numbers and shares of applicants shifted by the instrument by violating irrelevance or not and by preferred/stated next-best field, Denmark\label{fig:Numbers-and-shares-1}}
\end{figure}

\begin{figure}
\begin{centering}
\includegraphics[width=0.5\textwidth]{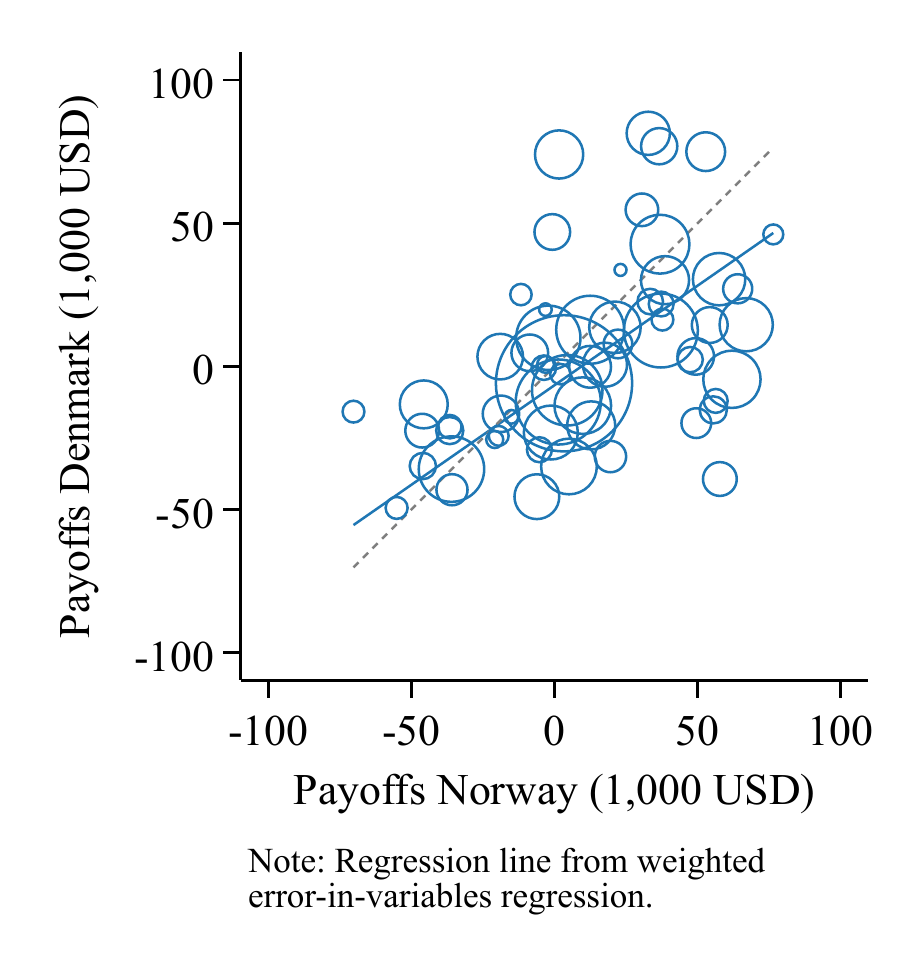}
\par\end{centering}
\caption{Payoffs in Norway and Denmark 13 years after applying, all completed and next-best fields\label{fig:Payoffs-in-Norway-1}}
\end{figure}

\begin{sidewaystable}
\caption{Payoff estimates Norway (\$1,000), 8 years after application\label{tab:Payoffs No t8}}

{\small{}}%
\begin{tabular*}{1\textwidth}{@{\extracolsep{\fill}}lcccccccc}
\toprule
 & \multicolumn{8}{c}{{\small{}Next-best field}}\tabularnewline
\cmidrule{2-9} \cmidrule{3-9} \cmidrule{4-9} \cmidrule{5-9} \cmidrule{6-9} \cmidrule{7-9} \cmidrule{8-9} \cmidrule{9-9}
 & {\small{}Science} & {\small{}Business} & {\small{}Social Science} & {\small{}Teaching} & {\small{}Humanities} & {\small{}Other Health} & {\small{}Technology} & {\small{}Law}\tabularnewline
\midrule
{\small{}Completed field} &  &  &  &  &  &  &  & \tabularnewline
{\small{}Science} &  & {\small{}-22.9} & {\small{}27.0} & {\small{}4.8} & {\small{}8.4} & {\small{}4.0} & {\small{}-17.3} & {\small{}145.0}\tabularnewline
 &  & {\small{}(17.9)} & {\small{}(25.8)} & {\small{}(20.1)} & {\small{}(15.6)} & {\small{}(15.5)} & {\small{}(16.3)} & {\small{}(167.1)}\tabularnewline
{\small{}Business} & {\small{}70.8} &  & {\small{}59.3} & {\small{}26.3} & {\small{}41.6} & {\small{}26.9} & {\small{}-0.0} & {\small{}24.0}\tabularnewline
 & {\small{}(10.5)} &  & {\small{}(9.5)} & {\small{}(6.6)} & {\small{}(7.9)} & {\small{}(9.0)} & {\small{}(6.1)} & {\small{}(32.5)}\tabularnewline
{\small{}Social Science} & {\small{}63.9} & {\small{}-33.6} &  & {\small{}5.0} & {\small{}13.4} & {\small{}-11.1} & {\small{}-45.4} & {\small{}-48.5}\tabularnewline
 & {\small{}(18.6)} & {\small{}(10.3)} &  & {\small{}(10.5)} & {\small{}(6.9)} & {\small{}(12.4)} & {\small{}(23.3)} & {\small{}(53.4)}\tabularnewline
{\small{}Teaching} & {\small{}47.3} & {\small{}-21.3} & {\small{}32.3} &  & {\small{}19.3} & {\small{}0.4} & {\small{}-29.6} & {\small{}7.8}\tabularnewline
 & {\small{}(9.7)} & {\small{}(5.5)} & {\small{}(6.9)} &  & {\small{}(4.4)} & {\small{}(4.8)} & {\small{}(6.7)} & {\small{}(46.6)}\tabularnewline
{\small{}Humanities} & {\small{}8.2} & {\small{}-36.2} & {\small{}24.7} & {\small{}-5.1} &  & {\small{}-22.5} & {\small{}-37.7} & {\small{}-87.7}\tabularnewline
 & {\small{}(15.2)} & {\small{}(8.5)} & {\small{}(9.9)} & {\small{}(7.9)} &  & {\small{}(11.3)} & {\small{}(7.8)} & {\small{}(87.6)}\tabularnewline
{\small{}Other Health} & {\small{}47.5} & {\small{}-17.1} & {\small{}32.2} & {\small{}6.3} & {\small{}17.0} &  & {\small{}-26.0} & {\small{}-28.2}\tabularnewline
 & {\small{}(9.5)} & {\small{}(3.8)} & {\small{}(6.8)} & {\small{}(2.3)} & {\small{}(4.8)} &  & {\small{}(4.7)} & {\small{}(39.9)}\tabularnewline
{\small{}Technology} & {\small{}87.4} & {\small{}-5.6} & {\small{}66.1} & {\small{}35.8} & {\small{}57.9} & {\small{}27.8} &  & {\small{}-20.5}\tabularnewline
 & {\small{}(11.9)} & {\small{}(7.1)} & {\small{}(9.5)} & {\small{}(8.0)} & {\small{}(9.0)} & {\small{}(8.3)} &  & {\small{}(57.0)}\tabularnewline
{\small{}Law} & {\small{}52.2} & {\small{}-11.9} & {\small{}53.7} & {\small{}28.8} & {\small{}40.3} & {\small{}22.3} & {\small{}-18.6} & \tabularnewline
 & {\small{}(11.0)} & {\small{}(7.9)} & {\small{}(7.5)} & {\small{}(10.6)} & {\small{}(5.9)} & {\small{}(10.7)} & {\small{}(9.6)} & \tabularnewline
{\small{}Medicine} & {\small{}97.9} & {\small{}20.4} & {\small{}76.6} & {\small{}55.9} & {\small{}74.3} & {\small{}42.1} & {\small{}20.5} & {\small{}42.1}\tabularnewline
 & {\small{}(11.3)} & {\small{}(9.1)} & {\small{}(9.4)} & {\small{}(8.5)} & {\small{}(8.3)} & {\small{}(6.6)} & {\small{}(5.8)} & {\small{}(28.8)}\tabularnewline
 &  &  &  &  &  &  &  & \tabularnewline
{\small{}N} & {\small{}5,320} & {\small{}4,477} & {\small{}11,250} & {\small{}11,254} & {\small{}8,539} & {\small{}3,371} & {\small{}3,612} & {\small{}1,403}\tabularnewline
\bottomrule
\end{tabular*}{\small\par}

\medskip{}

\begin{singlespace}
\textbf{\footnotesize{}Note:}{\footnotesize{} 2SLS estimation of equations (\ref{eq:second stage-1})--\ref{eq:firststages-1} results in a matrix of payoffs to field $j$ as compared to $k$ for those who prefer $j$ and have $k$ as next-best field. Each cell is a 2SLS estimate (with standard errors in parenthesis) of the payoff to a given pair of preferred field and next-best field. The rows represent completed fields and the columns represent next-best fields.}{\footnotesize\par}
\end{singlespace}
\end{sidewaystable}

\begin{sidewaystable}

\caption{Payoff estimates Denmark (\$1,000), 8 years after application\label{tab:Payoffs Dk t8}}

{\small{}}%
\begin{tabular*}{1\textwidth}{@{\extracolsep{\fill}}lcccccccc}
\toprule
 & \multicolumn{8}{c}{{\small{}Next-best field}}\tabularnewline
\cmidrule{2-9} \cmidrule{3-9} \cmidrule{4-9} \cmidrule{5-9} \cmidrule{6-9} \cmidrule{7-9} \cmidrule{8-9} \cmidrule{9-9}
 & {\small{}Science} & {\small{}Business} & {\small{}Social Science} & {\small{}Teaching} & {\small{}Humanities} & {\small{}Other Health} & {\small{}Technology} & {\small{}Law}\tabularnewline
\midrule
{\small{}Completed field} &  &  &  &  &  &  &  & \tabularnewline
{\small{}Science} &  & {\small{}-10.8} & {\small{}-10.4} & {\small{}-4.5} & {\small{}0.5} & {\small{}5.6} & {\small{}-23.5} & {\small{}-5.8}\tabularnewline
 &  & {\small{}(9.6)} & {\small{}(10.0)} & {\small{}(16.8)} & {\small{}(10.0)} & {\small{}(7.9)} & {\small{}(9.4)} & {\small{}(6.7)}\tabularnewline
{\small{}Business} & {\small{}61.7} &  & {\small{}18.3} & {\small{}31.7} & {\small{}54.7} & {\small{}25.5} & {\small{}33.6} & {\small{}16.6}\tabularnewline
 & {\small{}(22.9)} &  & {\small{}(17.4)} & {\small{}(26.2)} & {\small{}(20.5)} & {\small{}(29.6)} & {\small{}(23.3)} & {\small{}(18.6)}\tabularnewline
{\small{}Social Science} & {\small{}-15.2} & {\small{}-23.3} &  & {\small{}-4.3} & {\small{}15.5} & {\small{}-4.4} & {\small{}-8.3} & {\small{}-8.0}\tabularnewline
 & {\small{}(10.2)} & {\small{}(8.1)} &  & {\small{}(4.3)} & {\small{}(5.0)} & {\small{}(4.4)} & {\small{}(8.6)} & {\small{}(4.0)}\tabularnewline
{\small{}Teaching} & {\small{}-9.0} & {\small{}-35.5} & {\small{}-16.9} &  & {\small{}-0.5} & {\small{}-21.2} & {\small{}-23.0} & {\small{}-16.9}\tabularnewline
 & {\small{}(10.3)} & {\small{}(11.1)} & {\small{}(8.1)} &  & {\small{}(8.8)} & {\small{}(8.3)} & {\small{}(8.4)} & {\small{}(9.5)}\tabularnewline
{\small{}Humanities} & {\small{}-9.6} & {\small{}-32.2} & {\small{}-31.3} & {\small{}-9.5} &  & {\small{}-24.1} & {\small{}-28.9} & {\small{}-31.4}\tabularnewline
 & {\small{}(8.3)} & {\small{}(7.7)} & {\small{}(10.0)} & {\small{}(5.8)} &  & {\small{}(10.7)} & {\small{}(10.9)} & {\small{}(9.8)}\tabularnewline
{\small{}Other Health} & {\small{}15.0} & {\small{}-24.7} & {\small{}-3.7} & {\small{}-3.6} & {\small{}16.4} &  & {\small{}-18.6} & {\small{}-20.2}\tabularnewline
 & {\small{}(10.6)} & {\small{}(9.3)} & {\small{}(7.2)} & {\small{}(2.9)} & {\small{}(5.3)} &  & {\small{}(8.9)} & {\small{}(6.8)}\tabularnewline
{\small{}Technology} & {\small{}7.1} & {\small{}-17.4} & {\small{}-13.7} & {\small{}-17.2} & {\small{}9.9} & {\small{}-7.8} &  & {\small{}-4.9}\tabularnewline
 & {\small{}(7.9)} & {\small{}(9.5)} & {\small{}(9.6)} & {\small{}(19.1)} & {\small{}(7.9)} & {\small{}(8.8)} &  & {\small{}(16.6)}\tabularnewline
{\small{}Law} & {\small{}22.4} & {\small{}-4.1} & {\small{}12.4} & {\small{}1.5} & {\small{}31.9} & {\small{}16.6} & {\small{}9.3} & \tabularnewline
 & {\small{}(6.2)} & {\small{}(5.2)} & {\small{}(6.3)} & {\small{}(8.8)} & {\small{}(5.6)} & {\small{}(4.7)} & {\small{}(8.6)} & \tabularnewline
{\small{}Medicine} & {\small{}21.7} & {\small{}5.4} & {\small{}7.1} & {\small{}29.4} & {\small{}28.9} & {\small{}11.9} & {\small{}13.7} & {\small{}14.1}\tabularnewline
 & {\small{}(5.1)} & {\small{}(8.2)} & {\small{}(6.9)} & {\small{}(8.0)} & {\small{}(6.1)} & {\small{}(4.3)} & {\small{}(4.5)} & {\small{}(5.9)}\tabularnewline
 &  &  &  &  &  &  &  & \tabularnewline
{\small{}Total} & {\small{}1,254} & {\small{}1,076} & {\small{}1,427} & {\small{}1,359} & {\small{}2,761} & {\small{}1,521} & {\small{}738} & {\small{}432}\tabularnewline
\bottomrule
\end{tabular*}{\small\par}

\medskip{}

\begin{singlespace}
\textbf{\footnotesize{}Note:}{\footnotesize{} 2SLS estimation of equations (\ref{eq:second stage-1})--\ref{eq:firststages-1} results in a matrix of payoffs to field $j$ as compared to $k$ for those who prefer $j$ and have $k$ as next-best field. Each cell is a 2SLS estimate (with standard errors in parenthesis) of the payoff to a given pair of preferred field and next-best field. The rows represent completed fields and the columns represent next-best fields.}{\footnotesize\par}
\end{singlespace}
\end{sidewaystable}

\begin{sidewaystable}
\caption{Payoff estimates Norway (\$1,000), 13 years after application\label{tab:Payoffs No t13}}

{\small{}}%
\begin{tabular*}{1\textwidth}{@{\extracolsep{\fill}}lcccccccc}
\toprule
 & \multicolumn{8}{c}{{\small{}Next-best field}}\tabularnewline
\cmidrule{2-9} \cmidrule{3-9} \cmidrule{4-9} \cmidrule{5-9} \cmidrule{6-9} \cmidrule{7-9} \cmidrule{8-9} \cmidrule{9-9}
 & {\small{}Science} & {\small{}Business} & {\small{}Social Science} & {\small{}Teaching} & {\small{}Humanities} & {\small{}Other Health} & {\small{}Technology} & {\small{}Law}\tabularnewline
\midrule
{\small{}Completed field} &  &  &  &  &  &  &  & \tabularnewline
{\small{}Science} &  & {\small{}-15.1} & {\small{}-3.2} & {\small{}23.1} & {\small{}-3.1} & {\small{}2.1} & {\small{}-19.5} & {\small{}-10.3}\tabularnewline
 &  & {\small{}(22.0)} & {\small{}(29.6)} & {\small{}(27.9)} & {\small{}(20.6)} & {\small{}(22.8)} & {\small{}(19.7)} & {\small{}(185.7)}\tabularnewline
{\small{}Business} & {\small{}52.8} &  & {\small{}36.9} & {\small{}36.6} & {\small{}32.8} & {\small{}30.5} & {\small{}1.6} & {\small{}-0.8}\tabularnewline
 & {\small{}(18.5)} &  & {\small{}(16.6)} & {\small{}(12.0)} & {\small{}(14.8)} & {\small{}(13.2)} & {\small{}(9.9)} & {\small{}(63.1)}\tabularnewline
{\small{}Social Science} & {\small{}55.5} & {\small{}-46.2} &  & {\small{}17.6} & {\small{}12.5} & {\small{}-8.6} & {\small{}-70.2} & {\small{}-18.7}\tabularnewline
 & {\small{}(26.8)} & {\small{}(15.9)} &  & {\small{}(15.6)} & {\small{}(10.2)} & {\small{}(19.5)} & {\small{}(29.2)} & {\small{}(71.3)}\tabularnewline
{\small{}Teaching} & {\small{}19.5} & {\small{}-35.8} & {\small{}5.1} &  & {\small{}4.4} & {\small{}-1.3} & {\small{}-46.0} & {\small{}9.8}\tabularnewline
 & {\small{}(16.0)} & {\small{}(8.9)} & {\small{}(10.9)} &  & {\small{}(6.4)} & {\small{}(7.6)} & {\small{}(10.4)} & {\small{}(189.2)}\tabularnewline
{\small{}Humanities} & {\small{}-5.3} & {\small{}-36.6} & {\small{}12.8} & {\small{}12.4} &  & {\small{}-3.7} & {\small{}-36.6} & {\small{}-182.4}\tabularnewline
 & {\small{}(21.8)} & {\small{}(12.9)} & {\small{}(15.6)} & {\small{}(10.9)} &  & {\small{}(12.5)} & {\small{}(10.8)} & {\small{}(495.3)}\tabularnewline
{\small{}Other Health} & {\small{}9.9} & {\small{}-36.0} & {\small{}1.3} & {\small{}3.3} & {\small{}-2.2} &  & {\small{}-45.7} & {\small{}-55.2}\tabularnewline
 & {\small{}(15.8)} & {\small{}(4.9)} & {\small{}(11.3)} & {\small{}(3.3)} & {\small{}(7.9)} &  & {\small{}(6.3)} & {\small{}(56.7)}\tabularnewline
{\small{}Technology} & {\small{}62.0} & {\small{}-6.2} & {\small{}57.8} & {\small{}49.5} & {\small{}49.3} & {\small{}56.3} &  & {\small{}-20.9}\tabularnewline
 & {\small{}(19.1)} & {\small{}(9.2)} & {\small{}(16.0)} & {\small{}(11.4)} & {\small{}(12.1)} & {\small{}(19.2)} &  & {\small{}(49.2)}\tabularnewline
{\small{}Law} & {\small{}33.4} & {\small{}-19.0} & {\small{}37.1} & {\small{}37.3} & {\small{}38.6} & {\small{}37.7} & {\small{}-11.7} & \tabularnewline
 & {\small{}(16.4)} & {\small{}(10.6)} & {\small{}(12.8)} & {\small{}(9.5)} & {\small{}(9.4)} & {\small{}(15.4)} & {\small{}(11.9)} & \tabularnewline
{\small{}Medicine} & {\small{}67.0} & {\small{}22.2} & {\small{}54.2} & {\small{}76.4} & {\small{}64.0} & {\small{}57.5} & {\small{}21.1} & {\small{}47.4}\tabularnewline
 & {\small{}(20.1)} & {\small{}(20.6)} & {\small{}(17.3)} & {\small{}(19.0)} & {\small{}(12.9)} & {\small{}(12.3)} & {\small{}(8.7)} & {\small{}(67.8)}\tabularnewline
 &  &  &  &  &  &  &  & \tabularnewline
{\small{}N} & {\small{}5,005} & {\small{}4,402} & {\small{}11,010} & {\small{}11,120} & {\small{}8,293} & {\small{}3,284} & {\small{}3,480} & {\small{}1,362}\tabularnewline
\bottomrule
\end{tabular*}{\small\par}

\medskip{}

\begin{singlespace}
\textbf{\footnotesize{}Note:}{\footnotesize{} 2SLS estimation of equations (\ref{eq:second stage-1})--\ref{eq:firststages-1} results in a matrix of payoffs to field $j$ as compared to $k$ for those who prefer $j$ and have $k$ as next-best field. Each cell is a 2SLS estimate (with standard errors in parenthesis) of the payoff to a given pair of preferred field and next-best field. The rows represent completed fields and the columns represent next-best fields.}{\footnotesize\par}
\end{singlespace}
\end{sidewaystable}

\begin{sidewaystable}

\caption{Payoff estimates Denmark (\$1,000), 13 years after application\label{tab:Payoffs Dk t13}}

{\small{}}%
\begin{tabular*}{1\textwidth}{@{\extracolsep{\fill}}lcccccccc}
\toprule
 & \multicolumn{8}{c}{{\small{}Next-best field}}\tabularnewline
\cmidrule{2-9} \cmidrule{3-9} \cmidrule{4-9} \cmidrule{5-9} \cmidrule{6-9} \cmidrule{7-9} \cmidrule{8-9} \cmidrule{9-9}
 & {\small{}Science} & {\small{}Business} & {\small{}Social Science} & {\small{}Teaching} & {\small{}Humanities} & {\small{}Other Health} & {\small{}Technology} & {\small{}Law}\tabularnewline
\midrule
{\small{}Completed field} &  &  &  &  &  &  &  & \tabularnewline
{\small{}Science} &  & {\small{}-17.6} & {\small{}19.8} & {\small{}33.7} & {\small{}0.7} & {\small{}-2.7} & {\small{}-24.3} & {\small{}-24.3}\tabularnewline
 &  & {\small{}(13.5)} & {\small{}(23.4)} & {\small{}(15.9)} & {\small{}(10.1)} & {\small{}(9.2)} & {\small{}(12.1)} & {\small{}(11.9)}\tabularnewline
{\small{}Business} & {\small{}75.0} &  & {\small{}42.7} & {\small{}76.9} & {\small{}81.4} & {\small{}54.7} & {\small{}74.1} & {\small{}47.0}\tabularnewline
 & {\small{}(27.4)} &  & {\small{}(24.1)} & {\small{}(36.6)} & {\small{}(29.9)} & {\small{}(32.7)} & {\small{}(28.6)} & {\small{}(21.9)}\tabularnewline
{\small{}Social Science} & {\small{}-15.2} & {\small{}-22.4} &  & {\small{}0.6} & {\small{}12.8} & {\small{}4.7} & {\small{}-15.8} & {\small{}-16.5}\tabularnewline
 & {\small{}(11.6)} & {\small{}(12.8)} &  & {\small{}(5.0)} & {\small{}(5.6)} & {\small{}(5.9)} & {\small{}(9.8)} & {\small{}(5.4)}\tabularnewline
{\small{}Teaching} & {\small{}-31.5} & {\small{}-43.1} & {\small{}-35.0} &  & {\small{}-8.3} & {\small{}-23.1} & {\small{}-34.8} & {\small{}-40.6}\tabularnewline
 & {\small{}(12.4)} & {\small{}(13.3)} & {\small{}(10.5)} &  & {\small{}(10.9)} & {\small{}(9.8)} & {\small{}(11.9)} & {\small{}(11.4)}\tabularnewline
{\small{}Humanities} & {\small{}-29.1} & {\small{}-21.0} & {\small{}-20.6} & {\small{}-0.1} &  & {\small{}-0.4} & {\small{}-22.4} & {\small{}-31.2}\tabularnewline
 & {\small{}(9.5)} & {\small{}(10.1)} & {\small{}(12.5)} & {\small{}(5.9)} &  & {\small{}(11.3)} & {\small{}(12.0)} & {\small{}(10.3)}\tabularnewline
{\small{}Other Health} & {\small{}-13.7} & {\small{}-35.9} & {\small{}-12.4} & {\small{}-5.9} & {\small{}10.0} &  & {\small{}-13.3} & {\small{}-49.4}\tabularnewline
 & {\small{}(11.9)} & {\small{}(14.8)} & {\small{}(10.9)} & {\small{}(3.8)} & {\small{}(6.7)} &  & {\small{}(10.7)} & {\small{}(14.1)}\tabularnewline
{\small{}Technology} & {\small{}-4.5} & {\small{}-45.5} & {\small{}-39.3} & {\small{}-19.8} & {\small{}3.5} & {\small{}-12.0} &  & {\small{}-25.5}\tabularnewline
 & {\small{}(10.6)} & {\small{}(13.0)} & {\small{}(12.7)} & {\small{}(19.8)} & {\small{}(9.2)} & {\small{}(15.8)} &  & {\small{}(17.6)}\tabularnewline
{\small{}Law} & {\small{}22.7} & {\small{}3.4} & {\small{}12.5} & {\small{}21.8} & {\small{}30.1} & {\small{}16.3} & {\small{}25.1} & \tabularnewline
 & {\small{}(8.1)} & {\small{}(7.6)} & {\small{}(8.9)} & {\small{}(8.7)} & {\small{}(7.2)} & {\small{}(8.7)} & {\small{}(12.1)} & \tabularnewline
{\small{}Medicine} & {\small{}14.6} & {\small{}7.8} & {\small{}14.5} & {\small{}46.1} & {\small{}27.1} & {\small{}30.5} & {\small{}13.7} & {\small{}2.4}\tabularnewline
 & {\small{}(6.5)} & {\small{}(11.2)} & {\small{}(9.0)} & {\small{}(15.6)} & {\small{}(6.7)} & {\small{}(5.1)} & {\small{}(5.5)} & {\small{}(7.2)}\tabularnewline
 &  &  &  &  &  &  &  & \tabularnewline
{\small{}N} & {\small{}1,477} & {\small{}1,197} & {\small{}1,701} & {\small{}1,493} & {\small{}3,324} & {\small{}1,667} & {\small{}824} & {\small{}520}\tabularnewline
\bottomrule
\end{tabular*}{\small\par}

\medskip{}

\begin{singlespace}
\textbf{\footnotesize{}Note:}{\footnotesize{} 2SLS estimation of equations (\ref{eq:second stage-1})--\ref{eq:firststages-1} results in a matrix of payoffs to field $j$ as compared to $k$ for those who prefer $j$ and have $k$ as next-best field. Each cell is a 2SLS estimate (with standard errors in parenthesis) of the payoff to a given pair of preferred field and next-best field. The rows represent completed fields and the columns represent next-best fields.}{\footnotesize\par}
\end{singlespace}
\end{sidewaystable}

\end{document}